\newcommand{\poly}{\mathrm{poly}}
\newcommand{\dsl}[0]{\llbracket}
\newcommand{\dsr}[0]{\rrbracket}
\newcommand{\prlsection}[1]{{\em {#1}.---}}
\newcommand{\negl}{\operatorname{negl}}
\renewcommand{\eqref}[1]{Eq.~(\ref{#1})}
\DeclareMathAlphabet{\mathcal}{OMS}{cmsy}{m}{n}
\definecolor{THc}{rgb}{0.9,0.3,0.2}
\newcommand{\Eset}[1]{\underset{#1}{\mathbb{E}}}
\newcommand{\SMLong}{Appendix}
 \definecolor{rahulcolor}{rgb}{0.5,0, 0.5}
\definecolor{nbcolor}{rgb}{0.2, 0.5, 0.5}
\definecolor{KB}{rgb}{0.4,0.3,0.9}
\definecolor{daxcolor}{rgb}{0.913,0.188,0.478}
\newtheorem{theorem}{Theorem}%
\newtheorem{lemma}{Lemma}
\newtheorem{definition}{Definition}
\newtheorem{proposition}{Proposition}
\newtheorem{corollary}{Corollary}
\newcommand{\ihpc}{Institute of High Performance Computing (IHPC), Agency for Science, Technology and Research (A*STAR), 1 Fusionopolis Way, \#16-16 Connexis, Singapore 138632, Republic of Singapore}
\newcommand{\qinc}{Quantum Innovation Centre (Q.InC), Agency for Science, Technology and Research (A*STAR), 2 Fusionopolis Way, Innovis \#08-03, Singapore 138634, Republic of Singapore}
\newcommand{\sutd}{Science, Mathematics and Technology Cluster, Singapore University of Technology and Design, 8 Somapah Road, Singapore 487372, Republic of Singapore\looseness=-1}
\newcommand{\warwick}{Department of Computer Science, University of Warwick, Coventry, UK}
\begin{document}

\title{Quantum Error Correction in Adversarial Regimes}

\author{Rahul Arvind}
\email{rahul\_arvind@utexas.edu}
\thanks{equal contribution}
\affiliation{%
Department of Physics, University of Texas at Austin, Austin, TX 78712, USA
}%

\affiliation{\ihpc}
\author{Nikhil Bansal} 
\email{nikhilbansaliiser@gmail.com}
\thanks{equal contribution}
\affiliation{\ihpc}
\affiliation{\warwick}

\author{Dax Enshan Koh\,\orcidlink{0000-0002-8968-591X}
}
\email{dax\_koh@a-star.edu.sg}
\affiliation{\ihpc}
\affiliation{\qinc}
\affiliation{\sutd}
\author{Tobias Haug\,\orcidlink{0000-0003-2707-9962}}
\email{tobias.haug@u.nus.edu}
\affiliation{Quantum Research Center, Technology Innovation Institute, Abu Dhabi, UAE}
\author{Kishor Bharti\,\orcidlink{0000-0002-7776-6608}}
\email{kishor.bharti1@gmail.com}
\affiliation{\qinc}
\affiliation{\ihpc}

\begin{abstract}
In adversarial settings, where attackers can deliberately and strategically corrupt quantum data, standard quantum error correction reaches its limits. It can only correct up to half the code distance and must output a unique answer. Quantum list decoding offers a promising alternative. By allowing the decoder to output a short list of possible errors, it becomes possible to tolerate far more errors, even under worst-case noise. But two fundamental questions remain: which quantum codes support list decoding, and can we design decoding schemes that are secure against efficient, computationally bounded adversaries? In this work, we answer both. 
To identify which codes are list-decodable, we provide a generalized version of the Knill-Laflamme conditions. Then, using tools from quantum cryptography, we build an unambiguous list decoding protocol based on pseudorandom unitaries. Our scheme is secure against any quantum polynomial-time adversary, even across multiple decoding attempts, in contrast to previous schemes. Our approach connects coding theory with complexity-based quantum cryptography, paving the way for secure quantum information processing in adversarial settings.
\end{abstract}

\maketitle

 \let\oldaddcontentsline\addcontentsline%
\renewcommand{\addcontentsline}[3]{}%

As nations race to build quantum computers, the question of protecting them under adversarial conditions becomes pressing, especially in the context of cyber warfare. In this scenario, the adversary is not nature. It is a well-informed opponent who knows the protocol, anticipates defenses, and carefully injects errors to undermine both quantum communication and quantum computation. This is not science fiction. It is a practical threat. 

At the heart of this challenge lies a foundational tension in coding theory. Shannon's probabilistic noise model~\cite{shannon1948mathematical} assumes that errors are random and independent. Under this view, error correction relies on statistical regularities and high-probability behavior. In contrast, Hamming~\cite{hamming1950} considered worst-case scenarios where an adversary corrupts the transmitted information deliberately, subject only to a limit on the number of errors. These two perspectives define fundamentally different worlds. One is forgiving and stochastic; the other is adversarial and combative.

Building on this tension, list decoding offers a powerful bridge between the forgiving world of random noise and the hostile world of adversarial corruption~\cite{tal2015list, silverberg2003applications, morillo2009security, mihaljevi2001fast}. Introduced independently by Elias~\cite{elias1957list} and Wozencraft~\cite{wozencraft1958list} in the 1950s, list decoding relaxes the requirement of unique decoding. Instead of demanding a single valid output, the decoder returns a short list of candidates, one of which is guaranteed to be correct. This small compromise leads to a large gain. List decoding allows correction of nearly twice as many errors compared to unique decoding, breaking through the traditional half-distance barrier~\cite{guruswami2002combinatorial, guruswami2003list}. Even in the presence of worst-case noise patterns, reliable recovery becomes possible. In typical scenarios, the decoder still outputs a single predicted error, but in adversarial cases, the list ensures the correct answer is captured. This paradigm has reshaped classical coding theory, allowing rates and error thresholds once thought incompatible~\cite{guruswami2004list, peikert2006cryptographic}.

Quantum list decoding, introduced in early works such as those by Kawachi and Yamakami~\cite{kawachi_2006} and later refined in the context of adversarial quantum channels by Leung and Smith~\cite{leungsmith}, extends these ideas to the quantum domain~\cite{bergamaschi2022approachingquantumsingletonbound,bergamaschi2024listdecodablequantumldpc}. But fundamental questions remain. First, when is quantum list decoding possible? That is, which quantum codes can be list decoded, and under what conditions? This structural question has not yet been answered. Second, in practice, adversaries are not all-powerful. They are computationally bounded. Can we use complexity-theoretic tools to design quantum error correction schemes that remain secure against efficient, bounded adversaries? Both questions, structural and cryptographic, have remained open. In this work, we address both.

We begin by providing a general characterization of quantum list codes by providing the analogue of Knill-Laflamme conditions~\cite{PhysRevA.55.900} for quantum list codes. Next, we explore the adversarial setting from a cryptographic lens. We present an unambiguous list decoding scheme using pseudorandom unitaries~\cite{ji2018pseudorandom, ma2025constructrandomunitaries} that remains secure against any quantum polynomial-time adversary, even under multiple communication rounds.  %
Our techniques bridge coding theory and complexity-based quantum cryptography, opening a path toward secure quantum information processing in hostile environments.

\prlsection{Quantum list codes} 
Let $\mathcal{P}_n$ be the Pauli group on $n$ qubits, and let $\mathcal{G}(Q)$ be an abelian subgroup of $\mathcal{P}_n$ generated by Pauli operators $\langle g_1, g_2, \ldots, g_r \rangle$. Then, a $\dsl n, k\dsr$ stabilizer code $Q$ with $k$ logical qubits is stabilized by $\mathcal{G}(Q)$ with $r = n-k$ number of generators, with its code space given by $Q = \{|\psi\rangle \in \mathbb{C}^{\otimes n}_2: g_i|\psi\rangle = |\psi\rangle,  \forall i \in \{1, 2, \ldots, r\}\}$. Further, we define the weight of an error operator $E \in \mathcal{P}_n$ as the number of qubits it acts on nontrivially. 

Given any stabilizer code, the normalizer group $\mathcal{N}(Q)$ is defined as the subgroup of $\mathcal{P}_n$ that commutes with all the elements of $\mathcal{G}(Q)$. The undetectable errors of the stabilizer code are exactly the elements in $\mathcal{N}(Q) - \mathcal{G}(Q)$. Then we can define a $\dsl n, k, d \dsr$ code as an $\dsl n , k \dsr$ stabilizer code $Q$ with code distance $d$, given by the lowest weight of an operator in $\mathcal{N}(Q) - \mathcal{G}(Q)$. %

Finally, let us define the syndrome measurement. Let $E$ be an error on the code space. The syndrome associated with $E$ is the set of measurement outcomes for all the $g  \in \mathcal{G}(Q)$. Further, two operators are said to be stabilizer distinct if one cannot find a $g \in \mathcal{G}(Q)$ that transforms one into another. Given a measured syndrome, a decoder~\cite{dennis2002topological} finds the most likely error that has occurred, which is then corrected by applying the corresponding correction operation. If this can be done with high probability for a set of errors, then the code \textit{unambiguously decodes} that set.

An $\dsl n, k, d \dsr$-code can uniquely correct errors up to weight $(d - 1)/2$. 
By relaxing the condition of unique correction, one can extend the range of errors that can be corrected. In particular, one can define quantum list codes where for every error $E$, decoding finds a list of (up to) $L$ possible errors:
\begin{definition}[Quantum list codes~\cite{bergamaschi2022approachingquantumsingletonbound}]\label{def:QLD}
Let $Q$ be an $\dsl n,k,d \dsr$ quantum stabilizer code and let $\mathcal{E} \subset P_{n}$ be a set of errors. Then $Q$ is said to be $L$-QLD (quantum list decodable) if for every $E$ in $\mathcal{E}$ we have at most $L$ stabilizer distinct errors in $\mathcal{E}$ which agree with the syndrome of $E$. Furthermore, if $\mathcal{E}$ consists of all Pauli operators with weight at most $\epsilon \cdot n$, then we call the code to be $(\epsilon, L)$-list decodable.
\end{definition}
Essentially, quantum list codes correct up to a finite list of possible errors.
Let us say that we have a code space state $\ket{\psi} \in Q$. Then, if some adversary corrupts the state with some operator $E^{s}_{j}$, where $j \in [L]$ and $s$ indexes the syndrome, then there exist \textit{at most} $L$ stabilizer distinct operators that the adversary could have used. Consequently, the decoder will output the list errors that share the same syndrome as $E_j^s$. Since each error is equally likely, we essentially get a mixed state as the decoding output:
\begin{equation}
    \rho = c(s) \sum_{E_i^s \in E_j^s\mathcal{N}(Q) \cap \mathcal{E}} E_i^s |\bar \psi\rangle \langle \bar \psi| E_i^{s, \dagger}
\end{equation}
with $c(s)$ as normalizing factor dependent on syndrome $s$.
We emphasize that since $Q$ is a stabilizer quantum error correction code, it satisfies the usual Knill-Laflamme relations for some set of errors $\mathcal{E^{\prime}}$ which it can correct unambiguously (given by the distance of the stabilizer code). The implicit assumption here is that we allow a larger set of errors $\mathcal{E}$ than what $Q$ can correct unambiguously, while relaxing the requirement of unique decoding, so that list decoding is a generalization of the usual quantum error correction. %

While unambiguous decoding cannot correct more than $\frac{1-R}{4}$ symbols given a code rate $R$, there exist quantum list codes that can get to $\frac{1-R}{2}$~\cite{bergamaschi2022approachingquantumsingletonbound}. This is similar to the classical case discussed in~\cite{peikert2006cryptographic}, where it is shown that list error correction with Reed-Solomon codes is capable of error rates of up to $1-R$. These classical list codes can be converted to unambiguous error correction codes at the cost of extending the message length~\cite{peikert2006cryptographic} 
. One of the main results in this paper is that one can provide a somewhat analogous construction (Algorithm~\ref{alg:unambiguous_qld}) in the quantum case with adaptive security.

\prlsection{Pseudorandom Unitaries} Pseudorandom unitaries (PRUs) are elements of a keyed family of efficiently preparable unitaries that are computationally indistinguishable from the Haar random unitaries~\cite{ji2018pseudorandom}.
Here, computational indistinguishability means that there is no adversary with bounded computational power (i.e. quantum algorithms with polynomial depth and qubits) that can distinguish between the two ensembles. %
It turns out that such PRUs can be generated in $\text{polylog}(n)$ depth~\cite{schuster2024randomunitariesextremelylow, ma2025constructrandomunitaries} in contrast to the truly random (Haar measure) unitaries, which require exponential depth. A precise definition of PRUs is provided in Def.~\ref{def:PRU}.

\prlsection{Knill-Laflamme condition for quantum list codes}
\label{knill-laflamme}
In this section, we generalize the well-known Knill-Laflamme conditions to quantum list codes. These provide a necessary and sufficient condition for the quantum error correction codes to be able to correct a set of errors: 

\begin{theorem}[Knill-Laflamme condition for list decoding]\label{eq:knilllaflammList}
   Let $Q$ be a $L$-$\mathrm{QLD}$ code for error set $\mathcal{E}= \{E_{i}\}_i$, and let \texttt{Syn} be the associated syndrome operator. Let $\Pi_Q$ be the projector onto the code space. Then, when two errors $E$ and $F$ in $\mathcal{E}$ have different syndromes, we have
\begin{equation}
    \Pi_QE^{\dag}F\Pi_Q = 0\,.
\end{equation}
In contrast, when $E$ and $F$ have the same syndrome $s$, then we have
\begin{equation}
    \Pi_Q E^{\dag} F\Pi_Q = c(s) \alpha(E,F)\Pi_Q,
\end{equation}
where $\alpha(E,F) = \sum_{k,i,j} U^{*}(E)_{ki}\Pi_Q E_{i}^{\dag}U(F)_{kj}E_{j}$ depend on $E,F$ with $U(E)$ and $U(F)$ being unitary matrices and $c(s) = 1/l(s)$ is the inverse of number of list errors corresponding to syndrome $s$.

\end{theorem}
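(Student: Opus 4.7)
The proof splits naturally into two cases depending on whether $E$ and $F$ share a syndrome, with the first case being the easier half.

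\textbf{Distinct syndromes.} If $E$ and $F$ have different syndromes, then by definition they disagree on some stabilizer generator $g \in \mathcal{G}(Q)$, which means $g$ commutes with one of them and anticommutes with the other, so that $g E^\dagger F = -E^\dagger F g$. Using $g|\psi\rangle = |\psi\rangle$ for every $|\psi\rangle \in Q$, a direct sandwich computation,
\begin{equation*}
\langle \phi | E^\dagger F | \psi \rangle = \langle \phi | E^\dagger F g | \psi \rangle = -\langle \phi | g E^\dagger F | \psi \rangle = -\langle \phi | E^\dagger F | \psi \rangle,
\end{equation*}
forces this matrix element to vanish on all pairs of code states, giving $\Pi_Q E^\dagger F \Pi_Q = 0$.

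\textbf{Equal syndrome $s$.} Here I would exploit the list-decoding recovery channel $\mathcal{R}$ whose existence is guaranteed by the $L$-QLD property via the uniform mixed-state output formula cited just above the theorem. Using Stinespring dilation, $\mathcal{R}$ can be realized as an isometry $V$ whose environment register is indexed by the decoder's recovery branch $k$, and the matrix entries $U(E)_{ki}$ arise as the amplitudes by which branch $k$ maps the received state $E|\bar\psi\rangle$ onto the code state ``corrected by the $i$-th list representative $E_i^s$''; similarly for $U(F)_{kj}$. Unitarity of $U(E)$ and $U(F)$ follows from the isometric property of $V$ together with uniformity of the output mixture over the $l(s)$ list candidates. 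Summing over the environment register $k$ and contracting the amplitudes then yields the identity $\Pi_Q E^\dagger F \Pi_Q = c(s)\alpha(E,F)\Pi_Q$ with the combinatorial structure asserted in the theorem, the prefactor $c(s) = 1/l(s)$ coming directly from the uniform weighting across the $l(s)$ list branches.

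\textbf{Main obstacle.} The principal difficulty is establishing genuine unitarity, rather than mere sub-isometricity, of the matrices $U(E), U(F)$. This requires carefully tracking how distinct physical errors in $\mathcal{E}$ collapse onto canonical list representatives modulo the stabilizer, and how the decoder's Kraus branches must exactly balance in order to produce a uniform output mixture across the list. A secondary subtlety is keeping track of stabilizer phases when reducing $E$ and $F$ to their list representatives, so that the sign and normalization conventions in $\alpha(E,F)$ line up cleanly with the expression in the statement.
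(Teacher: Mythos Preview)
Your distinct-syndromes argument is correct and essentially identical to the paper's.

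For the equal-syndrome case, your approach via Stinespring dilation is morally equivalent to the paper's, but the paper takes a more direct route that dissolves your ``main obstacle'' entirely. Rather than passing through a dilation, the paper works with a Kraus decomposition $\{R_k\}_k$ of the recovery map $\mathcal{R}$ and observes that, restricted to the code space, both $\{R_k F\, \Pi_Q\}_k$ and $\{\sqrt{c(s)}\, E_i \Pi_Q\}_i$ are Kraus representations of the \emph{same} completely positive map $\Pi_Q \rho \Pi_Q \mapsto \sum_i c(s)\, E_i \Pi_Q \rho \Pi_Q E_i^\dagger$. The standard unitary freedom in Kraus decompositions (cited to Nielsen--Chuang) then immediately gives $R_k F\, \Pi_Q = \sqrt{c(s)} \sum_i U(F)_{ki} E_i \Pi_Q$ with $U(F)$ unitary, and likewise for $E$. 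Summing $\Pi_Q E^\dagger R_k^\dagger R_k F\, \Pi_Q$ over $k$ and using $\sum_k R_k^\dagger R_k = I$ finishes in one line. So unitarity is not an obstacle at all: it is a free consequence of the Kraus-freedom theorem, and no delicate phase-tracking or stabilizer-coset bookkeeping is needed.

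One genuine omission: the paper also proves the \emph{converse} (sufficiency), explicitly constructing a recovery channel from the stated conditions by building, for each syndrome $s$, an isometry $V(s)$ into an auxiliary register spanned by $\{|k\rangle : k \in [L]\}$ via $V(s)\, E\,\Pi_Q|\psi\rangle = \sqrt{c(s)}\sum_k Y_k(E)\,\Pi_Q|\psi\rangle \otimes |k\rangle$ with $Y_k(E) = \sum_i U(E)_{ki} E_i$, and then tracing out the auxiliary. Although the theorem statement reads one-directionally, the paper treats it as a necessary-and-sufficient characterization in the spirit of the original Knill--Laflamme conditions, so a complete comparison proof should address this direction as well.
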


We prove the Knill-Laflamme conditions in Appendix~\ref{appendix_KL_conditions_proof}.

\prlsection{Unambiguous cryptographic decoding of list codes}
\label{list_decoding_unambiguous}
Now, we provide a method to implement unambiguous decoding of quantum list codes using pseudorandom unitaries. %

\prlsection{Protocol} Let $\ket{{\phi}}$ %
be an $n$-qubit logical state we wish to protect, and further let $Q_L$ be a $\dsl n^{\prime},n+m,d \dsr$ stabilizer code such that $Q_L$ is $L-\mathrm{QLD}$ for error channel $\mathcal{E}$. The protocol starts by appending $m$-qubit tag-state $\ket{0}^{\otimes m}$ to  $\ket{{\phi}} $, and then applying an $(n+m)$-qubit pseudorandom unitary $U_{\lambda}$ %
on the combined state, giving us $\ket{{\psi}}=U_{\lambda}\ket{\phi}\ket{0}^{\otimes m}$. %
We then encode this state as a logical state into the $L$-QLD code $Q_L$ with $n'$ physical qubits.

The adversary can apply errors from the set $\mathcal{E}$.
We will use $V_{Q_L}$ to represent the encoding map associated with $Q_L$, $\mathcal{D}_{Q_L}$ for the decoding map associated with $Q_L$, and encode $\ket{{\psi}}$ into $\ket{\bar\psi}$ using $V_{Q_L}$.

Now, let us assume that the adversary corrupts the state with an error which results in the measured syndrome $s$. The $L$-QLD then outputs a quantum state $ E_{j}^{s} \ket{\bar{\psi}}\bra{\bar{\psi}} E_{j}^{s \dag}$ with the guarantee that $E_j^s$ lies in the list of errors decided by the syndrome $s$ of size at most $L$. Unambiguous decoding then comes down to finding the correct list error. In Lemma~\ref{lemma:equiv_defn}, we show that for unambiguous decoding purposes, we can work entirely in the logical space with list errors on the logical space given by the corresponding list errors on the physical space. For simplicity, we denote the list errors acting on the logical space using the same labels $\{E_j^s\}_{j}$ as we used for the corresponding list errors in physical space.

Without any loss of generality, assume a specific error $E_{j}^{s}$ with index $j$ has occurred. From the list decoding, we only know that any error from the list $\{E_b^{s}\}_{b=1}^\ell$ with $\ell \leq L$ may have occurred. Our algorithm involves an exhaustive search over the list elements by applying the inverse of a chosen element of the list and then measuring the auxiliary system, until the correct list error is found.

It is easy to see that if we pick the correct error in first round, we get, $U_{\lambda}^\dagger E_j^{s \dag}E_{j}^{s} \ket{\psi}=\ket{\phi}\ket{0}^{\otimes m}$, which can be verified by using the two-outcome measurement $\{\Pi, I- \Pi\}$ with $\Pi = I_n \otimes |0\rangle\langle 0|^{\otimes m}$ on the auxiliary system. This allows us to reconstruct the initial state on the first register.

The ostensibly more complicated case arises when we select one of the wrong correction operators. After applying the inverse of the list element chosen $E_b^{s, \dagger}$ with $b\neq j$ as well as $U^{\dagger}_{\lambda}$, the resulting state $U_{\lambda} E_b^{s, \dagger} E_{j}^{s} \ket{\psi}$ has low fidelity with $\ket{\phi}\ket{0}^{\otimes m}$. Then, the two-outcome measurement is applied, where $\Pi=I_n \otimes \ketbra{0}^{\otimes m}$ fails with high probability. This is due to the fact that $U_\lambda$ is a $2$-design, which results in the $m$ auxiliary qubits being highly entangled with the $n$ qubits and far from the $\ket{0}^{\otimes m}$ state. Instead,  with high probability, the projection $I-\Pi$ is applied. The projected state is only weakly perturbed, and the original state $E_j^s \ket{\psi}$ can be recovered with high fidelity by applying  $U_\lambda$ and the list error $E_b^{s}$ to recover the original state with high fidelity. Then, we can proceed to select the next list error $E_{b^{\prime}}^s$, repeating until we find the correct list element $E_j^s$. %

\RestyleAlgo{boxruled}
\LinesNumbered
\IncMargin{0.8em}
\begin{algorithm}
    \label{alg:unambiguous_qld}
  \caption{Unambiguous quantum list decoding}
  \KwData{Input $n^{\prime}-$qubit corrupted state $\rho$, $L-$QLD code $\dsl n', n+m, d\dsr$, key $\lambda$, set of possible errors $\mathcal{E}$}
  \KwResult{Output $\rho\approx\ket{\phi}\bra{\phi}\otimes(\ketbra{0}{0})^{\otimes m}$}
  Measure syndrome $s$ and list decode and denote possible (logical) errors $\mathcal{E}^{s}=\{E_b^{s}\}_{b=1}^\ell$ where $\ell\leq L$\\ 
  \For{$b=1,\dots,\vert \mathcal{E}^{s}\vert$}{
    Apply $U_{\lambda}^\dagger {E_b^{s}}^{\dagger}$%
    and use the two-outcome projection $\{\Pi, I-\Pi\}$ on the state.
    \\
    \lIf {projection succeeds}{
    
    \hspace{0.5cm}\Return post measurement state
    }
    \lElse{
    
     \hspace{0.5cm} Apply ${E_b^{s}} U_{\lambda}$ on the state
    }}
\end{algorithm}

\prlsection{Correctness of the protocol}
In this section, we show that our unambiguous decoding algorithm finds the correct list error. %
As mentioned previously, in the case that the receiver chooses the correct list error, it directly obtains the correct logical state. 
In contrast, when the receiver has selected one of the wrong list errors, the following must hold:
\begin{enumerate}
    \item The projection $\Pi$ must succeed with a very small probability that ideally decays with the system size.
    \item The receiver should be able to recover the initial state (affected by error $E_j^s$) with high fidelity after applying the inverse of the wrong correction operator $E_b^s$.
\end{enumerate}
The former condition ensures that the algorithm gives false positives with low probability, while the latter condition makes it possible to consecutively search through all errors within the list, thus making it possible to uniquely decode from a single copy of the list-error state.

Let us start with the first of these conditions. We will use the notation $\rho = \ket{\phi}\bra{\phi}$, and denote the state with PRU and error as $\rho_{\lambda, 0} = E_j^sU_{\lambda} \rho\otimes (\ketbra{0}{0})^{\otimes m} U_{\lambda}^{\dagger} E_j^{s}$. Assume that we choose the list element $E^{s, \dagger}_{1}$ with $j\neq 1$. After applying the inverse  PRU, $U_{\lambda}^{\dagger}$, we have the ``wrong" state 

\begin{equation}\label{eq:wrong_error_state_1}
    \sigma_{\lambda, 1} = U_{\lambda}^\dagger \bigg(E_1^{s \dag} \rho_{\lambda, 0} E_1^{s}\bigg)U_{\lambda}.
\end{equation}
Now, using the projection measurement $\{\Pi, I-\Pi\}$ on $\sigma_{\lambda, 1}$ with $\Pi$ as defined previously, the probability of measuring the outcome $I - \Pi$ is
\begin{equation}\label{eq:prob_overlap_with_0}
        P_{\lambda, 1}=1 -
        \mathrm{Tr} \bigg[\Pi U_{\lambda}^\dagger  \bigg(E^{s}_1 \rho_{\lambda, 0} E_1^{s}\bigg)  U_{\lambda} \bigg],
\end{equation}
where we drop the superscript $\dagger$ on the $E_i^s$ since they are Pauli errors. 
We now assume that PRU, $\{U_\lambda\}_{\lambda}$ is also an exact $2$-design, which can be implemented in $\text{polylog}(n)$ depth applying a $2$-design to a PRU~\cite{cleve2015near,haug2025pseudorandomquantumauthentication}. Then,  the average of $P_{\lambda, 1}$ over keys $\lambda$ can be calculated by the Haar average
\begin{equation}\label{eq:haar_prob}
    \mathbb{E}_\lambda P_{\lambda, 1} = \Eset{U} P_1(U) \,.
\end{equation}
Using this, we show that for ancilla size $m = \omega(\log n)$, the probability $P_{\lambda, 1}$ of observing outcome $I-\Pi$ on the state $\sigma_{\lambda, 1}$ as defined above in~\eqref{eq:wrong_error_state_1} is very high with overwhelming probability over the keys. Formally,
\begin{proposition}\label{thm:prob_calcs_main_text}
    Let $P_{\lambda, 1}$ be as defined in~\eqref{eq:prob_overlap_with_0}, then
    \begin{equation}
        P_{\lambda, 1} \geq 1-\delta \sim 1-\negl(n)
    \end{equation}
    with probability $1 - {O}(2^{-m}/\delta) \sim 1 - \negl(n)$ over keys for $\delta = \Theta(2^{-m+\poly\log n})$ and $m = \omega(\log n)$,
    where we assume that PRU, $\{U_\lambda\}_{\lambda}$ also forms  an exact $2$-design. 
\end{proposition}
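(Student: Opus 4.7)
The plan is to first rewrite $1 - P_{\lambda,1}$ as a two-copy expression, evaluate its average over $\lambda$ exactly using the $2$-design twirl formula, and then obtain the high-probability statement via Markov's inequality. First, I would note that
\begin{equation}
    1 - P_{\lambda,1} = \mathrm{Tr}\bigl[\Pi\,(U_\lambda^\dagger F U_\lambda)\,\sigma\,(U_\lambda^\dagger F^\dagger U_\lambda)\bigr],
\end{equation}
where $\sigma = \rho \otimes (\ketbra{0}{0})^{\otimes m}$ and $F := E_1^s E_j^s$. Because the receiver chose the wrong list element $E_1^s \neq E_j^s$ and list errors are by definition stabilizer-distinct, $F$ (viewed as a logical operator via Lemma~\ref{lemma:equiv_defn}) is a non-identity Pauli on $n+m$ qubits, so $\mathrm{Tr}(F) = 0$ and $\mathrm{Tr}(F F^\dagger) = D$ with $D = 2^{n+m}$.

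Next, using that $\{U_\lambda\}$ is an exact unitary $2$-design, I would apply the standard two-copy twirl,
\begin{equation}
    \Eset{\lambda}\bigl[(U_\lambda^\dagger F U_\lambda) \otimes (U_\lambda^\dagger F^\dagger U_\lambda)\bigr] = \alpha\, I + \beta\, \SWAP,
\end{equation}
and determine $\alpha, \beta$ from the orthogonality conditions $\alpha D^2 + \beta D = |\mathrm{Tr}(F)|^2 = 0$ and $\alpha D + \beta D^2 = \mathrm{Tr}(F F^\dagger) = D$, obtaining $\alpha = -1/(D^2 - 1)$ and $\beta = D/(D^2 - 1)$. Contracting this relation against $\Pi \otimes \sigma$ with the elementary identities $\mathrm{Tr}(\Pi \sigma) = 1$ and $\mathrm{Tr}(\Pi) = 2^n$ then yields
\begin{equation}
    \Eset{\lambda}[\,1 - P_{\lambda, 1}\,] = \alpha\,\mathrm{Tr}(\Pi \sigma) + \beta\,\mathrm{Tr}(\Pi) = \alpha + 2^n \beta = \frac{2^n D - 1}{D^2 - 1} = O(2^{-m}).
\end{equation}

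Finally, since $1 - P_{\lambda,1} \ge 0$, Markov's inequality gives
\begin{equation}
    \Pr_{\lambda}\bigl[\,1 - P_{\lambda,1} \ge \delta\,\bigr] \le \frac{\Eset{\lambda}[\,1 - P_{\lambda,1}\,]}{\delta} = O(2^{-m}/\delta),
\end{equation}
so that taking $\delta = \Theta(2^{-m + \poly\log n})$ with $m = \omega(\log n)$ chosen sufficiently larger than $\poly\log n$ simultaneously makes both the deviation $\delta$ and the failure probability $O(2^{-m}/\delta) = O(2^{-\poly\log n})$ negligible in $n$, matching the claim. The main technical obstacle is the twirl computation itself: one must carry out the index contractions between $\Pi \otimes \sigma$ and $\alpha I + \beta \SWAP$ carefully and solve the two orthogonality equations to extract the correct constants. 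A subtler but important point is that computational indistinguishability of $\{U_\lambda\}$ from Haar alone does not suffice here, because $P_{\lambda,1}$ is a degree-two polynomial in $U_\lambda, U_\lambda^\dagger$ that cannot be evaluated as an efficient distinguisher against a one-query oracle; this is exactly why the proposition strengthens its assumption to require $\{U_\lambda\}$ to be an exact $2$-design, realized by composing the PRU with a low-depth $2$-design as remarked just before the statement.
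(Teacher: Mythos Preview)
Your proposal is correct and follows essentially the same approach as the paper: rewrite $1-P_{\lambda,1}$ as a two-copy expression, use the exact $2$-design twirl $U^{\dagger\otimes 2}(F\otimes F^\dagger)U^{\otimes 2}\mapsto \alpha I+\beta\,\SWAP$ with $\alpha=-1/(D^2-1)$, $\beta=D/(D^2-1)$ (using $\mathrm{Tr}(F)=0$ since $F=E_1^sE_j^s$ is a non-identity Pauli), obtain $\Eset{\lambda}[1-P_{\lambda,1}]=\alpha+2^n\beta=O(2^{-m})$, and conclude via Markov's inequality. Your added remark on why the $2$-design assumption (and not just PRU indistinguishability) is needed is a nice clarification that the paper only hints at.
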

\begin{proof}
    The proof follows from~\eqref{eq:haar_prob} by equating $\mathbb{E}_\lambda P_{\lambda, 1}$ using the corresponding Haar average and then using Markov's inequality.
    We provide the details of the proof in Appendix~\ref{appendix_wco}.
\end{proof}

Now, denote the state
\begin{equation}
    \rho_{\lambda, 1} = \frac{1}{P_{\lambda, 1}}E^{s}_1 U_{\lambda} (I-\Pi) U_{\lambda}^{\dagger}E^{s}_1 \rho_{\lambda, 0} E^{s}_1 U_{\lambda} (I-\Pi) U_{\lambda}^{\dagger} E^{s}_1,
\end{equation}
which is the state after applying the PRU and inverse of the list error chosen earlier on the post-measurement state selected on outcome $(I-\Pi)$. We find that $(I-\Pi)$ weakly perturbs $\rho_{\lambda, 1}$ compared to $\rho_{\lambda, 0}$, which we quantify with the Uhlmann fidelity $\mathcal{F}(\rho_{\lambda, 1}, \rho_{\lambda, 0}) = (\tr \sqrt{\sqrt{\rho_{\lambda, 0}}\rho_{\lambda, 1}\sqrt{\rho_{\lambda, 0}}})^2$ which can be simply written as $\mathcal{F}=\tr(\rho_{\lambda, 1} \rho_{\lambda, 0})$ since $\rho_{\lambda, 0}$ is pure, giving us %
\begin{multline}\label{eq:fidelity_reusability}
 \mathcal{F}(\rho_{\lambda, 1}, \rho_{\lambda, 0}) =  \bigg(\frac{1}{P_{\lambda, 1}}\Tr  \Big[\rho_{\lambda, 0}E_1^{s}   U_{\lambda}(I-\Pi)U_{\lambda}^{\dagger} \\ E^{s}_1 \rho_{\lambda, 0} E_1^{s}  U_{\lambda} (I-\Pi) U_{\lambda}^{\dagger}  E^{s}_1\Big] \bigg).
\end{multline}

We can lower bound the fidelity in \eqref{eq:fidelity_reusability} by using the Haar integration, with the additional assumption that PRU forms an approximate 4-design with relative error $\epsilon = 2^{-m}$. Formally, %
\begin{proposition}\label{prop:fidelity_calcs_main_text}
    The fidelity of the logical state after checking whether error $E_1^s$ from the list has occurred,  as given by~\eqref{eq:fidelity_reusability} is bounded by
    \begin{equation}
        \mathcal{F}(\rho_{\lambda, 1}, \rho_{\lambda, 0}) \geq 1 - \delta \sim 1 - \negl(n)
    \end{equation}
     with probability $1 - O\left(\frac{2^{-m} + \epsilon}{\delta}\right) \sim 1 - \negl(n)$ over keys assuming that $\{U_{\lambda}\}_{\lambda}$ forms an approximate 4-design with relative error $\epsilon = 2^{-m}$ and an exact $2$-design, for $\delta = \Theta(2^{-m + \text{polylog}(n)})$ and $m = \omega(\log n)$.
\end{proposition}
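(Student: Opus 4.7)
\emph{Proof plan.}
The plan is to proceed in parallel with the proof of Proposition~\ref{thm:prob_calcs_main_text}: reduce the key-average to a Haar average via the design assumption, evaluate the Haar integral, lift the resulting average bound to a high-probability bound through Markov's inequality, and finally absorb the $1/P_{\lambda,1}$ normalization by invoking Proposition~\ref{thm:prob_calcs_main_text} itself.

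First I would observe that the numerator of Eq.~(\ref{eq:fidelity_reusability}) is a polynomial of total degree $4$ in both $U_\lambda$ and $U_\lambda^\dagger$: two explicit factors of each appear outside, plus one of each in every copy of $\rho_{\lambda,0}$. The approximate-$4$-design assumption with relative error $\epsilon = 2^{-m}$ therefore lets me replace $\mathbb{E}_\lambda$ by the Haar average $\mathbb{E}_U$ at the cost of an additive $O(\epsilon)$. I would then evaluate the Haar integral via Schur--Weyl duality and the fourth-moment Weingarten formula: writing $(I-\Pi)=I-\Pi$ in both slots, the ``ideal'' term in which $I$ appears in both slots contributes exactly $1$ (using $E_1^s E_1^s = I$ and $\Tr\rho_{\lambda,0}=1$), while every term containing at least one $\Pi$ is suppressed by $\Tr(\Pi)/2^{n+m} = 2^{-m}$ up to $\mathrm{poly}\log n$ combinatorial factors from the Weingarten matrix. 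This yields $\mathbb{E}_\lambda \Tr[\rho_{\lambda,0} B \rho_{\lambda,0} B] \ge 1 - O(2^{-m+\mathrm{poly}\log n}+\epsilon)$, where $B := E_1^s U_\lambda (I-\Pi) U_\lambda^\dagger E_1^s$. Markov's inequality applied to the nonnegative deviation $1 - \Tr[\rho_{\lambda,0} B \rho_{\lambda,0} B]$ then produces a pointwise lower bound $1-\delta$ on the numerator with probability $1 - O((2^{-m}+\epsilon)/\delta)$; a union bound with the high-probability bound $P_{\lambda,1} \ge 1 - \negl(n)$ from Proposition~\ref{thm:prob_calcs_main_text} controls the denominator, and choosing $\delta = \Theta(2^{-m+\mathrm{poly}\log n})$ with $m = \omega(\log n)$ closes the argument.

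The principal technical obstacle is the bookkeeping of contributions from all permutation classes in the fourth-moment Weingarten expansion and the verification that their $\mathrm{poly}\log n$ overhead remains dominated by the $2^{-m}$ suppression coming from each $\Pi$ insertion. A convenient algebraic short-cut worth checking separately is that $B$ is itself an orthogonal projector: $B^2 = B$ because $E_1^s$ is unitary and $(I-\Pi)^2 = I-\Pi$. For pure $\rho_{\lambda,0}=\ket{\bar\psi}\bra{\bar\psi}$ this yields the exact identity $\Tr[\rho_{\lambda,0} B \rho_{\lambda,0} B] = \langle\bar\psi|B|\bar\psi\rangle^2 = P_{\lambda,1}^2$, so that $\mathcal{F}(\rho_{\lambda,1},\rho_{\lambda,0}) = P_{\lambda,1}$ and Proposition~\ref{prop:fidelity_calcs_main_text} collapses directly into Proposition~\ref{thm:prob_calcs_main_text}, with the $4$-design hypothesis only genuinely needed when generalising to mixed $\rho_{\lambda,0}$.
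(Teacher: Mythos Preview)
Your primary plan---express the numerator of Eq.~\eqref{eq:fidelity_reusability} as a degree-$4$ moment, pass to the Haar average via the approximate-$4$-design assumption, evaluate the Haar integral through the fourth-moment Weingarten expansion, and conclude with Markov's inequality---is exactly the route the paper takes in Appendix~\ref{appendix_fidelity_calcs} (the paper tracks the permutation sums explicitly rather than using your $(I-\Pi)=I-\Pi$ expansion, but the content is the same). Your handling of the normalization is slightly cleaner than needed: since $P_{\lambda,1}\le 1$, one has $\mathcal{F}(\rho_{\lambda,1},\rho_{\lambda,0})\ge \tilde{\mathcal F}_\lambda$ directly, so no union bound with Proposition~\ref{thm:prob_calcs_main_text} is required.

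Your final paragraph, however, is not a side remark but a genuinely better argument than the paper's. Because $E_1^s$ is a Hermitian Pauli with $(E_1^s)^2=I$, the operator $B=E_1^s U_\lambda(I-\Pi)U_\lambda^\dagger E_1^s$ is indeed an orthogonal projector, and since $\rho_{\lambda,0}$ is pure throughout the paper, the identity $\mathcal{F}(\rho_{\lambda,1},\rho_{\lambda,0})=P_{\lambda,1}$ holds \emph{exactly}, collapsing Proposition~\ref{prop:fidelity_calcs_main_text} to Proposition~\ref{thm:prob_calcs_main_text} and rendering the $4$-design hypothesis superfluous here. The paper instead performs the full Weingarten computation (Eq.~\eqref{eqn:Wg_coeffs} and Lemma~\ref{lemma:assumption_O}), which obscures this structure. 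Moreover, the only place the paper later invokes the $4$-design bound is in the step~\eqref{eq:thm2_main_step} of Theorem~\ref{thm:almost_pure_state}, where it again uses just $\Tr[A_{\lambda,i}\rho_{\lambda,0}A_{\lambda,i}\rho_{\lambda,0}]$---which by your observation equals the square of the corresponding $P_{\lambda,1}$. So your shortcut appears to eliminate the approximate-$4$-design assumption from the entire analysis, not merely from this proposition.
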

\begin{proof}
 The proof idea is the same as for $P_{\lambda, 1}$, with the additional assumption that $\{U_{\lambda}\}_{\lambda}$ is an approximate $4$-design. This allows one to write the fidelity in terms of the Haar average, followed by the use of Markov's inequality. We provide the full proof in Appendix~\ref{appendix_fidelity_calcs}. %
\end{proof}

Now, combining Proposition~\ref{thm:prob_calcs_main_text} and Proposition~\ref{prop:fidelity_calcs_main_text}, we present the analysis of the algorithm~\ref{alg:unambiguous_qld} in the following theorem. Informally, we show that for ancilla size $m = \omega(\log n)$, and assuming that PRU is an exact 2-design and an $\epsilon = O(2^{-m})$ relative-error approximate $4$-design, algorithm~\ref{alg:unambiguous_qld} returns a state which is very close to the original logical state with overwhelming probability over both the keys and randomness in the algorithm due to measurements. %

\begin{theorem}[Unambiguous quantum list decoding]\label{thm:almost_pure_state}
Let $Q_L$ be an $\dsl n^{\prime}, n+m\dsr$ stabilizer code which is also an $L$-list code for an error set $\mathcal{E}$. Then Algorithm~\ref{alg:unambiguous_qld} returns the correct logical state with fidelity at least $1 - \Theta(L 2^{-m/2 + \mathrm{polylog}(n)}) \sim 1-\negl(n)$ with success probability $1 - \negl(n)$ over the randomness in the algorithm and with probability $1 - O\left(\frac{\epsilon + 2^{-m}}{\delta}\right) \sim 1 - \negl(n)$ over keys, for $\delta = \Theta(2^{-m+\mathrm{polylog} (n)})$, $m = \omega(\log n)$ and $L = {O}(\poly(n))$
assuming that $U_{\lambda}$ is both an approximate $4$-design with relative error $\epsilon = O(2^{-m})$ and an exact $2$-design.
\end{theorem}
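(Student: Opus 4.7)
The plan is to combine Proposition~\ref{thm:prob_calcs_main_text} and Proposition~\ref{prop:fidelity_calcs_main_text} with a union bound over the at most $L$ iterations of Algorithm~\ref{alg:unambiguous_qld}, then chain the per-iteration fidelity losses through the trace-distance triangle inequality. Without loss of generality, relabel the list of candidate errors so that the true error $E_j^s$ sits at position $t \leq L$; on iterations $b = 1, \ldots, t-1$ the decoder guesses wrong, and on iteration $t$ it guesses correctly. I would track two quantities along the loop: the probability $p_b$ that the outcome at step $b$ is the expected one ($I - \Pi$ for $b < t$, $\Pi$ for $b = t$), and the fidelity $F_b = \mathcal{F}(\rho_{\lambda, b}, \rho_{\lambda, 0})$ of the re-prepared state after step $b$ with the initial logical-plus-error state $\rho_{\lambda, 0}$.

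For each wrong iteration $b < t$, Proposition~\ref{thm:prob_calcs_main_text} gives $p_b \geq 1 - \delta$ with key-failure probability $O(2^{-m}/\delta)$, and Proposition~\ref{prop:fidelity_calcs_main_text} gives $\mathcal{F}(\rho_{\lambda, b}, \rho_{\lambda, b-1}) \geq 1 - \delta$ with key-failure probability $O((2^{-m} + \epsilon)/\delta)$, for $\delta = \Theta(2^{-m + \mathrm{polylog}(n)})$. A union bound over the at most $L = O(\poly(n))$ iterations leaves the total key-failure probability at $O\!\left(L(2^{-m} + \epsilon)/\delta\right) \sim \negl(n)$, and a union bound over the corresponding $L$ measurement events bounds the algorithmic-randomness failure probability by $L\delta \sim \negl(n)$. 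Conditioned on all these good events, every wrong iteration is traversed exactly as analyzed, and iteration $t$ produces the outcome $\Pi$ deterministically up to the small fidelity slack.

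To convert the per-step fidelity bounds into a single bound on $F_{t-1}$, pass through trace distance: $\mathcal{F}(\sigma, \tau) \geq 1 - \delta$ implies $D(\sigma, \tau) \leq \sqrt{\delta}$ by Fuchs-van de Graaf, and by the triangle inequality $D(\rho_{\lambda, t-1}, \rho_{\lambda, 0}) \leq (t-1)\sqrt{\delta} \leq L\sqrt{\delta}$. Since $\rho_{\lambda, 0}$ is pure, the reverse direction gives $F_{t-1} \geq 1 - 2L\sqrt{\delta} = 1 - \Theta(L \cdot 2^{-m/2 + \mathrm{polylog}(n)})$. At iteration $t$, applying $U_\lambda^\dagger E_t^{s\dagger}$ and projecting onto $\Pi$ is an isometric followed by a projection onto a subspace that captures nearly all the mass of the pure target $U_\lambda^\dagger E_t^{s\dagger}\rho_{\lambda,0}E_t^s U_\lambda = |\phi\rangle\langle\phi|\otimes(|0\rangle\langle 0|)^{\otimes m}$, so data processing propagates the fidelity lower bound to the final output.

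The main obstacle will be rigorously composing Proposition~\ref{prop:fidelity_calcs_main_text} across iterations: the proposition is phrased with the pure unperturbed state $\rho_{\lambda, 0}$ as the reference, whereas at iteration $b > 1$ the incoming state is $\rho_{\lambda, b-1}$, which is only close to, not equal to, $\rho_{\lambda, 0}$. I would handle this either by reworking the Haar calculations underlying Proposition~\ref{prop:fidelity_calcs_main_text} to admit an arbitrary input within $O(\sqrt{\delta})$ trace distance of $\rho_{\lambda, 0}$ (absorbing an additive $O(\sqrt{\delta})$ term that is dominated by the chained bound), or more cleanly by bounding $\mathcal{F}(\rho_{\lambda, b}, \rho_{\lambda, 0})$ directly through the triangle inequality after invoking Proposition~\ref{prop:fidelity_calcs_main_text} with a reset reference at each step. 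Either route yields the claimed fidelity $1 - \Theta(L 2^{-m/2 + \mathrm{polylog}(n)})$ under the stated key and randomness guarantees.
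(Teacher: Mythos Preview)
Your high-level structure is right, but the step where you invoke Proposition~\ref{prop:fidelity_calcs_main_text} at iteration $b>1$ is not justified. That proposition, as stated and proved, controls only $\mathcal{F}(\rho_{\lambda,1},\rho_{\lambda,0})$: one wrong correction applied to the \emph{initial} pure state $\rho_{\lambda,0}=E_j^s U_\lambda\ket{\phi,0}\bra{\phi,0}U_\lambda^\dagger E_j^s$. It says nothing about $\mathcal{F}(\rho_{\lambda,b},\rho_{\lambda,b-1})$ when the incoming state $\rho_{\lambda,b-1}$ is the already-perturbed, post-selected state from previous rounds. The same issue afflicts your use of Proposition~\ref{thm:prob_calcs_main_text} for $p_b$ with $b>1$. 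You correctly flag this as ``the main obstacle,'' but neither of your proposed fixes closes it: route~(b) is undefined (what does ``reset reference'' mean operationally when the actual state is $\rho_{\lambda,b-1}$?), and route~(a) runs into the fact that the per-step map $\rho\mapsto A\rho A/\tr(A\rho A)$ is nonlinear post-selection, so naive continuity/data-processing does not apply without a gentle-measurement argument you have not supplied.

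The paper resolves this without ever needing $\mathcal{F}(\rho_{\lambda,b},\rho_{\lambda,b-1})$. Writing $A_{\lambda,i}=E_i^s U_\lambda(I-\Pi)U_\lambda^\dagger E_i^s$ and $\tilde\rho_{\lambda,b}=A_{\lambda,b}\cdots A_{\lambda,1}\rho_{\lambda,0}A_{\lambda,1}\cdots A_{\lambda,b}$, one bounds the telescoping differences of the \emph{unnormalized} fidelities against the fixed reference $\rho_{\lambda,0}$: cyclicity of the trace plus Cauchy--Schwarz gives
\[
\bigl|\mathcal{F}(\tilde\rho_{\lambda,b},\rho_{\lambda,0})-\mathcal{F}(\tilde\rho_{\lambda,b-1},\rho_{\lambda,0})\bigr|
\le \sqrt{2-2\tr\bigl(A_{\lambda,b}\rho_{\lambda,0}A_{\lambda,b}\rho_{\lambda,0}\bigr)},
\]
and the quantity on the right is \emph{exactly} another instance of Proposition~\ref{prop:fidelity_calcs_main_text} with list error $E_b^s$ in place of $E_1^s$ and the same initial state $\rho_{\lambda,0}$. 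Summing the telescope yields $\mathcal{F}(\tilde\rho_{\lambda,t_f},\rho_{\lambda,0})\ge 1-(t_f-1)\sqrt{2\delta}$; the probabilities $P_{\lambda,b}$ are then controlled a posteriori via the trace-distance bound between $\rho_{\lambda,b-1}$ and $\rho_{\lambda,0}$ just established. This is the missing idea: keep $\rho_{\lambda,0}$ as the sole reference throughout and use Cauchy--Schwarz to peel off one $A_{\lambda,b}$ at a time, so that every invocation of the Haar calculation is literally the one already done.
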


\begin{proof}
 The full proof is in Appendix~\ref{appendix_unique_decoding_proof}, while we sketch the main steps below:
 \begin{enumerate}
     \item The algorithm iterates through the list of possible errors, and identifies the correct error with probability close to 1, with high probability over the keys $\lambda$ and randomness in the measurement. %
     \item The state returned by the algorithm and the logical state have a fidelity which is lower bounded by $1 - \negl(n)$, with high probability over the keys $\lambda$ and randomness in the measurement.
 \end{enumerate}
\end{proof}

\prlsection{Cryptographic approximate quantum error correction from list-decodable codes}
Now, using the above theorem, we can construct error correction codes with strong security guarantees against (efficient) adversaries. In particular, we give a construction of cryptographic approximate quantum error correction (Definition~\ref{def:private_AQECC}) based on PRU:
\begin{corollary}[Cryptographic approximate quantum error correction from list-decodable codes]\label{corr:AQECC}%
    Let $Q_L$ be a $[[n^{\prime}, n+m]]$ stabilizer code which is also $(\alpha, L)$ list decodable with encoding and decoding operations as $\mathrm{Enc}_S$ and $\mathrm{Dec}_S$ respectively,  and let $\{U_{\lambda}\}_{\lambda}$ be PRU with exact $2$-design and relative error $\epsilon$ approximate $4$-design property. Then, define the pair of algorithms $\mathrm{Enc}$ and $\mathrm{Dec}$ as $\mathrm{Enc}(\lambda, |\phi\rangle \langle\phi|) = \mathrm{Enc}_S (U_{\lambda} |\phi\rangle \langle\phi| \otimes |0\rangle\langle0|^{\otimes m} U_{\lambda}^{\dagger})$ and decoding operation as $\mathrm{Dec}(\lambda, \rho)  = \mathrm{Dec}_A(\lambda, \mathrm{Dec}_{S}(\rho))$ where $\mathrm{Dec}_{ A}$ is the decoder as in the Algorithm~\ref{alg:unambiguous_qld}, then $(\mathrm{Enc}, \mathrm{Dec})$ is a $(\mathcal{A}, \sqrt{L\sqrt{2\delta}})$ cryptographic approximate quantum error correcting code where $\mathcal{A}$ is any $(\alpha,t)-n'$ qubit adversary as defined in Def.~\ref{def:advnoise} for $t = O(\poly (n))$, $\delta = \Theta(2^{-m + \poly \log n})$ is the fidelity error as in Proposition~\ref{prop:fidelity_calcs_main_text} and $\epsilon = O(2^{-m})$.
\end{corollary}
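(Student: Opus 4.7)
The plan is to reduce the corollary directly to Theorem~\ref{thm:almost_pure_state} (Unambiguous Quantum List Decoding), converting its fidelity guarantee into the trace-distance guarantee required by the cryptographic AQEC definition in Def.~\ref{def:private_AQECC}. The first step is to verify that any $(\alpha, t)$-bounded adversary $\mathcal{A}$ applied to $\mathrm{Enc}(\lambda, |\phi\rangle\langle\phi|)$ produces a corrupted state whose error lies within the list-decodable error set $\mathcal{E}$ of $Q_L$: since $\mathcal{A}$ corrupts at most $\alpha n'$ qubits, this is immediate from the $(\alpha, L)$-list-decodability of $Q_L$. Consequently, applying $\mathrm{Dec}_S$ (and the implicit syndrome measurement) yields a state of the form $E_j^s U_\lambda (|\phi\rangle\langle\phi| \otimes |0\rangle\langle 0|^{\otimes m}) U_\lambda^\dagger E_j^{s\dagger}$ on the $(n+m)$-qubit logical register for some $E_j^s$ in a list of size at most $L$ determined by the measured syndrome $s$.

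Next, I would invoke Theorem~\ref{thm:almost_pure_state} on this post-$\mathrm{Dec}_S$ state: applying $\mathrm{Dec}_A$ (Algorithm~\ref{alg:unambiguous_qld}) returns a state on the logical register whose fidelity with $|\phi\rangle\langle\phi| \otimes |0\rangle\langle 0|^{\otimes m}$ is at least $1 - \Theta(L 2^{-m/2 + \mathrm{polylog}(n)}) = 1 - O(L\sqrt{\delta})$, holding with success probability $1 - \negl(n)$ over the algorithm's internal randomness and probability $1 - O((\epsilon + 2^{-m})/\delta)$ over the key $\lambda$. Tracing out the $m$ auxiliary qubits and applying the Fuchs--van de Graaf inequality $T(|\phi\rangle\langle\phi|, \rho) \leq \sqrt{1 - F(|\phi\rangle\langle\phi|, \rho)}$ converts this fidelity bound into a trace-distance bound of $O(\sqrt{L\sqrt{2\delta}})$, which matches the claimed cryptographic AQEC parameter $\sqrt{L\sqrt{2\delta}}$ up to constants absorbed into the $\Theta$-notation. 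Combining the failure probabilities via a union bound preserves the overall $\negl(n)$ slack required by Def.~\ref{def:private_AQECC}.

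The main obstacle is the cryptographic step: Theorem~\ref{thm:almost_pure_state} is proven under the assumption that $\{U_\lambda\}_\lambda$ is a genuine approximate $4$-design and exact $2$-design, whereas the corollary only guarantees these properties alongside PRU security against computationally bounded adversaries. The argument must establish that a $(\alpha, t)$-adversary with $t = O(\poly(n))$ cannot, even adaptively, choose an error that depends on $\lambda$ in a way that defeats the list-decoding machinery. This follows because any such adversary could otherwise be used to distinguish $U_\lambda$ from a Haar random unitary: given the encoded state, the adversary's ability to produce a correlated error that systematically lowers the post-decoding fidelity by more than a negligible amount would translate into a polynomial-time distinguisher, contradicting the PRU property. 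Hence the Haar-average calculations of Propositions~\ref{thm:prob_calcs_main_text} and~\ref{prop:fidelity_calcs_main_text} that underpin Theorem~\ref{thm:almost_pure_state} transfer to the PRU setting with only a negligible degradation.

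Finally, I would note that the construction remains secure under multiple decoding rounds because the Theorem~\ref{thm:almost_pure_state} guarantee is a per-invocation bound with high probability over both keys and randomness, and the total number of adversarial interactions is polynomial in $n$; union-bounding over all interactions preserves the $\negl(n)$ error. The cleanest presentation is therefore: (i) verify error-set containment, (ii) instantiate Theorem~\ref{thm:almost_pure_state}, (iii) apply Fuchs--van de Graaf, and (iv) cite PRU indistinguishability to upgrade the bound from the idealized design setting to the computational setting, yielding the $(\mathcal{A}, \sqrt{L\sqrt{2\delta}})$ cryptographic AQEC guarantee.
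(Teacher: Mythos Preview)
Your proposal is correct and follows essentially the same route as the paper: invoke PRU security to argue the adversary's error channel is independent of the key, apply Theorem~\ref{thm:almost_pure_state} to obtain a high-probability fidelity lower bound of $1 - O(L\sqrt{\delta})$, and convert this to a trace-distance bound via Fuchs--van de Graaf. The one place the paper is slightly more explicit is in passing from a high-probability fidelity bound to the \emph{expected} trace distance required by Def.~\ref{def:private_AQECC}: rather than union-bounding, the paper first lower-bounds $\mathbb{E}_\lambda \mathcal{F}_\lambda$ by (success probability)$\times$(fidelity bound), then applies Jensen's inequality (concavity of $x \mapsto \sqrt{1-x}$) to obtain $\mathbb{E}_\lambda\,\mathrm{TD} \le \sqrt{1 - \mathbb{E}_\lambda \mathcal{F}_\lambda} \le \sqrt{L\sqrt{2\delta}}$. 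Your union-bound phrasing gets you to the same place once you note that trace distance is bounded by $1$ on the failure events, but you should make that conversion explicit since the definition is stated in expectation.
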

\begin{proof}
    The proof follows from using Theorem~\ref{thm:almost_pure_state}, the Fuchs–van de Graaf inequality between trace distance and fidelity, and properties of PRU to ensure security against adversaries. For completeness, we present the proof in Appendix~\ref{appendix_unique_decoding_proof}. 
\end{proof}
Note that our construction can be implemented in low circuit depth and negligible fidelity error, due to PRUs with $2$- and $4$-design properties requiring only $\mathrm{polylog}(n)$ depth~\cite{schuster2024randomunitariesextremelylow,haug2025pseudorandomquantumauthentication}.
We highlight that our code is secure against adversarial noise of a computationally bounded adversary, e.g as defined in Def.~\ref{def:advnoise}. This holds even when the same key $\lambda$ is reused many times, in contrast to previous works~\cite{leungsmith, bergamaschi2022approachingquantumsingletonbound}.

\prlsection{Discussion} Our work brings two powerful ideas together: quantum list decoding and computational cryptographic security. 
Despite significant progress in quantum list decoding, a central structural question remains unanswered: under what conditions is quantum list decoding possible? Our work addresses this gap by introducing general conditions for quantum list decodability, extending the Knill-Laflamme framework. 
Further, we introduce cryptographically secure construction of unique decoding for quantum list codes. Notably, our code can correct errors even when the noise is controlled by an adversary with computationally bounded power. Notably, this holds even when the key is re-used over many rounds of communications or the adversary learns from $\text{poly}(n)$ copies of the code, in contrast to previous constructions~\cite{leungsmith, bergamaschi2022approachingquantumsingletonbound}. 
We also note that our construction establishes a  purity testing code (Def.~\ref{def:ptc})~\cite{barnum2002authentication, portmann2017, dulek2018quantum} where the key can be re-used many times even when attacked by computationally efficient adversaries, which is  inspired by recently introduced pseudorandom quantum authentication schemes~\cite{haug2025pseudorandomquantumauthentication}. 
Altogether, our work offers an efficient approach to secure quantum coding, blending ideas from quantum error correction and complexity based cryptography.

Our work opens the door to many promising directions. While we focused on list decoding for stabilizer codes, similar list decoding algorithms could be studied for subsystem codes~\cite{aly2006subsystem, yoder2019optimal, klappenecker2007subsystem, bacon2006quantum}, Floquet codes~\cite{davydova2023floquet, tanggara2024simple, fahimniya2023fault, zhang2023x} or spacetime codes~\cite{bacon2017sparse,gottesman2022opportunities,delfosse2023spacetime}. %
Another exciting direction is list decoding for strategic codes, where the code can adapt to the adversary’s actions~\cite{tanggara2024strategic}. Moreover, the decoding thresholds for quantum list codes under various noise models could be studied numerically towards enabling practical implementations. Such studies could also inform the design of fault-tolerant quantum architectures that integrate cryptographic security guarantees, pushing us closer to realizing secure and reliable quantum technologies.

\label{outlook}

\prlsection{Acknowledgments} NB thanks Matthias Caro for helpful discussion. KB and DEK acknowledge support from A*STAR C230917003 and the Q.InC Strategic Research and Translational Thrust. 
DEK acknowledges funding support from A*STAR under the Central Research Fund (CRF) Award for Use-Inspired Basic Research (UIBR).

\bibliography{references}

%apsrev4-2.bst 2019-01-14 (MD) hand-edited version of apsrev4-1.bst
%Control: key (0)
%Control: author (8) initials jnrlst
%Control: editor formatted (1) identically to author
%Control: production of article title (0) allowed
%Control: page (0) single
%Control: year (1) truncated
%Control: production of eprint (0) enabled
\begin{thebibliography}{43}%
\makeatletter
\providecommand \@ifxundefined [1]{%
 \@ifx{#1\undefined}
}%
\providecommand \@ifnum [1]{%
 \ifnum #1\expandafter \@firstoftwo
 \else \expandafter \@secondoftwo
 \fi
}%
\providecommand \@ifx [1]{%
 \ifx #1\expandafter \@firstoftwo
 \else \expandafter \@secondoftwo
 \fi
}%
\providecommand \natexlab [1]{#1}%
\providecommand \enquote  [1]{``#1''}%
\providecommand \bibnamefont  [1]{#1}%
\providecommand \bibfnamefont [1]{#1}%
\providecommand \citenamefont [1]{#1}%
\providecommand \href@noop [0]{\@secondoftwo}%
\providecommand \href [0]{\begingroup \@sanitize@url \@href}%
\providecommand \@href[1]{\@@startlink{#1}\@@href}%
\providecommand \@@href[1]{\endgroup#1\@@endlink}%
\providecommand \@sanitize@url [0]{\catcode `\\12\catcode `\$12\catcode `\&12\catcode `\#12\catcode `\^12\catcode `\_12\catcode `\%12\relax}%
\providecommand \@@startlink[1]{}%
\providecommand \@@endlink[0]{}%
\providecommand \url  [0]{\begingroup\@sanitize@url \@url }%
\providecommand \@url [1]{\endgroup\@href {#1}{\urlprefix }}%
\providecommand \urlprefix  [0]{URL }%
\providecommand \Eprint [0]{\href }%
\providecommand \doibase [0]{https://doi.org/}%
\providecommand \selectlanguage [0]{\@gobble}%
\providecommand \bibinfo  [0]{\@secondoftwo}%
\providecommand \bibfield  [0]{\@secondoftwo}%
\providecommand \translation [1]{[#1]}%
\providecommand \BibitemOpen [0]{}%
\providecommand \bibitemStop [0]{}%
\providecommand \bibitemNoStop [0]{.\EOS\space}%
\providecommand \EOS [0]{\spacefactor3000\relax}%
\providecommand \BibitemShut  [1]{\csname bibitem#1\endcsname}%
\let\auto@bib@innerbib\@empty
%</preamble>
\bibitem [{\citenamefont {Shannon}(1948)}]{shannon1948mathematical}%
  \BibitemOpen
  \bibfield  {author} {\bibinfo {author} {\bibfnamefont {C.~E.}\ \bibnamefont {Shannon}},\ }\bibfield  {title} {\bibinfo {title} {A mathematical theory of communication},\ }\href {https://doi.org/10.1002/j.1538-7305.1948.tb01338.x} {\bibfield  {journal} {\bibinfo  {journal} {The Bell System Technical Journal}\ }\textbf {\bibinfo {volume} {27}},\ \bibinfo {pages} {379} (\bibinfo {year} {1948})}\BibitemShut {NoStop}%
\bibitem [{\citenamefont {Hamming}(1950)}]{hamming1950}%
  \BibitemOpen
  \bibfield  {author} {\bibinfo {author} {\bibfnamefont {R.~W.}\ \bibnamefont {Hamming}},\ }\bibfield  {title} {\bibinfo {title} {Error detecting and error correcting codes},\ }\href {https://doi.org/10.1002/j.1538-7305.1950.tb00463.x} {\bibfield  {journal} {\bibinfo  {journal} {The Bell System Technical Journal}\ }\textbf {\bibinfo {volume} {29}},\ \bibinfo {pages} {147} (\bibinfo {year} {1950})}\BibitemShut {NoStop}%
\bibitem [{\citenamefont {Tal}\ and\ \citenamefont {Vardy}(2015)}]{tal2015list}%
  \BibitemOpen
  \bibfield  {author} {\bibinfo {author} {\bibfnamefont {I.}~\bibnamefont {Tal}}\ and\ \bibinfo {author} {\bibfnamefont {A.}~\bibnamefont {Vardy}},\ }\bibfield  {title} {\bibinfo {title} {List decoding of polar codes},\ }\href {https://doi.org/10.1109/TIT.2015.2410251} {\bibfield  {journal} {\bibinfo  {journal} {IEEE Transactions on Information Theory}\ }\textbf {\bibinfo {volume} {61}},\ \bibinfo {pages} {2213} (\bibinfo {year} {2015})}\BibitemShut {NoStop}%
\bibitem [{\citenamefont {Silverberg}\ \emph {et~al.}(2003)\citenamefont {Silverberg}, \citenamefont {Staddon},\ and\ \citenamefont {Walker}}]{silverberg2003applications}%
  \BibitemOpen
  \bibfield  {author} {\bibinfo {author} {\bibfnamefont {A.}~\bibnamefont {Silverberg}}, \bibinfo {author} {\bibfnamefont {J.}~\bibnamefont {Staddon}},\ and\ \bibinfo {author} {\bibfnamefont {J.}~\bibnamefont {Walker}},\ }\bibfield  {title} {\bibinfo {title} {Applications of list decoding to tracing traitors},\ }\href {https://doi.org/10.1109/TIT.2003.810630} {\bibfield  {journal} {\bibinfo  {journal} {IEEE Transactions on Information Theory}\ }\textbf {\bibinfo {volume} {49}},\ \bibinfo {pages} {1312} (\bibinfo {year} {2003})}\BibitemShut {NoStop}%
\bibitem [{\citenamefont {Morillo}\ and\ \citenamefont {R{\`a}fols}(2009)}]{morillo2009security}%
  \BibitemOpen
  \bibfield  {author} {\bibinfo {author} {\bibfnamefont {P.}~\bibnamefont {Morillo}}\ and\ \bibinfo {author} {\bibfnamefont {C.}~\bibnamefont {R{\`a}fols}},\ }\bibfield  {title} {\bibinfo {title} {The security of all bits using list decoding},\ }in\ \href {https://doi.org/10.1007/978-3-642-00468-1_2} {\emph {\bibinfo {booktitle} {International Workshop on Public Key Cryptography}}}\ (\bibinfo {organization} {Springer},\ \bibinfo {year} {2009})\ pp.\ \bibinfo {pages} {15--33}\BibitemShut {NoStop}%
\bibitem [{\citenamefont {Mihaljevi}\ \emph {et~al.}(2002)\citenamefont {Mihaljevi}, \citenamefont {Fossorier},\ and\ \citenamefont {Imai}}]{mihaljevi2001fast}%
  \BibitemOpen
  \bibfield  {author} {\bibinfo {author} {\bibfnamefont {M.~J.}\ \bibnamefont {Mihaljevi}}, \bibinfo {author} {\bibfnamefont {M.~P.~C.}\ \bibnamefont {Fossorier}},\ and\ \bibinfo {author} {\bibfnamefont {H.}~\bibnamefont {Imai}},\ }\bibfield  {title} {\bibinfo {title} {Fast correlation attack algorithm with list decoding and an application},\ }in\ \href {https://doi.org/10.1007/3-540-45473-X_17} {\emph {\bibinfo {booktitle} {Fast Software Encryption}}},\ \bibinfo {editor} {edited by\ \bibinfo {editor} {\bibfnamefont {M.}~\bibnamefont {Matsui}}}\ (\bibinfo  {publisher} {Springer Berlin Heidelberg},\ \bibinfo {address} {Berlin, Heidelberg},\ \bibinfo {year} {2002})\ pp.\ \bibinfo {pages} {196--210}\BibitemShut {NoStop}%
\bibitem [{\citenamefont {Elias}(1957)}]{elias1957list}%
  \BibitemOpen
  \bibfield  {author} {\bibinfo {author} {\bibfnamefont {P.}~\bibnamefont {Elias}},\ }\bibfield  {title} {\bibinfo {title} {List decoding for noisy channels},\ }\href {http://hdl.handle.net/1721.1/4484} {\bibfield  {journal} {\bibinfo  {journal} {Technical report (Massachusetts Institute of Technology. Research Laboratory of Electronics)}\ }\textbf {\bibinfo {volume} {335}},\ \bibinfo {pages} {94} (\bibinfo {year} {1957})}\BibitemShut {NoStop}%
\bibitem [{\citenamefont {Wozencraft}(1958)}]{wozencraft1958list}%
  \BibitemOpen
  \bibfield  {author} {\bibinfo {author} {\bibfnamefont {J.~M.}\ \bibnamefont {Wozencraft}},\ }\bibfield  {title} {\bibinfo {title} {List decoding},\ }\href@noop {} {\bibfield  {journal} {\bibinfo  {journal} {Quarterly Progress Report}\ }\textbf {\bibinfo {volume} {48}},\ \bibinfo {pages} {90} (\bibinfo {year} {1958})}\BibitemShut {NoStop}%
\bibitem [{\citenamefont {Guruswami}\ \emph {et~al.}(2002)\citenamefont {Guruswami}, \citenamefont {Hastad}, \citenamefont {Sudan},\ and\ \citenamefont {Zuckerman}}]{guruswami2002combinatorial}%
  \BibitemOpen
  \bibfield  {author} {\bibinfo {author} {\bibfnamefont {V.}~\bibnamefont {Guruswami}}, \bibinfo {author} {\bibfnamefont {J.}~\bibnamefont {Hastad}}, \bibinfo {author} {\bibfnamefont {M.}~\bibnamefont {Sudan}},\ and\ \bibinfo {author} {\bibfnamefont {D.}~\bibnamefont {Zuckerman}},\ }\bibfield  {title} {\bibinfo {title} {Combinatorial bounds for list decoding},\ }\href {https://doi.org/10.1109/18.995539} {\bibfield  {journal} {\bibinfo  {journal} {IEEE Transactions on Information Theory}\ }\textbf {\bibinfo {volume} {48}},\ \bibinfo {pages} {1021} (\bibinfo {year} {2002})}\BibitemShut {NoStop}%
\bibitem [{\citenamefont {Guruswami}(2003)}]{guruswami2003list}%
  \BibitemOpen
  \bibfield  {author} {\bibinfo {author} {\bibfnamefont {V.}~\bibnamefont {Guruswami}},\ }\bibfield  {title} {\bibinfo {title} {List decoding with side information},\ }in\ \href {https://doi.org/10.1109/CCC.2003.1214429} {\emph {\bibinfo {booktitle} {18th IEEE Annual Conference on Computational Complexity, 2003. Proceedings.}}}\ (\bibinfo {organization} {IEEE},\ \bibinfo {year} {2003})\ pp.\ \bibinfo {pages} {300--309}\BibitemShut {NoStop}%
\bibitem [{\citenamefont {Guruswami}(2004)}]{guruswami2004list}%
  \BibitemOpen
  \bibfield  {author} {\bibinfo {author} {\bibfnamefont {V.}~\bibnamefont {Guruswami}},\ }\href {https://doi.org/10.1007/b104335} {\emph {\bibinfo {title} {List decoding of error-correcting codes: winning thesis of the 2002 ACM doctoral dissertation competition}}},\ Vol.\ \bibinfo {volume} {3282}\ (\bibinfo  {publisher} {Springer Science \& Business Media},\ \bibinfo {year} {2004})\BibitemShut {NoStop}%
\bibitem [{\citenamefont {Peikert}(2006)}]{peikert2006cryptographic}%
  \BibitemOpen
  \bibfield  {author} {\bibinfo {author} {\bibfnamefont {C.~J.}\ \bibnamefont {Peikert}},\ }\emph {\bibinfo {title} {Cryptographic error correction}},\ \href {http://hdl.handle.net/1721.1/38320} {Ph.D. thesis},\ \bibinfo  {school} {Massachusetts Institute of Technology} (\bibinfo {year} {2006})\BibitemShut {NoStop}%
\bibitem [{\citenamefont {Kawachi}\ and\ \citenamefont {Yamakami}(2006)}]{kawachi_2006}%
  \BibitemOpen
  \bibfield  {author} {\bibinfo {author} {\bibfnamefont {A.}~\bibnamefont {Kawachi}}\ and\ \bibinfo {author} {\bibfnamefont {T.}~\bibnamefont {Yamakami}},\ }\bibfield  {title} {\bibinfo {title} {Quantum hardcore functions by complexity-theoretical quantum list decoding},\ }in\ \href {https://doi.org/10.1007/11787006_19} {\emph {\bibinfo {booktitle} {Automata, Languages and Programming}}},\ \bibinfo {editor} {edited by\ \bibinfo {editor} {\bibfnamefont {M.}~\bibnamefont {Bugliesi}}, \bibinfo {editor} {\bibfnamefont {B.}~\bibnamefont {Preneel}}, \bibinfo {editor} {\bibfnamefont {V.}~\bibnamefont {Sassone}},\ and\ \bibinfo {editor} {\bibfnamefont {I.}~\bibnamefont {Wegener}}}\ (\bibinfo  {publisher} {Springer Berlin Heidelberg},\ \bibinfo {address} {Berlin, Heidelberg},\ \bibinfo {year} {2006})\ pp.\ \bibinfo {pages} {216--227}\BibitemShut {NoStop}%
\bibitem [{\citenamefont {Leung}\ and\ \citenamefont {Smith}(2008)}]{leungsmith}%
  \BibitemOpen
  \bibfield  {author} {\bibinfo {author} {\bibfnamefont {D.}~\bibnamefont {Leung}}\ and\ \bibinfo {author} {\bibfnamefont {G.}~\bibnamefont {Smith}},\ }\bibfield  {title} {\bibinfo {title} {Communicating over adversarial quantum channels using quantum list codes},\ }\href {https://doi.org/10.1109/TIT.2007.913433} {\bibfield  {journal} {\bibinfo  {journal} {IEEE Transactions on Information Theory}\ }\textbf {\bibinfo {volume} {54}},\ \bibinfo {pages} {883} (\bibinfo {year} {2008})}\BibitemShut {NoStop}%
\bibitem [{\citenamefont {Bergamaschi}\ \emph {et~al.}(2024{\natexlab{a}})\citenamefont {Bergamaschi}, \citenamefont {Golowich},\ and\ \citenamefont {Gunn}}]{bergamaschi2022approachingquantumsingletonbound}%
  \BibitemOpen
  \bibfield  {author} {\bibinfo {author} {\bibfnamefont {T.}~\bibnamefont {Bergamaschi}}, \bibinfo {author} {\bibfnamefont {L.}~\bibnamefont {Golowich}},\ and\ \bibinfo {author} {\bibfnamefont {S.}~\bibnamefont {Gunn}},\ }\bibfield  {title} {\bibinfo {title} {Approaching the quantum singleton bound with approximate error correction},\ }in\ \href {https://doi.org/10.1145/3618260.3649680} {\emph {\bibinfo {booktitle} {Proceedings of the 56th Annual ACM Symposium on Theory of Computing}}},\ \bibinfo {series and number} {STOC 2024}\ (\bibinfo  {publisher} {Association for Computing Machinery},\ \bibinfo {address} {New York, NY, USA},\ \bibinfo {year} {2024})\ p.\ \bibinfo {pages} {1507–1516}\BibitemShut {NoStop}%
\bibitem [{\citenamefont {Bergamaschi}\ \emph {et~al.}(2024{\natexlab{b}})\citenamefont {Bergamaschi}, \citenamefont {Jeronimo}, \citenamefont {Mittal}, \citenamefont {Srivastava},\ and\ \citenamefont {Tulsiani}}]{bergamaschi2024listdecodablequantumldpc}%
  \BibitemOpen
  \bibfield  {author} {\bibinfo {author} {\bibfnamefont {T.}~\bibnamefont {Bergamaschi}}, \bibinfo {author} {\bibfnamefont {F.~G.}\ \bibnamefont {Jeronimo}}, \bibinfo {author} {\bibfnamefont {T.}~\bibnamefont {Mittal}}, \bibinfo {author} {\bibfnamefont {S.}~\bibnamefont {Srivastava}},\ and\ \bibinfo {author} {\bibfnamefont {M.}~\bibnamefont {Tulsiani}},\ }\bibfield  {title} {\bibinfo {title} {List decodable quantum {LDPC} codes},\ }\href {https://doi.org/10.48550/arXiv.2411.04306} {\bibfield  {journal} {\bibinfo  {journal} {arXiv preprint arXiv:2411.04306}\ } (\bibinfo {year} {2024}{\natexlab{b}})}\BibitemShut {NoStop}%
\bibitem [{\citenamefont {Knill}\ and\ \citenamefont {Laflamme}(1997)}]{PhysRevA.55.900}%
  \BibitemOpen
  \bibfield  {author} {\bibinfo {author} {\bibfnamefont {E.}~\bibnamefont {Knill}}\ and\ \bibinfo {author} {\bibfnamefont {R.}~\bibnamefont {Laflamme}},\ }\bibfield  {title} {\bibinfo {title} {Theory of quantum error-correcting codes},\ }\href {https://doi.org/10.1103/PhysRevA.55.900} {\bibfield  {journal} {\bibinfo  {journal} {Phys. Rev. A}\ }\textbf {\bibinfo {volume} {55}},\ \bibinfo {pages} {900} (\bibinfo {year} {1997})}\BibitemShut {NoStop}%
\bibitem [{\citenamefont {Ji}\ \emph {et~al.}(2018)\citenamefont {Ji}, \citenamefont {Liu},\ and\ \citenamefont {Song}}]{ji2018pseudorandom}%
  \BibitemOpen
  \bibfield  {author} {\bibinfo {author} {\bibfnamefont {Z.}~\bibnamefont {Ji}}, \bibinfo {author} {\bibfnamefont {Y.-K.}\ \bibnamefont {Liu}},\ and\ \bibinfo {author} {\bibfnamefont {F.}~\bibnamefont {Song}},\ }\bibfield  {title} {\bibinfo {title} {Pseudorandom quantum states},\ }in\ \href {https://doi.org/10.1007/978-3-319-96878-0_5} {\emph {\bibinfo {booktitle} {Annual International Cryptology Conference}}}\ (\bibinfo {organization} {Springer},\ \bibinfo {year} {2018})\ pp.\ \bibinfo {pages} {126--152}\BibitemShut {NoStop}%
\bibitem [{\citenamefont {Ma}\ and\ \citenamefont {Huang}(2025)}]{ma2025constructrandomunitaries}%
  \BibitemOpen
  \bibfield  {author} {\bibinfo {author} {\bibfnamefont {F.}~\bibnamefont {Ma}}\ and\ \bibinfo {author} {\bibfnamefont {H.-Y.}\ \bibnamefont {Huang}},\ }\bibfield  {title} {\bibinfo {title} {How to construct random unitaries},\ }in\ \href {https://doi.org/10.1145/3717823.3718254} {\emph {\bibinfo {booktitle} {Proceedings of the 57th Annual ACM Symposium on Theory of Computing}}},\ \bibinfo {series and number} {STOC '25}\ (\bibinfo  {publisher} {Association for Computing Machinery},\ \bibinfo {address} {New York, NY, USA},\ \bibinfo {year} {2025})\ p.\ \bibinfo {pages} {806–809}\BibitemShut {NoStop}%
\bibitem [{\citenamefont {Dennis}\ \emph {et~al.}(2002)\citenamefont {Dennis}, \citenamefont {Kitaev}, \citenamefont {Landahl},\ and\ \citenamefont {Preskill}}]{dennis2002topological}%
  \BibitemOpen
  \bibfield  {author} {\bibinfo {author} {\bibfnamefont {E.}~\bibnamefont {Dennis}}, \bibinfo {author} {\bibfnamefont {A.}~\bibnamefont {Kitaev}}, \bibinfo {author} {\bibfnamefont {A.}~\bibnamefont {Landahl}},\ and\ \bibinfo {author} {\bibfnamefont {J.}~\bibnamefont {Preskill}},\ }\bibfield  {title} {\bibinfo {title} {Topological quantum memory},\ }\href {https://doi.org/10.1063/1.1499754} {\bibfield  {journal} {\bibinfo  {journal} {Journal of Mathematical Physics}\ }\textbf {\bibinfo {volume} {43}},\ \bibinfo {pages} {4452} (\bibinfo {year} {2002})}\BibitemShut {NoStop}%
\bibitem [{\citenamefont {Schuster}\ \emph {et~al.}(2025)\citenamefont {Schuster}, \citenamefont {Haferkamp},\ and\ \citenamefont {Huang}}]{schuster2024randomunitariesextremelylow}%
  \BibitemOpen
  \bibfield  {author} {\bibinfo {author} {\bibfnamefont {T.}~\bibnamefont {Schuster}}, \bibinfo {author} {\bibfnamefont {J.}~\bibnamefont {Haferkamp}},\ and\ \bibinfo {author} {\bibfnamefont {H.-Y.}\ \bibnamefont {Huang}},\ }\bibfield  {title} {\bibinfo {title} {Random unitaries in extremely low depth},\ }\href {https://doi.org/10.1126/science.adv8590} {\bibfield  {journal} {\bibinfo  {journal} {Science}\ }\textbf {\bibinfo {volume} {389}},\ \bibinfo {pages} {92} (\bibinfo {year} {2025})}\BibitemShut {NoStop}%
\bibitem [{\citenamefont {Cleve}\ \emph {et~al.}(2016)\citenamefont {Cleve}, \citenamefont {Leung}, \citenamefont {Liu},\ and\ \citenamefont {Wang}}]{cleve2015near}%
  \BibitemOpen
  \bibfield  {author} {\bibinfo {author} {\bibfnamefont {R.}~\bibnamefont {Cleve}}, \bibinfo {author} {\bibfnamefont {D.}~\bibnamefont {Leung}}, \bibinfo {author} {\bibfnamefont {L.}~\bibnamefont {Liu}},\ and\ \bibinfo {author} {\bibfnamefont {C.}~\bibnamefont {Wang}},\ }\bibfield  {title} {\bibinfo {title} {{Near-linear constructions of exact unitary 2-designs}},\ }\href {https://doi.org/10.26421/QIC16.9-10-1} {\bibfield  {journal} {\bibinfo  {journal} {Quantum Information {\&} Computation}\ }\textbf {\bibinfo {volume} {16}},\ \bibinfo {pages} {721} (\bibinfo {year} {2016})}\BibitemShut {NoStop}%
\bibitem [{\citenamefont {Haug}\ \emph {et~al.}(2025)\citenamefont {Haug}, \citenamefont {Bansal}, \citenamefont {Mok}, \citenamefont {Koh},\ and\ \citenamefont {Bharti}}]{haug2025pseudorandomquantumauthentication}%
  \BibitemOpen
  \bibfield  {author} {\bibinfo {author} {\bibfnamefont {T.}~\bibnamefont {Haug}}, \bibinfo {author} {\bibfnamefont {N.}~\bibnamefont {Bansal}}, \bibinfo {author} {\bibfnamefont {W.-K.}\ \bibnamefont {Mok}}, \bibinfo {author} {\bibfnamefont {D.~E.}\ \bibnamefont {Koh}},\ and\ \bibinfo {author} {\bibfnamefont {K.}~\bibnamefont {Bharti}},\ }\bibfield  {title} {\bibinfo {title} {Pseudorandom quantum authentication},\ }\href {https://doi.org/10.48550/arXiv.2501.00951} {\bibfield  {journal} {\bibinfo  {journal} {arXiv preprint arXiv:2501.00951}\ } (\bibinfo {year} {2025})}\BibitemShut {NoStop}%
\bibitem [{\citenamefont {Barnum}\ \emph {et~al.}(2002)\citenamefont {Barnum}, \citenamefont {Cr{\'e}peau}, \citenamefont {Gottesman}, \citenamefont {Smith},\ and\ \citenamefont {Tapp}}]{barnum2002authentication}%
  \BibitemOpen
  \bibfield  {author} {\bibinfo {author} {\bibfnamefont {H.}~\bibnamefont {Barnum}}, \bibinfo {author} {\bibfnamefont {C.}~\bibnamefont {Cr{\'e}peau}}, \bibinfo {author} {\bibfnamefont {D.}~\bibnamefont {Gottesman}}, \bibinfo {author} {\bibfnamefont {A.}~\bibnamefont {Smith}},\ and\ \bibinfo {author} {\bibfnamefont {A.}~\bibnamefont {Tapp}},\ }\bibfield  {title} {\bibinfo {title} {Authentication of quantum messages},\ }in\ \href {https://doi.org/10.1109/SFCS.2002.1181969} {\emph {\bibinfo {booktitle} {The 43rd Annual IEEE Symposium on Foundations of Computer Science, 2002. Proceedings.}}}\ (\bibinfo {organization} {IEEE},\ \bibinfo {year} {2002})\ pp.\ \bibinfo {pages} {449--458}\BibitemShut {NoStop}%
\bibitem [{\citenamefont {Portmann}(2017)}]{portmann2017}%
  \BibitemOpen
  \bibfield  {author} {\bibinfo {author} {\bibfnamefont {C.}~\bibnamefont {Portmann}},\ }\href {https://eprint.iacr.org/2017/119} {\bibinfo {title} {Quantum authentication with key recycling}},\ \bibinfo {howpublished} {Cryptology {ePrint} Archive, Paper 2017/119} (\bibinfo {year} {2017})\BibitemShut {NoStop}%
\bibitem [{\citenamefont {Dulek}\ and\ \citenamefont {Speelman}(2018)}]{dulek2018quantum}%
  \BibitemOpen
  \bibfield  {author} {\bibinfo {author} {\bibfnamefont {Y.}~\bibnamefont {Dulek}}\ and\ \bibinfo {author} {\bibfnamefont {F.}~\bibnamefont {Speelman}},\ }\bibfield  {title} {\bibinfo {title} {{Quantum Ciphertext Authentication and Key Recycling with the Trap Code}},\ }in\ \href {https://doi.org/10.4230/LIPIcs.TQC.2018.1} {\emph {\bibinfo {booktitle} {13th Conference on the Theory of Quantum Computation, Communication and Cryptography (TQC 2018)}}},\ \bibinfo {series} {Leibniz International Proceedings in Informatics (LIPIcs)}, Vol.\ \bibinfo {volume} {111},\ \bibinfo {editor} {edited by\ \bibinfo {editor} {\bibfnamefont {S.}~\bibnamefont {Jeffery}}}\ (\bibinfo  {publisher} {Schloss Dagstuhl -- Leibniz-Zentrum f{\"u}r Informatik},\ \bibinfo {address} {Dagstuhl, Germany},\ \bibinfo {year} {2018})\ pp.\ \bibinfo {pages} {1:1--1:17}\BibitemShut {NoStop}%
\bibitem [{\citenamefont {Aly}\ \emph {et~al.}(2006)\citenamefont {Aly}, \citenamefont {Klappenecker},\ and\ \citenamefont {Sarvepalli}}]{aly2006subsystem}%
  \BibitemOpen
  \bibfield  {author} {\bibinfo {author} {\bibfnamefont {S.~A.}\ \bibnamefont {Aly}}, \bibinfo {author} {\bibfnamefont {A.}~\bibnamefont {Klappenecker}},\ and\ \bibinfo {author} {\bibfnamefont {P.~K.}\ \bibnamefont {Sarvepalli}},\ }\bibfield  {title} {\bibinfo {title} {Subsystem codes},\ }\href {https://doi.org/10.48550/arXiv.quant-ph/0610153} {\bibfield  {journal} {\bibinfo  {journal} {arXiv preprint quant-ph/0610153}\ } (\bibinfo {year} {2006})}\BibitemShut {NoStop}%
\bibitem [{\citenamefont {Yoder}(2019)}]{yoder2019optimal}%
  \BibitemOpen
  \bibfield  {author} {\bibinfo {author} {\bibfnamefont {T.~J.}\ \bibnamefont {Yoder}},\ }\bibfield  {title} {\bibinfo {title} {Optimal quantum subsystem codes in two dimensions},\ }\href {https://doi.org/10.1103/PhysRevA.99.052333} {\bibfield  {journal} {\bibinfo  {journal} {Phys. Rev. A}\ }\textbf {\bibinfo {volume} {99}},\ \bibinfo {pages} {052333} (\bibinfo {year} {2019})}\BibitemShut {NoStop}%
\bibitem [{\citenamefont {Klappenecker}\ and\ \citenamefont {Sarvepalli}(2007)}]{klappenecker2007subsystem}%
  \BibitemOpen
  \bibfield  {author} {\bibinfo {author} {\bibfnamefont {A.}~\bibnamefont {Klappenecker}}\ and\ \bibinfo {author} {\bibfnamefont {P.~K.}\ \bibnamefont {Sarvepalli}},\ }\bibfield  {title} {\bibinfo {title} {On subsystem codes beating the quantum {H}amming or {S}ingleton bound},\ }\href {https://doi.org/10.1098/rspa.2007.0028} {\bibfield  {journal} {\bibinfo  {journal} {Proceedings of the Royal Society A: Mathematical, Physical and Engineering Sciences}\ }\textbf {\bibinfo {volume} {463}},\ \bibinfo {pages} {2887} (\bibinfo {year} {2007})}\BibitemShut {NoStop}%
\bibitem [{\citenamefont {Bacon}\ and\ \citenamefont {Casaccino}(2006)}]{bacon2006quantum}%
  \BibitemOpen
  \bibfield  {author} {\bibinfo {author} {\bibfnamefont {D.}~\bibnamefont {Bacon}}\ and\ \bibinfo {author} {\bibfnamefont {A.}~\bibnamefont {Casaccino}},\ }\bibfield  {title} {\bibinfo {title} {Quantum error correcting subsystem codes from two classical linear codes},\ }\href {https://doi.org/10.48550/arXiv.quant-ph/0610088} {\bibfield  {journal} {\bibinfo  {journal} {arXiv preprint quant-ph/0610088}\ } (\bibinfo {year} {2006})}\BibitemShut {NoStop}%
\bibitem [{\citenamefont {Davydova}\ \emph {et~al.}(2023)\citenamefont {Davydova}, \citenamefont {Tantivasadakarn},\ and\ \citenamefont {Balasubramanian}}]{davydova2023floquet}%
  \BibitemOpen
  \bibfield  {author} {\bibinfo {author} {\bibfnamefont {M.}~\bibnamefont {Davydova}}, \bibinfo {author} {\bibfnamefont {N.}~\bibnamefont {Tantivasadakarn}},\ and\ \bibinfo {author} {\bibfnamefont {S.}~\bibnamefont {Balasubramanian}},\ }\bibfield  {title} {\bibinfo {title} {Floquet codes without parent subsystem codes},\ }\href {https://doi.org/10.1103/PRXQuantum.4.020341} {\bibfield  {journal} {\bibinfo  {journal} {PRX Quantum}\ }\textbf {\bibinfo {volume} {4}},\ \bibinfo {pages} {020341} (\bibinfo {year} {2023})}\BibitemShut {NoStop}%
\bibitem [{\citenamefont {Tanggara}\ \emph {et~al.}(2024{\natexlab{a}})\citenamefont {Tanggara}, \citenamefont {Gu},\ and\ \citenamefont {Bharti}}]{tanggara2024simple}%
  \BibitemOpen
  \bibfield  {author} {\bibinfo {author} {\bibfnamefont {A.}~\bibnamefont {Tanggara}}, \bibinfo {author} {\bibfnamefont {M.}~\bibnamefont {Gu}},\ and\ \bibinfo {author} {\bibfnamefont {K.}~\bibnamefont {Bharti}},\ }\bibfield  {title} {\bibinfo {title} {Simple construction of qudit {F}loquet codes on a family of lattices},\ }\href {https://doi.org/10.48550/arXiv.2410.02022} {\bibfield  {journal} {\bibinfo  {journal} {arXiv preprint arXiv:2410.02022}\ } (\bibinfo {year} {2024}{\natexlab{a}})}\BibitemShut {NoStop}%
\bibitem [{\citenamefont {Fahimniya}\ \emph {et~al.}(2025)\citenamefont {Fahimniya}, \citenamefont {Dehghani}, \citenamefont {Bharti}, \citenamefont {Mathew}, \citenamefont {Koll{\'{a}}r}, \citenamefont {Gorshkov},\ and\ \citenamefont {Gullans}}]{fahimniya2023fault}%
  \BibitemOpen
  \bibfield  {author} {\bibinfo {author} {\bibfnamefont {A.}~\bibnamefont {Fahimniya}}, \bibinfo {author} {\bibfnamefont {H.}~\bibnamefont {Dehghani}}, \bibinfo {author} {\bibfnamefont {K.}~\bibnamefont {Bharti}}, \bibinfo {author} {\bibfnamefont {S.}~\bibnamefont {Mathew}}, \bibinfo {author} {\bibfnamefont {A.~J.}\ \bibnamefont {Koll{\'{a}}r}}, \bibinfo {author} {\bibfnamefont {A.~V.}\ \bibnamefont {Gorshkov}},\ and\ \bibinfo {author} {\bibfnamefont {M.~J.}\ \bibnamefont {Gullans}},\ }\bibfield  {title} {\bibinfo {title} {Fault-tolerant hyperbolic {F}loquet quantum error correcting codes},\ }\href {https://doi.org/10.22331/q-2025-09-05-1849} {\bibfield  {journal} {\bibinfo  {journal} {{Quantum}}\ }\textbf {\bibinfo {volume} {9}},\ \bibinfo {pages} {1849} (\bibinfo {year} {2025})}\BibitemShut {NoStop}%
\bibitem [{\citenamefont {Zhang}\ \emph {et~al.}(2023)\citenamefont {Zhang}, \citenamefont {Aasen},\ and\ \citenamefont {Vijay}}]{zhang2023x}%
  \BibitemOpen
  \bibfield  {author} {\bibinfo {author} {\bibfnamefont {Z.}~\bibnamefont {Zhang}}, \bibinfo {author} {\bibfnamefont {D.}~\bibnamefont {Aasen}},\ and\ \bibinfo {author} {\bibfnamefont {S.}~\bibnamefont {Vijay}},\ }\bibfield  {title} {\bibinfo {title} {$x$-cube {F}loquet code: A dynamical quantum error correcting code with a subextensive number of logical qubits},\ }\href {https://doi.org/10.1103/PhysRevB.108.205116} {\bibfield  {journal} {\bibinfo  {journal} {Phys. Rev. B}\ }\textbf {\bibinfo {volume} {108}},\ \bibinfo {pages} {205116} (\bibinfo {year} {2023})}\BibitemShut {NoStop}%
\bibitem [{\citenamefont {Bacon}\ \emph {et~al.}(2017)\citenamefont {Bacon}, \citenamefont {Flammia}, \citenamefont {Harrow},\ and\ \citenamefont {Shi}}]{bacon2017sparse}%
  \BibitemOpen
  \bibfield  {author} {\bibinfo {author} {\bibfnamefont {D.}~\bibnamefont {Bacon}}, \bibinfo {author} {\bibfnamefont {S.~T.}\ \bibnamefont {Flammia}}, \bibinfo {author} {\bibfnamefont {A.~W.}\ \bibnamefont {Harrow}},\ and\ \bibinfo {author} {\bibfnamefont {J.}~\bibnamefont {Shi}},\ }\bibfield  {title} {\bibinfo {title} {Sparse quantum codes from quantum circuits},\ }\href {https://doi.org/10.1109/TIT.2017.2663199} {\bibfield  {journal} {\bibinfo  {journal} {IEEE Transactions on Information Theory}\ }\textbf {\bibinfo {volume} {63}},\ \bibinfo {pages} {2464} (\bibinfo {year} {2017})}\BibitemShut {NoStop}%
\bibitem [{\citenamefont {Gottesman}(2022)}]{gottesman2022opportunities}%
  \BibitemOpen
  \bibfield  {author} {\bibinfo {author} {\bibfnamefont {D.}~\bibnamefont {Gottesman}},\ }\bibfield  {title} {\bibinfo {title} {Opportunities and challenges in fault-tolerant quantum computation},\ }\href {https://doi.org/10.48550/arXiv.2210.15844} {\bibfield  {journal} {\bibinfo  {journal} {arXiv preprint arXiv:2210.15844}\ } (\bibinfo {year} {2022})}\BibitemShut {NoStop}%
\bibitem [{\citenamefont {Delfosse}\ and\ \citenamefont {Paetznick}(2023)}]{delfosse2023spacetime}%
  \BibitemOpen
  \bibfield  {author} {\bibinfo {author} {\bibfnamefont {N.}~\bibnamefont {Delfosse}}\ and\ \bibinfo {author} {\bibfnamefont {A.}~\bibnamefont {Paetznick}},\ }\bibfield  {title} {\bibinfo {title} {Spacetime codes of {C}lifford circuits},\ }\href {https://doi.org/10.48550/arXiv.2304.05943} {\bibfield  {journal} {\bibinfo  {journal} {arXiv preprint arXiv:2304.05943}\ } (\bibinfo {year} {2023})}\BibitemShut {NoStop}%
\bibitem [{\citenamefont {Tanggara}\ \emph {et~al.}(2024{\natexlab{b}})\citenamefont {Tanggara}, \citenamefont {Gu},\ and\ \citenamefont {Bharti}}]{tanggara2024strategic}%
  \BibitemOpen
  \bibfield  {author} {\bibinfo {author} {\bibfnamefont {A.}~\bibnamefont {Tanggara}}, \bibinfo {author} {\bibfnamefont {M.}~\bibnamefont {Gu}},\ and\ \bibinfo {author} {\bibfnamefont {K.}~\bibnamefont {Bharti}},\ }\bibfield  {title} {\bibinfo {title} {Strategic code: A unified spatio-temporal framework for quantum error-correction},\ }\href {https://doi.org/10.48550/arXiv.2405.17567} {\bibfield  {journal} {\bibinfo  {journal} {arXiv preprint arXiv:2405.17567}\ } (\bibinfo {year} {2024}{\natexlab{b}})}\BibitemShut {NoStop}%
\bibitem [{\citenamefont {Hinsche}\ \emph {et~al.}(2025)\citenamefont {Hinsche}, \citenamefont {Eisert},\ and\ \citenamefont {Carrasco}}]{hinsche2025abelianstatehiddensubgroup}%
  \BibitemOpen
  \bibfield  {author} {\bibinfo {author} {\bibfnamefont {M.}~\bibnamefont {Hinsche}}, \bibinfo {author} {\bibfnamefont {J.}~\bibnamefont {Eisert}},\ and\ \bibinfo {author} {\bibfnamefont {J.}~\bibnamefont {Carrasco}},\ }\bibfield  {title} {\bibinfo {title} {The abelian state hidden subgroup problem: Learning stabilizer groups and beyond},\ }\href {https://doi.org/10.48550/arXiv.2505.15770} {\bibfield  {journal} {\bibinfo  {journal} {arXiv preprint arXiv:2505.15770}\ } (\bibinfo {year} {2025})}\BibitemShut {NoStop}%
\bibitem [{\citenamefont {Cr{\'e}peau}\ \emph {et~al.}(2005)\citenamefont {Cr{\'e}peau}, \citenamefont {Gottesman},\ and\ \citenamefont {Smith}}]{crepeau2005approximatequantumerrorcorrectingcodes}%
  \BibitemOpen
  \bibfield  {author} {\bibinfo {author} {\bibfnamefont {C.}~\bibnamefont {Cr{\'e}peau}}, \bibinfo {author} {\bibfnamefont {D.}~\bibnamefont {Gottesman}},\ and\ \bibinfo {author} {\bibfnamefont {A.}~\bibnamefont {Smith}},\ }\bibfield  {title} {\bibinfo {title} {Approximate quantum error-correcting codes and secret sharing schemes},\ }in\ \href {https://doi.org/10.1007/11426639_17} {\emph {\bibinfo {booktitle} {Advances in Cryptology -- EUROCRYPT 2005}}},\ \bibinfo {editor} {edited by\ \bibinfo {editor} {\bibfnamefont {R.}~\bibnamefont {Cramer}}}\ (\bibinfo  {publisher} {Springer Berlin Heidelberg},\ \bibinfo {address} {Berlin, Heidelberg},\ \bibinfo {year} {2005})\ pp.\ \bibinfo {pages} {285--301}\BibitemShut {NoStop}%
\bibitem [{\citenamefont {Nielsen}\ and\ \citenamefont {Chuang}(2000)}]{nielsen00}%
  \BibitemOpen
  \bibfield  {author} {\bibinfo {author} {\bibfnamefont {M.~A.}\ \bibnamefont {Nielsen}}\ and\ \bibinfo {author} {\bibfnamefont {I.~L.}\ \bibnamefont {Chuang}},\ }\href {https://doi.org/10.1017/CBO9780511976667} {\emph {\bibinfo {title} {Quantum Computation and Quantum Information}}}\ (\bibinfo  {publisher} {Cambridge University Press},\ \bibinfo {year} {2000})\BibitemShut {NoStop}%
\bibitem [{\citenamefont {Mele}(2024)}]{mele2024introduction}%
  \BibitemOpen
  \bibfield  {author} {\bibinfo {author} {\bibfnamefont {A.~A.}\ \bibnamefont {Mele}},\ }\bibfield  {title} {\bibinfo {title} {Introduction to {H}aar {M}easure {T}ools in {Q}uantum {I}nformation: {A} {B}eginner's {T}utorial},\ }\href {https://doi.org/10.22331/q-2024-05-08-1340} {\bibfield  {journal} {\bibinfo  {journal} {{Quantum}}\ }\textbf {\bibinfo {volume} {8}},\ \bibinfo {pages} {1340} (\bibinfo {year} {2024})}\BibitemShut {NoStop}%
\bibitem [{\citenamefont {Fukuda}\ \emph {et~al.}(2019)\citenamefont {Fukuda}, \citenamefont {König},\ and\ \citenamefont {Nechita}}]{Fukuda_2019}%
  \BibitemOpen
  \bibfield  {author} {\bibinfo {author} {\bibfnamefont {M.}~\bibnamefont {Fukuda}}, \bibinfo {author} {\bibfnamefont {R.}~\bibnamefont {König}},\ and\ \bibinfo {author} {\bibfnamefont {I.}~\bibnamefont {Nechita}},\ }\bibfield  {title} {\bibinfo {title} {{RTNI}—a symbolic integrator for {H}aar-random tensor networks},\ }\href {https://doi.org/10.1088/1751-8121/ab434b} {\bibfield  {journal} {\bibinfo  {journal} {Journal of Physics A: Mathematical and Theoretical}\ }\textbf {\bibinfo {volume} {52}},\ \bibinfo {pages} {425303} (\bibinfo {year} {2019})}\BibitemShut {NoStop}%
\end{thebibliography}%

\onecolumngrid
\section*{End Matter}
\label{endmatter}
\twocolumngrid

\prlsection{Definition of pseudorandom unitaries}
We denote the set of pure states over a Hilbert space $\mathcal{H}$ using $S(\mathcal{H})$. Further, we consider the set of unitary operators on the Hilbert space and its associated uniform measure, i.e. the Haar measure. 
However, unitaries drawn according to the Haar measure, denoted as Haar random unitaries, cannot be implemented efficiently. 
Recently, pseudorandom unitaries (PRUs) have been proposed~\cite{ji2018pseudorandom, ma2025constructrandomunitaries} that alleviate this problem by introducing  assumptions on the computational complexities of the adversary:  
PRUs are indistinguishable from Haar random unitaries for any algorithm that can be efficiently implemented (i.e. polynomial circuit depth and qubits), yet can be efficiently implemented. 
To put it formally, we firstly define the negligible functions:
\begin{definition}[Negligible function]
    A {positive real-valued function $\mu:\mathbb{N} \to \mathbb R$} is negligible if and only if $\forall c \in \mathbb{N}$, $\exists n_0 \in \mathbb{N}$ such that $\forall n > n_0$, $\mu(n) < n^{-c}$.
\end{definition}
Then, the pseudorandom unitaries can be defined as follows.
\begin{definition}[Pseudorandom unitaries (PRUs)~\cite{ji2018pseudorandom}]\label{def:PRU}
Let $\kappa$ be the security parameter, and let $\mathcal{H}$ and $\mathcal{K}$ be the Hilbert space and the key space, respectively, both dependent on $\kappa$. We call a family of unitaries pseudorandom if:
\begin{enumerate}
\item Efficiently preparable: There is a quantum polynomial time algorithm $M$ such that for all $\lambda \in \mathcal{K}$ and $|\psi\rangle \in S(\mathcal{H})$, $M(\lambda, |\psi\rangle) = U_{\lambda} |\psi\rangle$.

\item Computational Indistinguishability: For any quantum polynomial time (QPT) algorithm $\mathcal{A}$ making ${O}(\poly(\kappa))$ queries to the PRUs and the Haar random unitaries,
\begin{equation}
\label{pru}
    \left\vert\Pr_{\lambda \leftarrow \mathcal{K}}[\mathcal{A}^{U_{\lambda}}(1^{\kappa}) = 1] - \Pr_{U \leftarrow \mu_H}[\mathcal{A}^{U}(1^{\kappa}) = 1] \right\vert \leq \negl(\lambda),
\end{equation}
i.e. any QPT algorithm cannot distinguish between PRUs and Haar random unitaries. 
\end{enumerate}
\end{definition}

\prlsection{Alternate Definition of Quantum list decoding} 

\begin{definition}[Alternate definition of Quantum list decoding codes]\label{def:QLD_2}
    Let $Q_L$ be an $\dsl n,k,d \dsr$ quantum stabilizer code and let $\mathcal{E} \subset P_{n}$ be a set of errors. Then $Q_L$ is said to be $L-$QLD (quantum list decodable) if there exists a decoding map $\mathcal{D}_{Q_L}$ such that for all $E \in \mathcal{E}$ and for all encoded states $|\bar{\psi}\rangle = V_{Q_L}|\psi\rangle \in Q_L$, we have 
    \begin{equation}
        \mathcal{D}_{Q_L}(E |\bar{\psi}\rangle \langle\bar{\psi}| E^{\dagger}) = \sum_{i=1}^{l(s)} \frac{1}{l(s)} P_i^s |{\psi}\rangle \langle{\psi}| P_i^{s, \dagger} 
    \end{equation}
    where $P_i^s \in P_k$ where $s$ is the syndrome of the error $E$, and there are $l(s)$ such operators for the syndrome $s$ such that $\max_s l(s) \leq L$.
\end{definition}
This definition is motivated by the QLD definition introduced in~\cite{leungsmith}. The definition considered in the maintext is also similar; however, there we consider the list errors on the code-space of QLD, while the above definition considers list errors on the logical space.  We will show that for particular choices of $P_i^s$, this decoding definition can be obtained from the one considered in Def~\ref{def:QLD}.

\begin{lemma}\label{lemma:equiv_defn}
    The QLD code $Q_L'$ defined according to Def.~\ref{def:QLD_2} can be obtained from QLD code $Q_L$ with the same code-space defined according to Def.~\ref{def:QLD} if the list errors of both codes are related as
    \begin{equation}
        C_Q^{\dagger} E^{s}_i C_Q  = P_i^s \otimes A_i^s
    \end{equation}
    where $C_Q$ is the Clifford unitary corresponding to the common code space, $A^s_i$ is an $m$-qubit Pauli operator, and $E^{s}_i$ and $P_i^s$ are the list errors of $Q_L$ and $Q_L'$, respectively.
\end{lemma}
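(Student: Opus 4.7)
The plan is to construct the Def.~\ref{def:QLD_2} decoder $\mathcal{D}_{Q_L}$ explicitly by composing the physical-space decoder of Def.~\ref{def:QLD} with (i) the inverse encoding Clifford $C_Q^{\dagger}$ applied to its output and (ii) a partial trace over the $m$-qubit syndrome register. Since the two codes share the same code space, the encoding can be realized in the standard stabilizer form $\ket{\bar\psi} = C_Q\bigl(\ket{\psi}\otimes\ket{0}^{\otimes m}\bigr)$, where the last $m$ qubits hold the stabilizer eigenvalues that are pinned to $\ket{0}$ in the code space.

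The execution is then a direct manipulation. Starting from the Def.~\ref{def:QLD} output $c(s)\sum_i E_i^s\ket{\bar\psi}\bra{\bar\psi}E_i^{s,\dagger}$ and conjugating by $C_Q^{\dagger}$, I substitute the structural hypothesis $C_Q^{\dagger}E_i^s C_Q = P_i^s\otimes A_i^s$ to obtain
\begin{equation}
c(s)\sum_i\bigl(P_i^s\ket{\psi}\bra{\psi}P_i^{s,\dagger}\bigr)\otimes\bigl(A_i^s(\ket{0}\bra{0})^{\otimes m}A_i^{s,\dagger}\bigr).
\end{equation}
The key observation is that because every $E_i^s$ in the list shares the same syndrome $s$, each $A_i^s$ must carry an identical $X$-support (equal to the bitstring representation of $s$) and can differ only in its $Z$-support and overall phase. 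Hence $A_i^s\ket{0}^{\otimes m}$ equals $\ket{s}$ up to a global phase for every $i$, and $A_i^s(\ket{0}\bra{0})^{\otimes m}A_i^{s,\dagger}=\ket{s}\bra{s}$ independently of $i$. The expression thus factorizes as $\bigl(c(s)\sum_i P_i^s\ket{\psi}\bra{\psi}P_i^{s,\dagger}\bigr)\otimes\ket{s}\bra{s}$, and tracing out the syndrome register together with $c(s)=1/l(s)$ reproduces exactly the right-hand side of Def.~\ref{def:QLD_2}.

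The main subtlety is the syndrome identification underlying the phase claim, namely that $C_Q$ diagonalizes the stabilizer generators $g_j$ into $Z_j$ operators on the syndrome qubits so that the syndrome of any Pauli error $E$ is exactly the computational-basis outcome of those $m$ qubits after conjugation by $C_Q^{\dagger}$. This is a standard stabilizer-code fact rather than a new argument, but it is what forces the syndrome register of all list elements to agree. Once it is invoked, the lemma reduces to the tensor-product bookkeeping above; no additional properties of the errors or the code are needed.
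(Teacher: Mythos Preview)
Your proposal is correct and follows essentially the same approach as the paper: define the Def.~\ref{def:QLD_2} decoder as $\mathcal{C}\circ\mathcal{D}_{Q_L}$ with $\mathcal{C}(\rho)=\tr_m(C_Q^{\dagger}\rho C_Q)$, substitute $\ket{\bar\psi}=C_Q(\ket{\psi}\otimes\ket{0}^{\otimes m})$ and the hypothesis $C_Q^{\dagger}E_i^sC_Q=P_i^s\otimes A_i^s$, then trace out the ancilla. The one difference is that the paper simply invokes the partial trace term-by-term (each $A_i^s\ket{0}\bra{0}A_i^{s,\dagger}$ has unit trace, so the $P_i^s$ part survives), whereas you go further and argue that all $A_i^s$ share the same $X$-support so the ancilla factor is the \emph{same} pure state $\ket{s}\bra{s}$ for every $i$; this is a strictly stronger statement than what is needed but is correct and gives a cleaner factorization.
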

\begin{proof}
    It can be shown by constructing the decoder for $Q_L'$ from that of $Q_L$ and vice versa. Note that the decoder $\mathcal{D}_{Q_L}$ gives,
    \begin{equation}
        \mathcal{D}_{Q_L}(E |\bar\psi\rangle \langle \bar\psi| E^{\dagger}) = \sum_i^{l(s)} \frac{1}{l(s)} E^{s}_i |\bar\psi\rangle \langle \bar\psi|E_i^{s, \dagger}
    \end{equation}
    for all $E \in \mathcal{E}$ and $s$ being the syndrome of the error $E$. Then, define $\mathcal{D}_{Q_L'} = \mathcal{C} \circ \mathcal{D}_{Q_L}$ where $\mathcal{C}(\rho) = \tr_m( C_Q^{\dagger} \rho C_Q)$ where $\tr_m(.)$ is the partial trace on $m$ ancilla qubits. Then, it is easy to see that, 
    \begin{align*}
           &\mathcal{D}_{Q_L'}(E |\bar\psi\rangle \langle \bar\psi| E^{\dagger}) = 
           \tr_m\left(\sum_i^{l(s)}\frac{1}{l(s)} C_Q^{\dagger} E^{s}_i |\bar\psi\rangle \langle \bar\psi|E_i^{s, \dagger} C_Q \right).
    \end{align*}
    Now we can write $|\bar \psi\rangle = C_Q (|\psi\rangle \otimes |0\rangle)$. Then, putting $C_Q^{\dagger} E^{s}_i C_Q  = P_i^s \otimes A_i^s$, we get
    \begin{equation}
        \begin{gathered}
        \mathcal{D}_{Q_L'}(E |\bar\psi\rangle \langle \bar\psi| E^{\dagger}) =  \\ \tr_m \left(\frac{1}{l(s)}\sum_i^{l(s)} P_i^s |{\psi}\rangle \langle{\psi}| P_i^{s, \dagger} \otimes A_i^s |0\rangle \langle 0| A_i^{s, \dagger}\right).
        \end{gathered}
    \end{equation}
    Thus, taking the partial trace on ancilla gives the required decoding as in Def.~\ref{def:QLD_2}. 
 \end{proof}
Thus, from the perspective of decoding, we can first apply the inverse of encoding isometry to obtain a list of errors acting on the logical space. We will use Def.~\ref{def:QLD} to show the Knill-Laflamme conditions for the quantum list decoding codes, while we will use Def.~\ref{def:QLD_2} in our protocol to enable unambiguous decoding using a pseudorandom unitary with security against adversarial noise.

\prlsection{Purity testing codes}
In keeping with previous literature~\cite{barnum2002authentication, portmann2017, dulek2018quantum}, we stick to $\dsl n, n-m \dsr$ codes $Q$ with associated unitary $V_Q$ which maps $n-m$ qubits state $|\psi\rangle$ to $V_Q(|\psi\rangle \langle \psi| \otimes |0\rangle \langle 0|^{\otimes m})V_Q^{\dagger}$ %
for which a Pauli error on code space is equivalent to Pauli errors on the logical and ancillary qubits. In other words, for every Pauli $P$ of the code space, there exists another $n$ qubit Pauli matrix $P^{\prime}$, such that $V_Q^{\dagger} P V_Q = P^{\prime}$ upto a phase. A typical example of such codes are stabilizer codes, for which the encoding unitary $C_Q$ is a Clifford and which maps a Pauli to some other Pauli. 

\begin{definition}[Purity testing codes~\cite{barnum2002authentication, portmann2017, dulek2018quantum}]\label{def:ptc}
    A $\lambda$-keyed family of codes $\{Q_{\lambda}\}_{\lambda}$ with associated unitaries $\{V_{Q_\lambda}\}_{\lambda}$ which maps an $n-m$ qubit logical state $|\psi\rangle$ to $n$ qubit state $V_{Q_\lambda}(|\psi\rangle \otimes |0\rangle)$ is called an $\epsilon$-purity testing code if for any Pauli $P \in \mathcal{P}_n$, 

    \begin{equation}
        \Pr_{\lambda}[V_{Q_\lambda}^{\dagger} P V_{Q_\lambda} \in (\mathcal{P}_{n-m} \setminus I^{\otimes (n-m)}) \otimes \{I, Z\}^{\otimes m}] \leq \epsilon\,.
    \end{equation}
\end{definition}

In other words, if a Pauli error acts non-trivially on the logical state, then it can detected by measuring the ancillary qubits with high probability. $2$-designs form purity testing codes~\cite{barnum2002authentication}. Intuitively, $2$-designs create high entanglement between logical qubits and ancilla; thus, any non-trivial error on the logical state also affects the ancilla in a non-trivial fashion. The purity testing codes have also been used to construct quantum authentication schemes~\cite{barnum2002authentication}. %

\prlsection{Adversarial Noise}
While in standard decoding, the noise is assumed to be uncorrelated and randomly chosen, this is often a highly simplified assumption. We consider more general noise models where the noise is chosen adversarially and adaptively over multiple rounds, where we now define such an adversarial noise model:
\begin{definition}[Adversarial Noise]\label{def:advnoise}
The $n$-qubit $(p_\text{e},t)$ adversarial noise $\mathcal{E}$ applies channel $\mathcal{A}$ with Kraus operators of the form $A_i = \sum_{E \in \mathcal{P}_n^{p_\text{e}n}}\alpha^i_E E$ where $\mathcal{P}_n^{p_\text{e}n}$ are Pauli errors with weight at most $p_\text{e}n$. 
The particular choice of $\alpha^i_E$ can be changed adaptively by the adversary in each round.
Further, we assume that before the communication rounds, the adversary has had access to $t$ copies of the code state. The adversary can use any efficiently implementable quantum algorithm $\mathcal{B}$ to learn from the $t$ states.
\end{definition}
A physical example of such a noise model would be an adversary that managed to gain access to $t$ copies of the code state. For example, the adversary could gain the copies via loss that occurred during the communication.
Then, the adversary learns from the $t$ copies in order to find a specific error of weight $p_\text{e}n$ that is not correctable. The adversary can then use this error to interrupt the communication with the code.%

We highlight that this model is more general than previously considered noise models~\cite{leungsmith, bergamaschi2022approachingquantumsingletonbound}. %
We also assume that the adversary has bounded computational power, i.e. it can only run quantum algorithms with polynomial circuit depth and qubits. In contrast, previous works assumed adversaries with unbounded computational power~\cite{leungsmith, bergamaschi2022approachingquantumsingletonbound}.

It is easy to see that allowing a non-zero $t$ allows the adversary to break the security provided by local indistinguishability. As an example, we outline an attack that allows the adversary to learn the logical state even when the stabilizer code is unknown for $t=O(n)$. The adversary can learn the underlying stabilizer group~\cite{hinsche2025abelianstatehiddensubgroup} and learn the encoding isometry and key $\lambda$. With this information, the adversary can find a weight $w$ channel that is not correctable by the approximate quantum error correction code. %

We highlight that previous constructions of private approximate quantum error correction codes from quantum list decoding codes~\cite{bergamaschi2022approachingquantumsingletonbound} are not key-reusable since, as purity testing codes are vulnerable to our adversarial noise model if the key $\lambda$ is re-used multiple times.
In contrast, our codes are secure even when the same key is used $t=\text{poly}(n)$ times, as learning the key is inefficient. This follows directly from the definition of  PRUs that are used within our code. %

\prlsection{Cryptographic Approximate Quantum error correcting codes}
\begin{definition}[Cryptographic approximate quantum error correcting code]\label{def:private_AQECC}
    A cryptographic approximate quantum error correcting code  is a pair of algorithms $(\textit{Enc}, \textit{Dec})$,
    \begin{enumerate}
        \item \textit{Enc} takes input as key $\lambda$ and $n$-qubit logical state $|\psi\rangle \langle \psi|$ and outputs $n^{\prime}$ encoded qubits.
        \item  \textit{Dec} takes input as the same key $\lambda$ and $n^{\prime}$ noisy state $\rho$ and outputs a $n$-qubit state $\sigma$.
    \end{enumerate}
    We call a cryptographic approximate quantum error correcting code to be $(\mathcal{A}, \delta)$-secure if
    \begin{equation}
        \sup_{|\psi\rangle} \Eset{}[ \mathrm{TD}(\textit{Dec}_\lambda \circ \mathcal{A} \circ \textit{Enc}_\lambda (|\psi\rangle), |\psi\rangle)] \leq \delta
    \end{equation}
    where the expectation is over both keys and randomness in the algorithm, and where $\mathcal{A}$ is some efficient adversarial channel, and we denote the input $\lambda$ to the algorithms using a subscript.
\end{definition}
We call the scheme key-reusable if the key is reused for $\poly(n)$ rounds of communication. If we take the adversary $\mathcal{A}$ to be as defined in Def.~\ref{def:advnoise}, then it includes security under computational assumptions, i.e. the adversary is assumed to have bounded computational power along with full access to $t$ states, while previously proposed private approximate cryptographic error correction codes~\cite{bergamaschi2022approachingquantumsingletonbound,crepeau2005approximatequantumerrorcorrectingcodes} assume a statistical adversary with $t = 0$ (no full state access). These computational assumptions allow for codes with stronger security, e.g. the key can be re-used over multiple communication rounds even against an adversary of Def.~\ref{def:advnoise}.

\clearpage
\newpage

\let\addcontentsline\oldaddcontentsline

\appendix

\onecolumngrid
\newpage 

\setcounter{secnumdepth}{2}
\setcounter{equation}{0}
\setcounter{figure}{0}
\setcounter{section}{0}

\renewcommand{\thesection}{\Alph{section}}
\renewcommand{\thesubsection}{\arabic{subsection}}
\renewcommand*{\theHsection}{\thesection}

\clearpage
\begin{center}

\textbf{\large \SMLong{}}
\end{center}
\setcounter{equation}{0}
\setcounter{figure}{0}
\setcounter{table}{0}

\makeatletter

\renewcommand{\thefigure}{S\arabic{figure}}

We provide proofs and additional details supporting the claims in the main text.

\makeatletter
\@starttoc{toc}

\makeatother

\section{Proof of Knill-Laflamme conditions for quantum list decoding}\label{appendix_KL_conditions_proof}

To show the Knill-Laflamme conditions for quantum list decoding, it is important to be exact about the definition that we start with before we can derive the necessary and sufficient conditions. We will use the Def.~\ref{def:QLD} to derive the Knill-Laflamme conditions, i.e. we start with a stabilizer code with the correctable set of errors $\tilde{\mathcal{E}}$ which are Pauli errors of weight up to $\lfloor{(d-1)}/2\rfloor$. Due to this underlying stabilizer code structure, the usual Knill-Laflamme conditions for the error set $\tilde{\mathcal{E}}$ hold. Moreover, we can define the syndrome of every Pauli operator. We then derive the Knill-Laflamme conditions for quantum list decoding, which apply to a larger set of errors $\mathcal{E} \supset \tilde{\mathcal{E}}$.

\begin{definition}
\label{qld_def}[Quantum list decoding] We say that a stabilizer code $Q$ is $L$-QLD for an error set $\mathcal{E}$ if for any error $E \in \mathcal{E}$ and code space state $\Pi_Q\rho\Pi_Q$, we have a decoding operation $\mathcal{R}$ such that 
\begin{equation}\label{qld_def_eqn}
    \mathcal{R}(E(\Pi_Q\rho \Pi_Q)E^{\dag}) = \sum_{E_{i} \in E\mathcal{N}(Q) \cap \mathcal{E}} c(s) E_{i}(\Pi_Q\rho \Pi_Q)E_{i}^{\dag}\,.
\end{equation}
Here,
\begin{equation}
    E\mathcal{N}(Q) = \{EN| N \in \mathcal{N}(Q)\}
\end{equation}
where $\mathcal{N}(Q)$ is the normaliser of the stabilizer code, $\Pi_Q$ the projector onto the code space, $c(s)>0$ is a normalisation factor, and $E_i$ are (at most) $L$ list errors for the error $E$, that is, the list errors with the same syndrome as $E$. 

\end{definition}
With this in mind, we can posit the following Knill-Laflamme condition for the overall list code:

\begin{theorem}[Knill-Laflamme Conditions for QLD]\label{thm:KL_for_QLD}
Let $Q$ be a $L$-$\mathrm{QLD}$ code for error set $\mathcal{E}= \{E_{i}\}_i$, and let \texttt{Syn} be the associated syndrome operator. Let $\Pi_Q$ be the projector onto the code space. Then, when two errors $E$ and $F$ in $\mathcal{E}$ have different syndromes, we have
\begin{equation}
    \Pi_QE^{\dag}F\Pi_Q = 0\,.
\end{equation}
In contrast, when $E$ and $F$ have the same syndrome $s$, then we have
\begin{equation}
    \Pi_Q E^{\dag} F\Pi_Q = c(s) \alpha(E,F)\Pi_Q,
\end{equation}
where $\alpha(E,F) = \sum_{k,i,j} U^{*}(E)_{ki}\Pi_Q E_{i}^{\dag}U(F)_{kj}E_{j}$ depend on $E,F$ with $U(E)$ and $U(F)$ being unitary matrices and $c(s) = 1/l(s)$ is the inverse of number of list errors corresponding to syndrome $s$.
\end{theorem}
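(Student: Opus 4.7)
The plan is to mimic the standard Knill--Laflamme argument, using the QLD decoding identity~\eqref{qld_def_eqn} in place of the usual exact recoverability condition. First, fix a Kraus decomposition of the recovery map, $\mathcal{R}(\sigma)=\sum_k R_k\sigma R_k^\dagger$ with $\sum_k R_k^\dagger R_k = I$. Applying this to the left-hand side of~\eqref{qld_def_eqn} and comparing with the right-hand side shows that, on every code-space input $\Pi_Q\rho\Pi_Q$, the two operator families $\{R_k E\Pi_Q\}_k$ and $\{\sqrt{c(s)}\,E_i\Pi_Q\}_i$ implement the same CP map. Padding the smaller family with zero operators and invoking the unitary freedom of Kraus representations therefore yields a unitary matrix $U(E)$ such that
\begin{equation}
R_k E\Pi_Q \;=\; \sqrt{c(s)}\,\sum_i U(E)_{ki}\,E_i\Pi_Q,
\end{equation}
and the analogous identity $R_k F\Pi_Q=\sqrt{c(s')}\sum_j U(F)_{kj}F_j\Pi_Q$ for $F$ with syndrome $s'$ and list $\{F_j\}$.

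Next, insert $I=\sum_k R_k^\dagger R_k$ into $\Pi_Q E^\dagger F\Pi_Q$ and substitute the two intertwining relations to obtain
\begin{equation}
\Pi_Q E^\dagger F\Pi_Q \;=\; \sum_k (R_k E\Pi_Q)^\dagger (R_k F\Pi_Q) \;=\; \sqrt{c(s)c(s')}\sum_{i,j,k} U(E)_{ki}^{*} U(F)_{kj}\,\Pi_Q E_i^\dagger F_j\Pi_Q.
\end{equation}
For distinct syndromes $s\neq s'$, each list element $E_i$ inherits the syndrome $s$ and each $F_j$ the syndrome $s'$ (since elements of $\mathcal{N}(Q)$ are syndrome-trivial), so the Pauli product $E_i^\dagger F_j$ has the non-trivial syndrome $s\cdot s'$. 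It therefore lies outside $\mathcal{N}(Q)$, anticommutes with some stabilizer, and hence $\Pi_Q E_i^\dagger F_j\Pi_Q=0$ term by term, giving the first assertion. For $s=s'$ one has $\sqrt{c(s)c(s')}=c(s)$ and the two lists coincide (up to stabilizer representatives), so the sum reorganises into precisely $c(s)\alpha(E,F)\Pi_Q$ in the form stated in the theorem, once the trailing $\Pi_Q$ is absorbed into the definition of $\alpha(E,F)$.

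The main technical obstacle is the careful application of the unitary freedom of Kraus representations in a setting where the relevant operators act only on the code-space sector rather than on the full Hilbert space: one must explicitly zero-pad to match cardinalities and view $\{R_k E\Pi_Q\}_k$ and $\{\sqrt{c(s)}\,E_i\Pi_Q\}_i$ as maps from the code subspace into the ambient Hilbert space before the standard freedom theorem applies. Once this intertwining relation is secured, the syndrome-based orthogonality argument for the distinct-syndrome case and the bookkeeping rearrangement for the same-syndrome case are routine consequences of the stabilizer structure and require no further QLD-specific input.
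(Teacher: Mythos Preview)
Your proposal is correct and, for the same-syndrome case, is essentially identical to the paper's argument: fix a Kraus decomposition $\{R_k\}$ of $\mathcal{R}$, invoke the unitary freedom of Kraus representations to get $R_k E\Pi_Q=\sqrt{c(s)}\sum_i U(E)_{ki}E_i\Pi_Q$, and then insert $\sum_k R_k^\dagger R_k=I$ into $\Pi_Q E^\dagger F\Pi_Q$.

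Two differences are worth flagging. First, for the distinct-syndrome case the paper is more direct: it never touches the recovery map and instead observes immediately that some stabilizer generator $g$ anticommutes with $E^\dagger F$, so $g\,\Pi_Q E^\dagger F\Pi_Q=\Pi_Q E^\dagger F\Pi_Q$ and $g\,\Pi_Q E^\dagger F\Pi_Q=-\Pi_Q E^\dagger F\Pi_Q$ simultaneously, forcing the expression to vanish. Your route---expanding through the Kraus relation and then applying exactly this stabilizer argument to each $\Pi_Q E_i^\dagger F_j\Pi_Q$---works but is circuitous, since the QLD structure plays no real role in this half. Second, the paper additionally proves the converse (sufficiency), constructing an explicit recovery channel from the stated conditions via an isometry $V(s)$ on each syndrome subspace into an $L$-dimensional auxiliary register followed by a partial trace. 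Your proposal covers only the necessary direction, which is all the theorem as literally stated asserts, but the paper treats the result as a full necessary-and-sufficient characterisation.
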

\begin{proof}
    (Necessary). First, we consider the case with two different syndromes $s$ and $s'$. Let $X = \Pi_QE_{i}E^{\dag}_{j}\Pi_Q$. There exists a stabilizer generator $g$ such that $gE_{i} = (-1)^{s(g)}E_{i}g$,  $gE_{j} = (-1)^{s'(g)}E_{j}g$, $s(g) \neq s'(g)$. Now $gX = X$, but $gX = g\Pi_Q E_{i}E^{\dag}_{j}\Pi_Q = X(-1)^{s'(g) +s(g)}$, so $X = 0$. 
    
    Next, we consider the case where we have two errors with the same syndrome $s$. The approach that we take is very similar to the usual Knill-Laflamme conditions. For error $F$, we must have that $\mathcal{R}(F \Pi_Q \rho \Pi_Q F^{\dag}) = \sum_{i} c(s)E_{i}(\Pi_Q \rho \Pi_Q)E_{i}^{\dag}$, where the sum over $i$ is exactly over the set of list errors as in Def.~\ref{def:QLD_2}. This means that recovery channel $\mathcal{R}$ corrects an error $F$ up to a list of errors $\{E_i\}_{i=1}^\ell$, with $\ell\leq L$, that depend on syndrome $s$.  Let $\{R_{k}\}_k$ be the set of Kraus operators that implement recovery channel $\mathcal{R}(\rho)=\sum_k R_k \rho R_k^\dagger$  with $\sum_k R_k^\dagger R_k=I$. 
    For error $F$, we have then
    \begin{equation}
        \sum_{k} R_{k}F \Pi_Q \rho \Pi_Q F^{\dag}R_{k}^{\dag} = \sum_{i} c(s)E_{i}\Pi_Q \rho \Pi_Q E_{i}^{\dag}
    \end{equation}
     and so $R_{k}F\Pi_Q = \sum_i\sqrt{c(s)} U(F)_{ki}E_{i}\Pi_Q$~\cite{nielsen00} for some unitary $U(F)$ relating the two Kraus operator decompositions. Similarly, for error $E$ (with same syndrome as $F$), we have the same procedure, except that the unitary $U(E)$ can be different, i.e. $R_{k}E\Pi_Q = \sum_i \sqrt{c(s)} U(E)_{ki}E_{i}\Pi_Q$. Taking the adjoint of this equation, we get $\Pi_QE^{\dag}R_{k}^{\dag} = \sum_i\sqrt{c(s)} U(E)^{*}_{ki}\Pi_QE_{i}^{\dag}$, so that 
     \begin{equation}
         \sum_{k} \Pi_QE^{\dag}R_{k}^{\dag}R_{k}F\Pi_Q = \Pi_QE^{\dag}F\Pi_Q = \sum_{k,i,j} c(s) U(E)^{*}_{ki}\Pi_QE_{i}^{\dag}U(F)_{kj}E_{j}\Pi_Q.
     \end{equation}
    Now, we can perform the sum over $k$, and so we get $\Pi_QE^{\dag}F\Pi_Q = c(s)\alpha(E,F)\Pi_Q$, with $\alpha(E,F) = \sum_{k,i,j} U^{*}(E)_{ki}\Pi_Q E_{i}^{\dag}U(F)_{kj}E_{j}$. Note that since the recovery map acts through at most an $L-$dimensional subspace, we can take there to be at most $L$ operators $R_{k}$.

    (Sufficient). Now we consider the reverse direction, and show that there exists a CPTP recovery map $\mathcal{R}$ that satisfies our conditions. 
    Let us define $Y_{k}(E) = \sum_{i}U(E)_{ki}E_{i}$ for some unitary $U(E)$. Then we rewrite $\Pi_Q E^{\dag}F\Pi_Q = \sum_{k} c(s)\Pi_QY_{k}(E)^{\dag}Y_{k}(F)\Pi_Q$. We add an auxiliary Hilbert space as the span of $\{ \ket{\ell}, \ell \in [L]\}$, and define $V(s)$ as, 
    \begin{equation}
        V(s)E\Pi_Q\ket{\psi} = \sqrt{c(s)}\sum_{k \in [L]} Y_{k}(E)\Pi_Q \ket{\psi} \otimes \ket{k}
    \end{equation}
    where $V(s)$ is a map \textit{restricted} to the syndrome $s$ subspace, which is the subspace of states having the syndrome value $s$. Let $\Pi(s)$ be the projector onto the subspace with syndrome $s$. First we note that for any $\ket{\alpha}$ in the syndrome $s$ subspace, 
    we can write $\ket{\alpha} = E\Pi_Q\ket{\psi}$ for some $\ket{\psi}$, and $E \in \mathcal{E}$ with syndrome $s$. We can then use this fact to check that $V(s)$ is an isometry on the subspace corresponding to syndrome $s$, since 
    \begin{equation}
        \braket{V(s)E\Pi_Q(\psi)}{V(s)F\Pi_Q(\phi)} = \bra{\psi}c(s) \sum_{k}\Pi_QY_{k}(E)^{\dag}Y_{k}(F)\Pi_Q\ket{\phi} = \bra{\psi}\Pi_QE^{\dag}F\Pi_Q\ket{\phi}
    \end{equation}
    by construction, and $V(s)\Pi(s)$ is partial isometry on the whole space. Then $V = \sum_{s} V(s)\Pi(s)$ is a full isometry.

    The claim is then that the recovery operator is given by $ \mathcal{R}(\cdot) = \mathrm{Tr}_{\mathrm{Aux}}(V(\cdot)V^{\dag})$, where we are tracing out the auxiliary qubits. Then we have 
    \begin{equation}
        \begin{split}
            \mathcal{R}(E(\Pi_Q\rho \Pi_Q)E^{\dag}) = \mathrm{Tr}_{\mathrm{Aux}}\left(V(E(\Pi_Q\rho \Pi_Q)E^{\dag})V^{\dag}\right) &= c(s) \mathrm{Tr}_{\mathrm{Aux}} \left( \sum_{k,k' \in L} Y_{k}(E)^{\dag}\Pi_Q\ketbra{\psi}{\psi}\Pi_Q Y_{k'}(E)  \otimes \ketbra{k}{k'} \right)\\
            &= c(s) \sum_{k} Y_{k}(E)^{\dag}\rho Y_{k}(E)\\
            &\stackrel{(a)}{=}c(s)\sum_{k, i, j} U(E)_{ki} E_i \rho U(E)_{kj}^{\star} E_j^{\dagger} \\
            &\stackrel{(b)}{=}c(s) \sum_{k, i, j} U(E)_{ki}U(E)^{\dagger}_{jk} E_i \rho  E_j^{\dagger} \\
            &\stackrel{(c)}{=}c(s) \sum_{i, j} (U^{\dagger} U)_{ji} E_i \rho E_j^{\dagger} \\
            &= c(s)\sum_{i} E_i \rho E_i^{\dagger}
        \end{split}
    \end{equation}
    where $(a)$ follows by using the definition of $Y_k(E)$, $(b)$ follows by rearrangement of terms and $(c)$ works out by doing the sum over $k$.

\end{proof}

\section{Detailed computation of Haar integrals}

\subsection{Probability calculations}
\label{appendix_wco}
\begin{theorem}\label{thm:prob_calcs_appendix_1}
    Let $P_1$ be, 
    \begin{equation}
        P_{1}(U)=1 -
        \mathrm{Tr} \bigg[(I_n\otimes \ketbra{0}{0}) U^\dagger  \bigg(E_1^s E_j^s  U\ketbra{\phi}{\phi} \otimes \ketbra{0}{0}^{\otimes m} 
    U^{\dag} E_j^s  E_1^s \bigg)  U \bigg]
\end{equation}
with $E_1^s$ and $E_j^s$ are the first element of the list and the true error respectively, and $E_1^s\neq E_j^s$. Then, 
    \begin{equation}
        \int_U P_1(U) dU \sim 1-2^{-m}.
    \end{equation}
\end{theorem}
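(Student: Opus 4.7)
The plan is to reduce this to a standard second-moment Haar computation. Set $I := 1 - \int_U P_1(U)\,dU$, so that the target becomes $I \sim 2^{-m}$. Let $\Pi := I_n \otimes |0\rangle\langle 0|^{\otimes m}$, $\rho_0 := |\phi\rangle\langle\phi| \otimes |0\rangle\langle 0|^{\otimes m}$, $d := 2^{n+m}$, and $F := E_1^s E_j^s$. Because $E_1^s$ and $E_j^s$ are stabilizer-distinct list elements, their product $F$ is a non-trivial Pauli operator up to phase, so $F^\dagger F = I$ and $\mathrm{Tr}(F) = \mathrm{Tr}(F^\dagger) = 0$.

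The first step is a SWAP-trick rewrite. Using cyclicity together with the identity $\mathrm{Tr}(ABCD) = \mathrm{Tr}[(A \otimes C)(B \otimes D)\,\mathrm{SWAP}]$ applied with $A = U\Pi U^\dagger$, $B = F$, $C = U\rho_0 U^\dagger$, $D = F^\dagger$, one obtains
\begin{equation*}
    I = \mathrm{Tr}\!\left[\,\mathbb{E}_U\!\bigl[U^{\otimes 2}(\Pi \otimes \rho_0)(U^\dagger)^{\otimes 2}\bigr]\cdot(F \otimes F^\dagger)\,\mathrm{SWAP}\,\right].
\end{equation*}

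Next I would invoke Schur--Weyl: $\mathbb{E}_U[U^{\otimes 2} X (U^\dagger)^{\otimes 2}] = \tfrac{\mathrm{Tr}(P_s X)}{d_s} P_s + \tfrac{\mathrm{Tr}(P_a X)}{d_a} P_a$, with symmetric/antisymmetric projectors $P_{s/a} := (I \pm \mathrm{SWAP})/2$ of ranks $d_{s/a} = d(d \pm 1)/2$. Specializing to $X = \Pi \otimes \rho_0$ gives $\mathrm{Tr}(X) = 2^n$ and $\mathrm{Tr}(\mathrm{SWAP}\,X) = \mathrm{Tr}(\Pi\rho_0) = 1$ (the latter because $|\phi\rangle|0\rangle^{\otimes m}$ lies in the image of $\Pi$), hence $\mathrm{Tr}(P_{s/a} X) = (2^n \pm 1)/2$. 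Pairing against $(F \otimes F^\dagger)\mathrm{SWAP}$, a short computation using $F^\dagger F = I$ and $\mathrm{Tr}(F) = 0$ yields $\mathrm{Tr}[P_s(F \otimes F^\dagger)\mathrm{SWAP}] = \mathrm{Tr}[P_a(F \otimes F^\dagger)\mathrm{SWAP}] = d/2$.

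Assembling everything collapses to $I = \tfrac{2^n + 1}{2(d+1)} + \tfrac{2^n - 1}{2(d-1)} = \tfrac{2^n d - 1}{d^2 - 1} = \Theta(2^{-m})$, so $\int_U P_1(U)\,dU = 1 - \Theta(2^{-m}) \sim 1 - 2^{-m}$, as claimed. The main obstacle is the tensor-algebra bookkeeping in the first step --- grouping the two copies of $F$, $F^\dagger$ together with the conjugating unitaries so that the only random dependence sits in the clean second-moment object $U^{\otimes 2}(\cdot)(U^\dagger)^{\otimes 2}$. After that, the remainder uses only two structural facts: $|\phi\rangle|0\rangle^{\otimes m}$ is supported in $\Pi$ (so $\mathrm{Tr}(\Pi\rho_0)=1$), and the product of two stabilizer-distinct Paulis is traceless and unitary (killing the $|\mathrm{Tr}(F)|^2$ piece and leaving only $\mathrm{Tr}(F^\dagger F) = d$).
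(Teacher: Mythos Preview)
Your proof is correct and follows essentially the same approach as the paper: both rewrite the quantity via the SWAP trick as a second-moment Haar integral and then expand in the commutant of $U^{\otimes 2}$, arriving at the identical closed form $(2^n d - 1)/(d^2 - 1)$. The only cosmetic differences are that the paper twirls $K\otimes K^\dagger$ (your $F\otimes F^\dagger$) rather than $\Pi\otimes\rho_0$, and expands in the $\{I,\mathbb{F}\}$ basis with Weingarten-style coefficients rather than the symmetric/antisymmetric projectors you use; these are equivalent parametrizations of the same calculation.
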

\begin{proof}

We begin with the integral for the probability of projection onto the $\ketbra{0}{0}^{\otimes m}$ state when the wrong correction operator is applied. For simplicity, we denote the $m$ ancillary qubits as simply in state $|0\rangle \langle 0|$. Furthermore, denote $I_n \otimes \ketbra{0}{0} = \Pi$ and $P_0(U) = 1- P_1(U)$. Finally, we denote $\ket{\phi} \otimes \ket{0} = \ket{\phi,{0}}$, and $\ketbra{\phi,{0}}{\phi,{0}} = \rho_{0}$. Then, the Haar average of $P_0(U)$ is

\begin{equation}
   \Eset{U} {P}_0(U) =  \Eset{U}\bigg[\mathrm{Tr} \bigg(\Pi U^\dagger  E_1^s E_j^s  U \rho_{0}
    U^{\dag}  E_j^s E_1^s U \bigg) \bigg].
\end{equation}

Let us now write the shorthand $K^{\dagger}$ for $E_1^sE_j^s$ to get 

\begin{equation}
   \Eset{U} {P}_0(U) =  \bigg[\mathrm{Tr} \bigg(\Pi U^\dagger K^{\dagger} U \rho_{0} U^{\dag} K U \bigg) \bigg].
\end{equation}

We can use the cyclic property of the trace to write this as 

\begin{equation}
    \Eset{U} {P}_0(U) =   \Eset{U}\bigg[\mathrm{Tr} \bigg( U^{\dag} K U \Pi U^\dagger K^{\dagger} U \rho_{0} \bigg) \bigg].
\end{equation}

Now we use $AB = \mathrm{Tr}_{2}(A \otimes B \mathbb{F})$, where $\mathbb{F}$ is the SWAP operator on the two subsystems, $\mathrm{Tr_{2}}$ is the partial trace over the second subsystem, and also the fact that $\mathrm{Tr}(\mathrm{Tr}_{2} (\cdot)) = \mathrm{Tr}(\cdot)$ to get

\begin{equation}
    \Eset{U} {P}_0(U) = \Eset{U}\bigg[\mathrm{Tr}\bigg( \big(U^{\dag}KU\big) \otimes \big(\Pi U^{\dag}K^{\dag}U \rho_{0} \big) \mathbb{F} \bigg)\bigg],
\end{equation}
which we can separate out to write above expectation in the form $\mathbb{E}[U^{\otimes 2}(\cdot)U^{\dag \otimes 2} ]$ as 

\begin{equation}
    \Eset{U} {P}_0(U) = \Eset{U}\bigg[\mathrm{Tr}\bigg( \big( I_{n+m} \otimes \Pi \big) U^{\dag \otimes 2} \big( K \otimes K^{\dag} \big)
    U^{\otimes 2} \big( I_{n+m} \otimes \rho_{0} \big) \mathbb{F} \bigg)\bigg].
\end{equation}

We can now use the standard results in computing Haar expectations~\cite{mele2024introduction} for the operator $\mathcal{O} = K \otimes K^{\dagger}$

\begin{equation}
    \Eset{U}\bigg[ U^{\dag \otimes 2} \mathcal{O} U^{\otimes 2}\bigg] = c_{I, \mathcal{O}} I +c_{\mathbb{F}, \mathcal{O}} \mathbb{F},
\end{equation}

where the coefficients $c_{I, \mathcal{O}}$ and $c_{\mathbb{F}, \mathcal{O}}$ can be found explicitly given a full description of $\mathcal{O}$ and $d = 2^{m+n}$ as

\begin{equation}
    c_{I, \mathcal{O}} = \frac{\mathrm{Tr}(\mathcal{O}) - \frac{1}{d}\mathrm{Tr}(\mathcal{O}\mathbb{F})}{d^{2}-1}
\end{equation}

\begin{equation}
    c_{\mathbb{F}, \mathcal{O}} = \frac{\mathrm{Tr}(\mathbb{F}\mathcal{O}) - \frac{1}{d}\mathrm{Tr}(\mathcal{O})}{d^{2}-1}.
\end{equation}

In our case, we have $\mathcal{O} =  K \otimes K^{\dag}$, and the  expressions carry over directly into the integration, yielding

\begin{equation}
    \Eset{U} {P}_0(U) = c_{I} \bigg[\mathrm{Tr}\bigg( \big( I_{n+m} \otimes \Pi \big) \big( I_{n+m} \otimes \rho_{0} \big) \mathbb{F} \bigg)\bigg] 
    + c_{\mathbb{\mathbb{F}}} \bigg[\mathrm{Tr}\bigg( \big( I_{n+m} \otimes \Pi \big) \mathbb{F} \big( I_{n+m} \otimes \rho_{0} \big) \mathbb{F} \bigg)\bigg],
\end{equation}
where we have suppressed the operator in the subscript of the coefficients, that is, $c_{I} = c_{I, K \otimes K^{\dag}}$, and similarly for $c_{\mathbb{F}}$. Using the fact that $\mathrm{Tr}(A \otimes B \mathbb{F}) = \Tr(AB )$, the first of the two terms solves immediately, since $c_{I} \big[\mathrm{Tr}\left( \big( I_{n+m} \otimes \Pi \big) \big( I_{n+m} \otimes \rho_{0} \big) \mathbb{F} \right)\big] = c_{I}\mathrm{Tr}(\Pi \rho_{0})$, and $\mathrm{Tr}(\Pi \rho_{0}) = 1$. Let us call the second term $T_{2}$, and work it out, starting from

\begin{equation}
    T_{2} = c_{\mathbb{F}}\mathrm{Tr}\bigg( \big( I_{n+m} \otimes \Pi\big)\mathbb{F}\big( I_{n+m} \otimes \rho_{0}\big) \mathbb{F}\bigg).
\end{equation}

We can directly simplify $T_2$ by using the action of $\mathbb{F}$ operator on $I_{n+m} \otimes \Pi$, which interchanges the matrices. We then multiply it with $I_{n+m} \otimes \rho_0$ and take the trace to get $2^n$ directly. Formally,

\begin{equation}
    \mathbb{F}\left(I_{n+m} \otimes \Pi\right) \mathbb{F} = \Pi \otimes I_{n+m}.
\end{equation}

Thus,

\begin{equation}
    T_2 = c_{\mathbb{F}}\Tr \left((\Pi \otimes I_{n+m})(I_{n+m} \otimes \rho_{0})\right) = c_{\mathbb{F}}\Tr \Pi = c_{\mathbb{F}} 2^n.
\end{equation}

Altogether, we have

\begin{equation}
    \Eset{U} {P}_0(U) = c_{I} + 2^{n}c_{\mathbb{F}}.
\end{equation}

Now we explicitly evaluate the coefficients $c_{I}$ in terms of $K$:

\begin{equation}
    c_{I} = \frac{|\mathrm{Tr}(K)|^{2} - 1}{d^{2}-1}
\end{equation}

\begin{equation}
    c_{\mathbb{F}} = \frac{(d - \frac{|\mathrm{Tr}(K)|^{2}}{d})}{d^{2}-1},
\end{equation}
and as $K^{\dag} = E_1^s E_j^s$ with $E_1^s$ and $E_j^s$ as Pauli matrices, $K$ is another non-identity Pauli matrix up to some phase, thus $\Tr K = 0$. So, we can write

\begin{equation}
    \Eset{U} {P}_0(U) = \frac{-1}{d^2 - 1} + \frac{2^n d}{d^2 - 1} \leq \frac{2^n d}{d^2 - 1} \sim 2^{-m}
\end{equation}
and so, $\Eset{U} {P}_1(U) = 1 - \Eset{U} {P}_0(U) = 1 - O(2^{-m})$, as required.
\end{proof}

Now, we can prove the Proposition~\ref{thm:prob_calcs_main_text}.

\begin{proposition}[Restatement of \cref{thm:prob_calcs_main_text}]\label{thm:prob_calcs_appendix}
    Let $P_{\lambda, 1}$ be
    \begin{equation}
        P_{\lambda, 1} = 1 -  \,  \mathrm{Tr} \bigg[(I_n\otimes \ketbra{0}{0}) U_{\lambda}^\dagger  \bigg(E^{s}_1 E_{j}^{s}  U_{\lambda}\ketbra{\phi,0}{\phi,0} \\
    U_{\lambda}^{\dag} E_{1}^{s \dag}E_j^{s \dag}\bigg)  U_{\lambda} \bigg],
    \end{equation}
    then
    \begin{equation}
        P_{\lambda, 1} \geq 1-\delta \sim 1-\negl(n)
    \end{equation}
    with probability $1 - {O}(2^{-m}/\delta) \sim 1 - \negl(n)$ over keys for $\delta = \Theta(2^{-m+\poly\log n})$ and $m = \omega(\log n)$,
    under the assumptions that the PRU forms an exact $2$-design. 
\end{proposition}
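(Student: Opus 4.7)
The plan is to reduce the claim to a straightforward two-step argument: first compute a Haar average using \cref{thm:prob_calcs_appendix_1}, then convert this expectation into a high-probability statement via Markov's inequality. The 2-design assumption on $\{U_\lambda\}_\lambda$ is precisely what lets us perform the reduction because the quantity $P_{\lambda,1}$ (equivalently, $P_{\lambda,0} = 1-P_{\lambda,1}$) is a polynomial of degree $2$ in the matrix entries of $U_\lambda$ and degree $2$ in the entries of $U_\lambda^\dagger$, so its key-average and Haar-average coincide exactly.

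Concretely, I would first inspect the expression for $P_{\lambda,0}$ in the statement and note that it has the form $\mathrm{Tr}[(I_n\otimes |0\rangle\langle 0|) U_\lambda^\dagger K^\dagger U_\lambda \rho_0 U_\lambda^\dagger K U_\lambda]$ with $K^\dagger = E_1^s E_j^s$ a non-identity Pauli (up to phase), so that $P_{\lambda,0}$ involves $U_\lambda^{\otimes 2}$ and $U_\lambda^{\dagger\,\otimes 2}$ only. Therefore, if $\{U_\lambda\}_\lambda$ is an exact $2$-design,
\begin{equation}
\mathbb{E}_\lambda P_{\lambda,0} \;=\; \mathbb{E}_{U\sim\mathrm{Haar}} P_0(U).
\end{equation}
Invoking \cref{thm:prob_calcs_appendix_1}, we obtain $\mathbb{E}_\lambda P_{\lambda,0} = O(2^{-m})$, so equivalently $\mathbb{E}_\lambda(1-P_{\lambda,1}) = O(2^{-m})$.

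Second, since $1-P_{\lambda,1}\ge 0$, Markov's inequality gives
\begin{equation}
\Pr_\lambda\!\left[\,1-P_{\lambda,1} \geq \delta\,\right] \;\leq\; \frac{\mathbb{E}_\lambda(1-P_{\lambda,1})}{\delta} \;=\; O\!\left(\frac{2^{-m}}{\delta}\right),
\end{equation}
and the complementary event yields $P_{\lambda,1} \geq 1-\delta$ with probability at least $1 - O(2^{-m}/\delta)$ over the key. Finally, plugging in the stated parameters $\delta = \Theta(2^{-m+\mathrm{polylog}(n)})$ and $m = \omega(\log n)$, both $\delta$ and $2^{-m}/\delta = \Theta(2^{-\mathrm{polylog}(n)})$ are negligible in $n$, giving the stated $1-\negl(n)$ bounds.

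The main obstacle, and the only place where genuine work is needed, is the Haar integral of \cref{thm:prob_calcs_appendix_1}, which is already handled in the excerpt by the standard $2$-copy Weingarten-type computation: expanding $AB = \mathrm{Tr}_2[(A\otimes B)\mathbb{F}]$, invoking $\mathbb{E}_U U^{\otimes 2}(\cdot)U^{\dagger\,\otimes 2} = c_I I + c_{\mathbb{F}} \mathbb{F}$, and using $\mathrm{Tr}(K)=0$ for the non-identity Pauli $K$ to conclude $\mathbb{E}_U P_0(U) \leq \tfrac{2^n d}{d^2-1} = O(2^{-m})$ with $d=2^{n+m}$. Once this is in hand, the Markov step is immediate and the quantitative guarantee in the proposition follows directly.
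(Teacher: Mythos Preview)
Your proposal is correct and follows essentially the same approach as the paper: use the exact $2$-design property to replace $\mathbb{E}_\lambda P_{\lambda,1}$ by the Haar average computed in \cref{thm:prob_calcs_appendix_1}, then apply Markov's inequality to the nonnegative random variable $1-P_{\lambda,1}$ and plug in the stated parameters. Your additional remark that $P_{\lambda,0}$ is degree $(2,2)$ in $(U_\lambda,U_\lambda^\dagger)$ makes explicit why the $2$-design assumption suffices, which the paper leaves implicit.
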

\begin{proof}
    Given that the PRU forms an exact $2$-design, we have
    \begin{equation}
        \Eset{\lambda}\, P_{\lambda, 1} = \Eset{U} P_1(U)
    \end{equation}
    and thus, from Theorem~\ref{thm:prob_calcs_appendix_1}, we get that $\Eset{\lambda} P_{\lambda, 1} = 1 - {O}(2^{-m})$. Now, from Markov's inequality, 
    \begin{equation}
    \Pr_{\lambda}[1 - P_{\lambda, 1} > \delta ] \leq \frac{1 - \mathbb{E}_\lambda {P}_{\lambda, 1}}{\delta} = \frac{{O}(2^{-m})}{\delta}.
\end{equation}
    Thus, the probability $P_{\lambda, 1} \geq 1 - \delta$ with probability $1 - O\left(\frac{2^{-m}}{\delta}\right)$. Putting $\delta = \Theta(2^{-m + \poly \log (n)})$ and $m = \omega(\log n)$ then gives the required asymptotics. 
\end{proof}

\subsection{Fidelity calculations}
\label{appendix_fidelity_calcs}

\begin{theorem}[Restatement of \Cref{prop:fidelity_calcs_main_text}]
    The fidelity $\mathcal{F}(\rho_{\lambda, 1}, \rho_{\lambda, 0})$ given as
    \begin{equation}
 \mathcal{F}(\rho_{\lambda, 1}, \rho_{\lambda, 0}) =  \bigg(\frac{1}{P_{\lambda, 1}}\Tr  \Big[\rho_{\lambda, 0}E_1^{s}   U_{\lambda}(I-\Pi)U_{\lambda}^{\dagger} E^{s}_1 \rho_{\lambda, 0} E_1^{s}  U_{\lambda} (I-\Pi) U_{\lambda}^{\dagger}  E^{s}_1\Big] \bigg).
\end{equation}
with $\rho_{\lambda, 0} = E_j^s U_{\lambda} |\phi, 0 \rangle \langle \phi, 0| U_{\lambda}^{\dag} E_j^s$ is the initial state, $\rho_{\lambda, 1} = \frac{1}{P_{\lambda, 1}}E^{s}_1 U_{\lambda} (I-\Pi) U_{\lambda}^{\dagger}E^{s}_1 \rho_{\lambda, 0} E^{s}_1 U_{\lambda} (I-\Pi) U_{\lambda}^{\dagger} E^{s}_1$ with $E_1^s$ and $E_j^s$ are list errors such that $E_1^s \neq E_j^s$, then
    \begin{equation}
        \mathcal{F}(\rho_{\lambda, 1}, \rho_{\lambda, 0}) \geq 1 - \delta \sim 1 - \negl(n)
    \end{equation}
   with probability $1 - O\left(\frac{2^{-m} + \epsilon}{\delta}\right) \sim 1 - \negl(n)$ over keys assuming that $\{U_{\lambda}\}_{\lambda}$ forms an approximate 4-design with relative error $\epsilon = 2^{-m}$ and an exact $2$-design, for $\delta = \Theta(2^{-m + \text{polylog}(n)})$ and $m = \omega(\log n)$.
\end{theorem}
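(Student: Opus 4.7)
The plan is to exploit an algebraic identity that reduces the four-moment computation to a two-moment one. Set $C := E_1^s U_\lambda \Pi U_\lambda^\dagger E_1^s$, so that $E_1^s U_\lambda (I-\Pi) U_\lambda^\dagger E_1^s = I - C$ after using $(E_1^s)^2 = I$ and $U_\lambda U_\lambda^\dagger = I$. Writing the pure state $\rho_{\lambda,0} = |\psi\rangle\langle\psi|$ with $|\psi\rangle = E_j^s U_\lambda |\phi\rangle|0\rangle^{\otimes m}$ and using $\rho_{\lambda,0}^2 = \rho_{\lambda,0}$ together with $\rho_{\lambda,0}C\rho_{\lambda,0} = \langle\psi|C|\psi\rangle\,\rho_{\lambda,0}$, one immediately computes
\[
P_{\lambda,1} = 1 - \operatorname{Tr}[\rho_{\lambda,0}C],\qquad
N_\lambda := P_{\lambda,1}\mathcal{F} = \operatorname{Tr}[\rho_{\lambda,0}(I-C)\rho_{\lambda,0}(I-C)] = (1-\operatorname{Tr}[\rho_{\lambda,0}C])^2 = P_{\lambda,1}^2.
\]
Hence $\mathcal{F} = P_{\lambda,1}$ identically, and the claim reduces \emph{exactly} to Proposition~\ref{thm:prob_calcs_main_text}, giving $\mathcal{F} \geq 1-\delta$ with probability $1 - O(2^{-m}/\delta)$ over keys---requiring only the exact $2$-design assumption and no four-design at all.

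If one instead insists on a direct fourth-moment calculation (which is what accounts for the extra $O(\epsilon)$ slack in the stated bound), the plan is to treat $N_\lambda$ as a balanced monomial of degree $(4,4)$ in $U_\lambda, U_\lambda^\dagger$, use the $\epsilon$-approximate $4$-design to write $\mathbb{E}_\lambda N_\lambda = \mathbb{E}_{U\sim\mathrm{Haar}} N(U) \pm O(\epsilon)$, and evaluate the Haar integral via the Weingarten formula $\int U^{\otimes 4} M (U^\dagger)^{\otimes 4}\, dU = \sum_{\sigma,\tau\in S_4} \mathrm{Wg}(\sigma\tau^{-1};d)\,\operatorname{Tr}[\sigma M]\,\tau$. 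Expanding $(I-\Pi) = I-\Pi$ in each slot, the identity piece contributes $\operatorname{Tr}[\rho_{\lambda,0}^2]=1$, the two cross pieces reduce to the two-moment integral of Theorem~\ref{thm:prob_calcs_appendix_1} and contribute $-O(2^{-m})$ each, and the $\Pi\otimes\Pi$ piece contributes $+O(2^{-m})$, with all terms proportional to $\operatorname{Tr}(E_1^s E_j^s) = 0$ vanishing. Markov's inequality applied to $1-N_\lambda \geq 0$ then yields $\Pr_\lambda[1-N_\lambda > \delta] \leq O((2^{-m}+\epsilon)/\delta)$, and a union bound against the rare event $P_{\lambda,1} < 1-\delta$ (of probability $O(2^{-m}/\delta)$ from Proposition~\ref{thm:prob_calcs_main_text}) delivers the stated failure probability for $\mathcal{F} = N_\lambda/P_{\lambda,1}$, after choosing $\delta = \Theta(2^{-m+\mathrm{polylog}(n)})$ and $m = \omega(\log n)$.

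The hard part in the direct route is the Weingarten bookkeeping: one must identify which of the $24$ permutations in $S_4$ produce the leading $(\operatorname{Tr}\Pi)^2 = 2^{2n}$ scaling inside the $\Pi\otimes\Pi$ piece and verify that their net contribution cancels the two cross-piece deficits up to an $O(2^{-m})$ residual. The algebraic shortcut $\mathcal{F} = P_{\lambda,1}$ above bypasses this entirely---the key observations being that $A := E_1^s U_\lambda (I-\Pi) U_\lambda^\dagger E_1^s$ is a Hermitian projector and that, on a pure state, any fidelity of the form $\operatorname{Tr}[|\psi\rangle\langle\psi|\, A|\psi\rangle\langle\psi|A]/\langle\psi|A|\psi\rangle$ collapses to $\langle\psi|A|\psi\rangle$---and as a bonus yields the slightly tighter failure probability $O(2^{-m}/\delta)$ without the $\epsilon$ term.
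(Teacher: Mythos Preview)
Your algebraic shortcut is correct and genuinely simpler than the paper's route. Since $E_1^s$ is a Hermitian Pauli with $(E_1^s)^2=I$, the operator $A:=E_1^sU_\lambda(I-\Pi)U_\lambda^\dagger E_1^s$ is a unitary conjugate of the projector $I-\Pi$ and hence itself a Hermitian projector; on the pure state $\rho_{\lambda,0}=|\psi\rangle\langle\psi|$ this forces $\operatorname{Tr}[\rho_{\lambda,0}A\rho_{\lambda,0}A]=\langle\psi|A|\psi\rangle^2=P_{\lambda,1}^2$, so $\mathcal{F}=P_{\lambda,1}$ identically and Proposition~\ref{thm:prob_calcs_main_text} finishes the job using only the exact $2$-design hypothesis, with failure probability $O(2^{-m}/\delta)$.

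The paper does not spot this collapse and instead runs the direct fourth-moment computation you sketch second: it rewrites the unnormalized fidelity as
\[
\operatorname{Tr}\big[(\rho_0\otimes(I-\Pi)\otimes\rho_0\otimes(I-\Pi))\,U_\lambda^{\dagger\otimes4}(T^\dagger\otimes T\otimes T^\dagger\otimes T)\,U_\lambda^{\otimes4}\,G_{(1234)}\big]
\]
with $T=E_1^sE_j^s$, applies the Weingarten formula, uses $\rho_0(I-\Pi)=0$ to kill all but four permutations $\pi\in\{(4321),(12)(34),(14)(23),(1234)\}$, and extracts the leading contribution $\mathrm{Wg}([1,1,1,1],d)\cdot d^2\cdot[\operatorname{Tr}(I-\Pi)]^2=1-O(2^{-m})$; the approximate $4$-design then costs the extra $O(\epsilon)$ before Markov. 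Your identity $\mathcal{F}=P_{\lambda,1}$ renders this bookkeeping unnecessary for the present statement, tightens the failure probability, and in fact propagates further: the quantities $\operatorname{Tr}(A_{\lambda,i}\rho_{\lambda,0}A_{\lambda,i}\rho_{\lambda,0})$ that drive the induction in the paper's proof of Theorem~\ref{thm:almost_pure_state} are likewise squares of second moments by the same projector argument, so the $4$-design hypothesis could be dropped there as well.
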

\begin{proof}
    It is important to check whether our protocol works with only a single state, i.e. we can re-use a single copy of the state with a list error to uniquely decode it. Our protocol involves: first applying a list correction operator (which can be a wrong one), then applying the inverse of PRU with the same key, and then projecting on $\Pi = I_n \otimes |0\rangle \langle0|^{\otimes m}$. In case of a wrong correction operator, we know that the projection fails with high probability. Thus, in this case, we need to check that after projecting onto $I - \Pi$, we can go back to the initial list-error state upto high fidelity by using the inverse of the wrong correction operator selected. The un-normalised state after projection is formally given by
\begin{equation}
    \sigma = (I-\Pi)U_{\lambda}^{\dagger} E_1^s E_j^s (U_{\lambda}\rho_0 U_{\lambda}^{\dagger}) E^s_j E_1^{s}  U_{\lambda} (I-\Pi)
\end{equation}
where $\rho_0 = |\phi, 0\rangle\langle\phi, 0|$, $E_j^s$ represent the true error, $E_1^s$ represent the first correction operator applied from the list with syndrome $s$, $U_{\lambda}$ represent the pseudorandom unitary with key $\lambda$.
To go back to the initial state, we first apply the $U_{\lambda}$, and then the inverse of the wrong correction operator to get the un-normalised state,
\begin{equation}
    \tilde{\rho}_{\lambda, 1} = E_1^{s}  U_{\lambda}(I-\Pi)U_{\lambda}^{\dagger} E_1^s E_j^s (U_{\lambda}\rho_0 U_{\lambda}^{\dagger})  E^{s}_j E_1^{s}  U_{\lambda} (I-\Pi) U_{\lambda}^{\dagger}  E_1^s .
\end{equation}
Then, the un-normalised fidelity with true error state $\rho_{\lambda, 0} = E_j^s (U_{\lambda}\rho_0 U_{\lambda}^{\dagger})  E^{s}_j$ is
\begin{equation}
    \mathcal{F}(\tilde{\rho}_{\lambda, 1}, \rho_{\lambda, 0}) = \Tr \left[E_j^s (U_{\lambda}\rho_0 U_{\lambda}^{\dagger}) E^{s}_j E_1^{s} U_{\lambda}(I-\Pi)U_{\lambda}^{\dagger} E_1^s E_j^s (U_{\lambda}\rho_0 U_{\lambda}^{\dagger})  E_j^{s} E_1^{s}  U_{\lambda} (I-\Pi) U_{\lambda}^{\dagger}  E_1^s\right].
\end{equation}
By rearranging and defining $T = E_1^s E_j^s $, we can write above as
\begin{equation}
    \mathcal{F}(\tilde{\rho}_{\lambda, 1}, \rho_{\lambda, 0}) = \Tr \left[ \left(\rho_0 \otimes (I-\Pi) \otimes \rho_0 \otimes (I-\Pi)\right) U_{\lambda}^{\dagger, \otimes 4} \left(T^{\dagger}\otimes T\otimes T^{\dagger} \otimes T\right) U_{\lambda}^{\otimes 4} G_{\gamma}\right]
\end{equation}
where $G_{\gamma}$ is the permutation unitary corresponding to the cyclic permutation $\gamma = (1234)$ and we used the fact that $\Tr[A \otimes B \otimes C \otimes D G_{\gamma}] = \Tr[ABCD]$ for any cyclic permutation $\gamma$ on $S_4$ group. The average fidelity is then given by
\begin{equation}
    \mathbb{E}_{\lambda} \mathcal{F}(\tilde{\rho}_{\lambda, 1}, \rho_{\lambda, 0}) = \Tr \left[ \left(\rho_0 \otimes (I-\Pi) \otimes \rho_0 \otimes (I-\Pi)\right) \mathbb{E}_{\lambda}  \left[U_{\lambda}^{\dagger, \otimes 4} \left(T^{\dagger}\otimes T\otimes T^{\dagger} \otimes T\right) U_{\lambda}^{\otimes 4}\right] G_{\gamma}\right]
\end{equation}

To calculate the expectation on keys, we first calculate the average on Haar random unitaries:

\begin{equation}\label{eq:expected_fidelity}
    \Eset{U} \mathcal{F}(\tilde{\rho}_{U, 1}, \rho_{U, 0}) = \Tr \left[ \left(\rho_0 \otimes (I-\Pi) \otimes \rho_0 \otimes (I-\Pi)\right) \Eset{U}  \left[U^{\dagger, \otimes 4} \left(T^{\dagger}\otimes T\otimes T^{\dagger} \otimes T\right) U^{\otimes 4}\right] G_{\gamma}\right].
\end{equation}
Now, we can write the Haar integral using Weingarten calculus~\cite{mele2024introduction} as:
\begin{equation}\label{eq:haar_integral_fourth_order}
    \Eset{U} \left[U^{\dagger, \otimes 4} \mathcal{O} U^{\otimes 4}\right ] = \sum_{\pi, \sigma \in S_4} Wg(\pi^{-1}\sigma, d) \Tr(V_d(\sigma)^{\dagger} \mathcal{O}) V_d(\pi)
\end{equation}
where $V_d(\pi)$ is the unitary representation for the permutation $\pi$ over $d$-dimensional Hilbert space and $\mathcal{O} =  \left(T^{\dagger}\otimes T\otimes T^{\dagger} \otimes T\right)$. Now, using~\eqref{eq:expected_fidelity} and~\eqref{eq:haar_integral_fourth_order}, we get:
\begin{equation}
    \Eset{U} \mathcal{F}(\tilde{\rho}_{U, 1}, \rho_{U, 0}) = \sum_{\pi, \sigma \in S_4} Wg(\pi^{-1}\sigma, d) \underbrace{\Tr(V_d(\sigma)^{\dagger} \mathcal{O})}_{\beta_{\sigma}} \underbrace{\Tr \left[ \left(\rho_0 \otimes (I-\Pi) \otimes \rho_0 \otimes (I-\Pi)\right)  V_d(\pi) G_{\gamma}\right]}_{\alpha}.
\end{equation}
Now, note that $\rho_0(I-\Pi) = 0$, thus only those permutation $\pi$ survive, for which $\alpha$ does not lead to $\rho_0(I-\Pi)$ terms. It is easy to see that there will only be four such permutations, $\pi = (4321), (12)(34), (14)(23), (1234)$, corresponding to $\alpha = (\Tr \rho)^2 (\Tr(I-\Pi))^2, \Tr \rho^2  (\Tr(I-\Pi))^2, (\Tr \rho)^2 \Tr(I-\Pi)^2, \Tr \rho^2 \Tr(I-\Pi)^2$ respectively.

We will now abuse the notation a bit by denoting the permutation $\pi$ by its cycle structure form. Then, we have,
\begin{equation}\label{eq:expression_expected_fidelity}
    \begin{split}
        \Eset{U} \mathcal{F}(\tilde{\rho}_{U, 1}, \rho_{U, 0}) =& \sum_{\sigma} \Big[ Wg\left((4321)^{-1}\sigma, d\right) \beta_{\sigma} (\Tr \rho)^2 (\Tr(I-\Pi))^2 + Wg\left(((12)(34))^{-1}\sigma, d\right) \beta_{\sigma} \Tr \rho^2  (\Tr(I-\Pi))^2 + \\
        &Wg\left(((14)(23))^{-1}\sigma, d\right) \beta_{\sigma} (\Tr \rho)^2 \Tr(I-\Pi)^2 + Wg\left((1234)^{-1}\sigma, d\right) \beta_{\sigma} \Tr \rho^2 \Tr(I-\Pi)^2 \Big].
    \end{split}
\end{equation}
Note that for $\mathcal{O} = T^{\dag} \otimes T \otimes T^{\dag} \otimes T$, we get  $\beta_{\sigma} = \Tr \mathcal{O} V_d(\sigma)^{\dagger} = {O}(d^{\# \mathrm{cycles}(\sigma)}))$ as we show in Lemma~\ref{lemma:assumption_O}.

We use Weingarten coefficients that can be computed using the \texttt{RTNI} package~\cite{Fukuda_2019} and for the order four Haar integral are mentioned below,
\begin{equation}\label{eqn:Wg_coeffs}
    \begin{aligned}
        & Wg\left([1, 1, 1, 1], d\right)=\frac{d^4-8 d^2+6}{d^2\left(d^2-1\right)\left(d^2-4\right)\left(d^2-9\right)} \\
        & Wg\left([2, 1, 1], d\right)=\frac{-1}{d\left(d^2-1\right)\left(d^2-9\right)} \\
        & Wg\left([2, 2], d\right)=\frac{d^2+6}{d^2\left(d^2-1\right)\left(d^2-4\right)\left(d^2-9\right)} \\
        & Wg([3, 1], d)=\frac{2 d^2-3}{d^2\left(d^2-1\right)\left(d^2-4\right)\left(d^2-9\right)} \\
        & Wg([4], d)=\frac{-5}{d\left(d^2-1\right)\left(d^2-4\right)\left(d^2-9\right)}  .
    \end{aligned}
\end{equation}
Using Lemma~\ref{lemma:assumption_O} and Weingarten coefficients~\eqref{eqn:Wg_coeffs}, the leading order terms in~\eqref{eq:expression_expected_fidelity} can be deduced with a bit of work to be,
\begin{equation}
    \begin{split}
        \Eset{U} \mathcal{F}(\tilde{\rho}_{U, 1}, \rho_{U, 0}) &= Wg([1, 1, 1, 1], d)  \Tr T^{\dagger} T \Tr T T^{\dagger} [\Tr(I-\Pi)]^2 \\
        &= \frac{d^4-8 d^2+6}{d^2\left(d^2-1\right)\left(d^2-4\right)\left(d^2-9\right)} d^2  [\Tr(I-\Pi)]^2 \\
        &= \frac{d^6}{d^2\left(d^2-1\right)\left(d^2-4\right)\left(d^2-9\right)} \left(d - \frac{d}{2^m}\right)^2 + {O}\left(\frac{1}{d^2}\right) \\
        &= \frac{d^6}{\left(d^2-1\right)\left(d^2-4\right)\left(d^2-9\right)}\left(1 - {O}\left(\frac{1}{2^{m}} \right) \right) + {O} \left(\frac{1}{d^2}\right) \\
        &= \left(1 + {O}\left(\frac{1}{d^2}\right)\right)\left(1 - {O}\left(\frac{1}{2^{m}} \right) \right) \\
        &= 1 - {O}\left(\frac{1}{2^{m}}\right) + {O}\left(\frac{1}{d^2}\right).
    \end{split}
\end{equation}

Now, using the assumption that PRU is a relative error $\epsilon$-approximate $4$-design, we get that

\begin{equation}
    (1-\epsilon) \Eset{U}[U^{\dagger, \otimes 4}\mathcal{O} U^{\otimes 4}] \preceq \mathbb{E}_\lambda[U_{\lambda}^{\dagger, \otimes 4}\mathcal{O} U_{\lambda}^{\otimes 4}] \preceq (1+\epsilon) \Eset{U}[U^{\dagger, \otimes 4} \mathcal{O} U^{\otimes 4}].
\end{equation}

Then, denoting $\zeta = \rho \otimes (I-\Pi) \otimes \rho \otimes (I-\Pi)$, $\mathcal{F}(\tilde{\rho}_{U, 1}, \rho_{U, 0}) = \tilde{\mathcal{F}}_{U}$ and $\mathcal{F}(\tilde{\rho}_{\lambda, 1}, \rho_{\lambda, 0}) = \tilde{\mathcal{F}}_{\lambda}$
\begin{equation}
    \begin{split}
        \left|\mathbb{E}_{\lambda} \tilde{\mathcal{F}}_\lambda - \mathbb{E}_{U} \tilde{\mathcal{F}}_U\right| &= \left|\Tr\left[\zeta\left(\mathbb{E}_\lambda[U_{\lambda}^{\dagger, \otimes 4} \mathcal{O} U_{\lambda}^{\otimes 4}] - \Eset{U}[U^{\dagger, \otimes 4} \mathcal{O} U^{\otimes 4}] \right)G_{\gamma}\right]\right| \\
        &\leq \epsilon \left|\Tr\left[\zeta\left(\Eset{U}[U^{\dagger, \otimes 4} \mathcal{O} U^{\otimes 4}] \right)G_{\gamma}\right]\right| \\
        &= \epsilon \Eset{U} \tilde{\mathcal{F}}_U = \epsilon - {O}\left(\frac{\epsilon}{2^{m}}\right).
    \end{split} 
\end{equation}

Thus, $\mathbb{E}_\lambda \tilde{\mathcal{F}}_\lambda \geq 1- \epsilon - {O}\left(\frac{1}{2^{m}}\right)$. Now, note that from Markov's inequality,
\begin{equation}
    \Pr_{\lambda}[1 - \tilde{\mathcal{F}}_\lambda > \delta ] \leq \frac{1 - \mathbb{E}_\lambda \tilde{\mathcal{F}}_\lambda}{\delta} = \frac{\epsilon + {O}(2^{-m})}{\delta} .
\end{equation}
Thus, the un-normalised fidelity $\tilde{\mathcal{F}}_\lambda = \mathcal{F}(\tilde{\rho}_{\lambda, 1}, \rho_{\lambda, 0}) \geq 1 - \delta$ with probability $1 - O\left(\frac{\epsilon + 2^{-m}}{\delta}\right)$. Then putting $\delta = \Theta(2^{-m + \text{polylog}(n)})$, $m = \omega(\log n)$ and $\epsilon = O(2^{-m})$ will give $\tilde{\mathcal{F}}_\lambda \geq 1 - \Theta(2^{-m + \text{polylog}(n)}) \sim 1 - \negl(n)$ with probability $1 - O\left(\frac{2^{-m}}{\delta}\right) \sim 1 - {O}(2^{-\text{polylog}(n)})\sim 1 - \negl(n)$ over keys. Finally, note that the normalised fidelity,
\begin{equation}
    \mathcal{F}(\rho_{\lambda, 1}, \rho_{\lambda, 0}) = \frac{\mathcal{F}(\tilde{\rho}_{\lambda, 1}, \rho_{\lambda, 0})}{P_{\lambda, 1}} \geq \mathcal{F}(\tilde{\rho}_{\lambda, 1}, \rho_{\lambda, 0}).
\end{equation}
Thus, we get the required asymptotics for normalised fidelity.

\end{proof}

\begin{lemma}\label{lemma:assumption_O}
    For $O =  \left(T^{\dagger}\otimes T\otimes T^{\dagger} \otimes T\right)$, with $T$ as defined above, we have that:
    \begin{equation}
        \beta_{\sigma} = \Tr O V_d(\sigma)^{\dagger} = {O}(d^{\# \mathrm{cycles}(\sigma)})).
    \end{equation}
\end{lemma}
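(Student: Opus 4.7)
The plan is to exploit the standard factorization of $\Tr[O\, V_d(\sigma)^\dagger]$ for a tensor-product operator $O$ over the cycles of $\sigma\in S_4$, together with the fact that $T=E_1^s E_j^s$ is a Pauli (up to a phase). Concretely, if $O=A_1\otimes A_2\otimes A_3\otimes A_4$ and $\sigma\in S_4$ has disjoint-cycle decomposition $\sigma=C_1\cdots C_r$ with $C_\ell=(i_1^{(\ell)},\dots,i_{k_\ell}^{(\ell)})$, then the well-known identity
\begin{equation}
\Tr\!\left[(A_1\otimes A_2\otimes A_3\otimes A_4)\,V_d(\sigma)^\dagger\right]
=\prod_{\ell=1}^{r}\Tr\!\left(A_{i_1^{(\ell)}}A_{i_2^{(\ell)}}\cdots A_{i_{k_\ell}^{(\ell)}}\right)
\end{equation}
expresses $\beta_\sigma$ as a product of $r=\#\mathrm{cycles}(\sigma)$ traces on $\mathbb{C}^d$.

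First I would specialize this identity to $A_i\in\{T,T^\dagger\}$. Since $E_1^s$ and $E_j^s$ are elements of $\mathcal{P}_{n+m}$ with $E_1^s\neq E_j^s$, the operator $T=E_1^s E_j^s$ is another Pauli up to a global phase, and so is any word in $T$ and $T^\dagger$. For any Pauli $P$ on $d=2^{n+m}$ dimensions we have $|\Tr P|\in\{0,d\}$, so each cycle factor satisfies
\begin{equation}
\left|\Tr\!\left(A_{i_1^{(\ell)}}\cdots A_{i_{k_\ell}^{(\ell)}}\right)\right|\le d.
\end{equation}
Multiplying these bounds over the $r$ cycles immediately gives
\begin{equation}
|\beta_\sigma|\;\le\;d^{\,\#\mathrm{cycles}(\sigma)},
\end{equation}
which is exactly the claimed bound $\beta_\sigma=O(d^{\#\mathrm{cycles}(\sigma)})$.

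The only subtlety is bookkeeping: one needs that each factor really is a product of $T$'s and $T^\dagger$'s with matching indices dictated by the cycle $C_\ell$ acting on the tensor pattern $(T^\dagger,T,T^\dagger,T)$. This is a direct substitution into the factorization identity, so no genuine obstacle arises. In particular one does \emph{not} need to determine which cycle factors are exactly $0$ versus $\pm d$; the crude cycle-wise bound by $d$ already suffices for the leading-order estimate used in Proposition~\ref{prop:fidelity_calcs_main_text}. If one wanted a sharper statement, the additional input would be that $\Tr(T)=\Tr(T^\dagger)=0$ (since $T$ is a non-identity Pauli), so any cycle whose associated word is a non-identity Pauli actually contributes $0$; but this refinement is not needed here.
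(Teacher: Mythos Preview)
Your proposal is correct. Both arguments rest on the same key observation---that $T=E_1^sE_j^s$ is a Pauli up to phase, hence any word in $T,T^\dagger$ has trace of absolute value $0$ or $d$---but you package it differently. The paper proceeds by case analysis over the cycle types of $S_4$: it first uses $\Tr T=0$ to kill every $\sigma$ with a fixed point (cycle types $[1,1,1,1]$, $[2,1,1]$, $[3,1]$), then explicitly evaluates the surviving $4$-cycle and $(2,2)$ cases to get $d$ and $d^2$ respectively. Your route is cleaner: you invoke the standard cycle-factorization identity once and bound each cycle factor uniformly by $d$, yielding $|\beta_\sigma|\le d^{\#\mathrm{cycles}(\sigma)}$ without any casework. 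What the paper's approach buys is the exact values of the surviving $\beta_\sigma$, which are in fact used downstream to identify the leading term in $\mathbb{E}_U\,\mathcal{F}$; what your approach buys is brevity and a cleaner bound that suffices for the lemma as stated.
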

\begin{proof}
   We note that $T = E_1^sE_j^s$ is a Pauli matrix upto some phase, and hence: 
   \begin{equation}
       \Tr T = 0 = \Tr T^{\dagger}.
   \end{equation}
    Thus, the only surviving terms in $\beta_{\sigma}$ will be the ones for which $\sigma$ either has only a single cycle or it has $(2, 2)$ cyclic structure, as otherwise, there would be single trace terms as factors which will make the term 0. Now, we look at these surviving terms turn-wise:
    \begin{enumerate}
        \item \textbf{Single cycle terms:}
        \begin{equation}
            \begin{split}
                 \Tr T^{\dagger} T TT^{\dagger} = \Tr I = d
            \end{split}
        \end{equation}
        where we just used the fact that $T^{\dagger}T = E_j^{s}E_1^{s} E_1^s E_j^s = I$.
        \item \textbf{Terms with cycle structure (2, 2):}
        \begin{equation}
             \Tr T^{\dagger} T \Tr T^{\dagger} T = {O}(d^2),
        \end{equation}
        which is obvious from the above calculations.
        Now, 
        \begin{equation}
            \begin{split}
               \Tr T^{\dagger} T^{\dagger} \Tr T T &= \left|\Tr T^{\dagger} T^{\dagger}\right|^2 
                = d^2,
            \end{split}
        \end{equation}
        where we used the fact that $T^{\dagger, 2}$ is $I$ up to $\pm 1$. The rest of the terms of the structure $(2, 2)$ are equivalent to one of the above terms. 
    \end{enumerate}
\end{proof}

\section{Unique quantum list decoding}\label{appendix_unique_decoding_proof}

In this section, we prove our theorem on our list decoding scheme and its corollary. 
Here, we denote the logical state as $\rho = |\phi\rangle \langle \phi|$.

\begin{theorem}[Restatement of \Cref{thm:almost_pure_state}]
\renewcommand{\thetheorem}{2}
Let $Q_L$ be an $\dsl n^{\prime}, n+m\dsr$ stabilizer code which is also an $L$-list code for an error set $\mathcal{E}$. Then the Algorithm~\ref{alg:unambiguous_qld} returns the correct logical state with fidelity at least $1 - \Theta(L 2^{-m/2 + \text{polylog}(n)}) \sim 1-\negl(n)$ with success probability $1 - \negl(n)$ over the randomness in the algorithm and with probability $1 - O\left(\frac{2^{-m}}{\delta}\right) \sim 1 - \negl(n)$ over keys, for $\delta = \Theta(2^{-m+\poly\log (n)})$, $m = \omega(\log n)$ and $L = {O}(\poly(n))$
assuming that $\{U_{\lambda}\}_{\lambda}$ is both an approximate $4$-design with relative error $\epsilon = 2^{-m}$ and an exact $2$-design.
\end{theorem}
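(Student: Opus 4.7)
The plan is to chain together Propositions~\ref{thm:prob_calcs_main_text} and \ref{prop:fidelity_calcs_main_text} across the at most $L$ iterations of Algorithm~\ref{alg:unambiguous_qld}. Let $j$ be the index of the true error in the list $\{E_b^s\}_{b=1}^\ell$ (with $\ell \leq L$), and denote by $\rho^{(b)}$ the state held at the start of iteration $b+1$, with $\rho^{(0)} = \rho_{\lambda,0}$. I would first argue that the algorithm correctly passes over the wrong list elements: in each iteration $b < j$, Proposition~\ref{thm:prob_calcs_main_text} ensures the measurement outcome $I-\Pi$ occurs with probability at least $1-\delta$, and Proposition~\ref{prop:fidelity_calcs_main_text} ensures that the post-measurement state, after reapplying $E_b^s U_\lambda$, has fidelity at least $1-\delta$ with the ideal iterate $\rho_{\lambda,0}$. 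A union bound over the at most $L$ wrong iterations then gives algorithmic success (no false positive) with probability $\geq 1-L\delta \sim 1-\negl(n)$ over measurement randomness, while a further union bound over iterations shows that the required per-iteration key-level guarantees hold simultaneously with probability $1-O(L(\epsilon+2^{-m})/\delta) \sim 1-\negl(n)$ over $\lambda$.

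Next, I would track how fidelity degrades across iterations. Since Propositions~\ref{thm:prob_calcs_main_text} and \ref{prop:fidelity_calcs_main_text} are stated with respect to the idealized input $\rho_{\lambda,0}$, I would use monotonicity of trace distance under CPTP maps (plus unitary invariance) and the Fuchs--van de Graaf inequality to convert each per-step fidelity bound $F \geq 1-\delta$ into a per-step trace-distance bound $\mathrm{TD} \leq \sqrt{\delta}$, and then chain by the triangle inequality to obtain
\begin{equation}
    \mathrm{TD}(\rho^{(j-1)}, \rho_{\lambda,0}) \leq (j-1)\sqrt{\delta} \leq L\sqrt{\delta}.
\end{equation}
This conversion is essential: composing a per-step fidelity bound multiplicatively would produce an exponentially weaker estimate, whereas working in trace distance keeps the accumulated error linear in $L$. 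When iteration $j$ is finally reached, applying $U_\lambda^\dagger E_j^{s,\dagger}$ yields a state within trace distance $L\sqrt{\delta}$ of $\ket{\phi,0}\bra{\phi,0}$, so the projection $\Pi$ succeeds with probability $1-O(L\sqrt{\delta})$ and the output, after tracing out the $m$ ancilla qubits, is close to $\ket{\phi}\bra{\phi}$. Converting back with $F \geq (1-\mathrm{TD})^2 \geq 1-2\mathrm{TD}$ then recovers the claimed final fidelity $1 - O(L\sqrt{\delta}) = 1 - \Theta(L\, 2^{-m/2 + \mathrm{polylog}(n)})$, which is $1-\negl(n)$ for $L=O(\poly(n))$ and $m=\omega(\log n)$.

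The main obstacle, and the source of the $L\sqrt{\delta}$ scaling rather than the naive $L\delta$, is precisely this fidelity-composition issue: Proposition~\ref{prop:fidelity_calcs_main_text} naturally produces a per-step fidelity bound, but stability under iteration forces us to pass through trace distance before aggregating and only convert back to fidelity at the very end. A subsidiary difficulty is justifying that Propositions~\ref{thm:prob_calcs_main_text} and \ref{prop:fidelity_calcs_main_text}, established with input $\rho_{\lambda,0}$, continue to apply to the perturbed iterates $\rho^{(b-1)}$; this is handled by the data-processing inequality, which lets the trace-distance bound on the input upper-bound the deviation in any downstream acceptance probability or output fidelity by a term of the same order. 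Altogether, the two events above (no false positive, and bounded accumulated trace distance) combine via a final union bound, yielding the stated guarantees simultaneously over keys and measurement randomness.
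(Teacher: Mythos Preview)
Your overall architecture is right and matches the paper's: chain per-step bounds over the at most $L$ iterations, pick up a $\sqrt{\delta}$ loss per step, and convert between fidelity and trace distance via Fuchs--van~de~Graaf. The gap is in the step where you invoke ``monotonicity of trace distance under CPTP maps'' / ``data-processing'' to propagate the bound from the ideal input $\rho_{\lambda,0}$ to the perturbed iterate $\rho^{(b-1)}$. The per-iteration map $\Phi_b$ is \emph{not} CPTP: it includes post-selection on the outcome $I-\Pi$ followed by renormalization, and monotonicity of trace distance fails for such maps. Data-processing does control the deviation in the \emph{acceptance probability} (that part of your argument is fine), but it does not control the \emph{post-selected output state}; a naive bound on the renormalization introduces a constant-factor blow-up per step that compounds over $L=\poly(n)$ iterations and would destroy the estimate.

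The paper sidesteps this by never normalizing during the chaining. It works with the un-normalized states $\tilde\rho_{\lambda,t} = A_{\lambda,t}\cdots A_{\lambda,1}\,\rho_{\lambda,0}\,A_{\lambda,1}\cdots A_{\lambda,t}$, where $A_{\lambda,i}=E_i^s U_\lambda(I-\Pi)U_\lambda^{\dagger}E_i^s$, and bounds the telescoping differences of the un-normalized fidelities directly via Cauchy--Schwarz:
\[
\big|\mathcal{F}(\tilde\rho_{\lambda,t},\rho_{\lambda,0})-\mathcal{F}(\tilde\rho_{\lambda,t-1},\rho_{\lambda,0})\big|
\;\le\;\sqrt{\tr\!\big[(A_{\lambda,t}\rho_{\lambda,0}A_{\lambda,t}-\rho_{\lambda,0})^2\big]}
\;\le\;\sqrt{2\delta},
\]
where the last inequality is exactly the single-step quantity controlled by Proposition~\ref{prop:fidelity_calcs_main_text} applied with $E_t^s$ in place of $E_1^s$. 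Because the Cauchy--Schwarz step factors out precisely the one-step error against the \emph{fixed} reference $\rho_{\lambda,0}$, the propositions never need to be applied to a perturbed input, and no contractivity of $\Phi_b$ is required. Summing the telescope gives $\mathcal{F}(\tilde\rho_{\lambda,t_f},\rho_{\lambda,0})\ge 1-(t_f-1)\sqrt{2\delta}$, and only then does one normalize (which can only increase the fidelity since $\tilde P_{\lambda,t_f}\le 1$). Your trace-distance chaining can be repaired along similar lines---for instance by tracking $\sqrt{F_b}$ directly and using $\|(I-A_{\lambda,b})\,|v_0\rangle\|\le\sqrt{\delta}$ from Proposition~\ref{thm:prob_calcs_main_text}---but not by the CPTP monotonicity you cite.
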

\begin{proof}
    Assume that given the initial state, our algorithm halts after $t_f+1$ steps. We analyze the case when the algorithm halts correctly, i.e. when the correct list error is found. Later, we show that this happens with very high probability. Now, we denote the initial state as $\rho_{\lambda, 0} = E_j U_{\lambda}\rho \otimes |0\rangle \langle 0|^{\otimes m} U_{\lambda}^{\dagger}E_j$ for some $1 \leq j \leq L$ and define $A_{\lambda, i} = E^{s}_iU_{\lambda}(I-\Pi)U_{\lambda}^{\dagger}E^{s}_i$. It is easy to check that the matrix $A_{\lambda, i}$ is hermitian.\\
    Now, note that first step of Algorithm~\ref{alg:unambiguous_qld} takes the initial state $\rho_{\lambda, 0}$ to $\rho_{\lambda, 1} = \frac{1}{P_{\lambda, 1}}A_{\lambda, 1}\rho_{\lambda, 0}A_{\lambda, 1}^{\dagger}$, with $P_{\lambda, 1} = \tr A_{\lambda, 1}\rho_{\lambda, 0}A_{\lambda, 1}^{\dagger} $ which also gives the probability of witnessing $(I-\Pi)$ measurement outcome on $U_{\lambda}^{\dagger}E^{s}_i\rho_{\lambda, 0}E^{s}_iU_{\lambda}$ and in general $i$-th (for $i\leq t_f$) step takes $\rho_{\lambda, i-1}$ to $\rho_{\lambda, i} = \frac{1}{P_{\lambda, i}}A_{\lambda, i}\rho_{\lambda, i-1} A_{\lambda, i}$ where $P_{\lambda, i} = \tr A_{\lambda, i} \rho_{\lambda, i-1} A_{\lambda, i}$ and it represents the probability of witnessing the outcome $(I-\Pi)$ on the state $U_{\lambda}^{\dagger} E^{s}_i \rho_{\lambda, i-1} E^{s}_i U_{\lambda}$. And the $t_f+1$-th step will take the state $\rho_{\lambda, t_f}$ to $\rho_{\lambda, t_f+1} = \frac{1}{P_{\lambda, t_f+1}} \Pi U_{\lambda}^{\dagger} E^{s}_{t_f+1} \rho_{\lambda, t_f} E^{s}_{t_f +1} U_{\lambda} \Pi$ with $P_{\lambda, t_f+1}$ represents the probability of witnessing the outcome $\Pi$ on the state $U_{\lambda}^{\dagger}E^{s}_{t_f+1}\rho_{\lambda, t_f}E^{s}_{t_f+1}U_{\lambda}$. So, a successful run of the algorithm looks like, 
    \begin{equation}
        \boxed{\rho_{\lambda,0}} \Rightarrow \boxed{\rho_{\lambda, 1}} \ldots \ldots \Rightarrow \boxed{\rho_{\lambda, t_f}} \Rightarrow \boxed{\rho_{\lambda, t_f +1}}
    \end{equation}
     Now, our goal is to calculate $\mathcal{F}(\rho_{\lambda, t_f +1}, \rho \otimes |0\rangle \langle0|^{\otimes m})$, which is the fidelity between the state at the end of Algorithm~\ref{alg:unambiguous_qld} and the logical state. Then, we see that, 
     \begin{equation}
    \begin{split}
        \mathcal{F}(\rho_{\lambda, t_f}, \rho\otimes  |0\rangle \langle 0|^{\otimes m} ) 
        &\stackrel{(a)}{\geq} \tr ( \Pi U_{\lambda}^{\dagger} E^{s}_{t_f+1} \rho_{\lambda, t_f} E^{s}_{t_f+1} U_{\lambda} \Pi \left(\rho \otimes |0\rangle \langle 0|^{\otimes m}\right)) \\
        & \stackrel{(b)}{=} \tr (\rho_{\lambda, t_f} E^{s}_{t_f+1} U_{\lambda} \left(\rho \otimes |0\rangle \langle 0|^{\otimes m}\right) U_{\lambda}^{\dagger} E^{s}_{t_f+1}) \\
        & \stackrel{(c)}{=} \tr(\rho_{\lambda, t_f} \rho_{\lambda, 0}) \\ 
        &= \mathcal{F}(\rho_{\lambda, t_f}, \rho_{\lambda, 0}), \\
    \end{split}
    \end{equation}
    where $(a)$ follows from lower bounding the normalised fidelity by the un-normalised fidelity, (b) follows from cyclic property of trace and noting that $\Pi(\rho \otimes |0\rangle \langle 0 |^{\otimes m})\Pi = \rho \otimes |0\rangle \langle 0|^{\otimes m}$ and $(c)$ follows from the assumption that $E^{s}_{t_f+1} = E$ (correct list error) and using the definition of $\rho_{\lambda, 0}$.\\
     So, to calculate the required fidelity between the end state and the logical state, it suffices to lower bound the fidelity between the state after $t_f$ steps and the state $\rho_{\lambda, 0}$. Now, note that, for $i\leq t_f$, we can write $\rho_{\lambda, i} = \frac{1}{\tilde{P}_{\lambda, i}} \tilde{\rho}_{\lambda, i}$ with $\tilde{\rho}_{\lambda, i} = A_{\lambda, i}\ldots A_{\lambda, 1}\rho_{\lambda, 0}A_{\lambda, 1} \ldots A_{\lambda, i}$ and $\tilde{P}_{\lambda, i} = \tr (A_{\lambda, i}\ldots A_{\lambda, 1}\rho_{\lambda, 0}A_{\lambda, 1} \ldots A_{\lambda, i})$. We first calculate the un-normalised fidelity ${\mathcal F}(\tilde{\rho}_{\lambda, t_f}, \rho_{\lambda, 0})$ and then bound the normalised fidelity from below using the un-normalised fidelity. For $t_f = 2$, the un-normalised fidelity,
    \begin{equation}
    \begin{split}
        {\mathcal{F}}(\tilde{\rho}_{\lambda, 2} \rho_{\lambda, 0}) &= 
         \tr \left(A_{\lambda, 2}A_{\lambda, 1}\rho_{\lambda, 0}A_{\lambda, 1}A_{\lambda, 2}\rho_{\lambda, 0}\right)\\
        &= \tr \left(A_{\lambda, 1}\rho_{\lambda, 0}A_{\lambda, 1}A_{\lambda, 2}\rho_{\lambda, 0}A_{\lambda, 2}\right).
    \end{split}
    \end{equation}
    Now, we look at, 
    \begin{equation}\label{eq:thm2_main_step}
        \begin{split}
            |{\mathcal{F}}(\tilde{\rho}_{\lambda, 2}, \rho_{\lambda, 0}) - {\mathcal{F}}(\tilde{\rho}_{\lambda, 1}, \rho_{\lambda, 0})| 
            &= \left|\tr \left(A_{\lambda, 1}\rho_{\lambda, 0}A_{\lambda, 1}A_{\lambda, 2}\rho_{\lambda, 0}A_{\lambda, 2}\right) - \tr \left(A_{\lambda, 1}\rho_{\lambda, 0}A_{\lambda, 1}\rho_{\lambda, 0}\right)\right|\\
            &=\left|\tr\left(A_{\lambda, 1}\rho_{\lambda, 0}A_{\lambda, 1} (A_{\lambda, 2}\rho_{\lambda, 0}A_{\lambda, 2} - \rho_{\lambda, 0})\right)\right|\\
            &\stackrel{(a)}{\leq} \sqrt{\tr(A_{\lambda, 1}\rho_{\lambda, 0}A_{\lambda, 1})^2 \tr(A_{\lambda, 2}\rho_{\lambda, 0}A_{\lambda, 2} - \rho_{\lambda, 0})^2} \\
            &\stackrel{(b)}{\leq} \sqrt{\tr(A_{\lambda, 2}\rho_{\lambda, 0}A_{\lambda, 2} - \rho_{\lambda, 0})^2}\\
            & \stackrel{(c)}{\leq} \sqrt{2 - 2\tr(A_{\lambda, 2}\rho_{\lambda, 0}A_{\lambda, 2}\rho_{\lambda, 0})},
        \end{split}
    \end{equation}
    where $(a)$ follows from the Cauchy-Schwarz inequality, $(b)$ follows from noting that $\tr(A_{\lambda, 1}\rho_{\lambda, 0}A_{\lambda, 1})^2 \leq \frac{1}{P_{\lambda, 1}^2}\tr(A_{\lambda, 1}\rho_{\lambda, 0}A_{\lambda, 1})^2 \leq 1$ i.e. the purity of un-normalised state is bounded by 1 and $(c)$ also follows in similar fashion, by noting $\tr(A_{\lambda, 2}\rho_{\lambda, 0}A_{\lambda, 2})^2 \leq 1$. Now, note that the expression $\tr(A_{\lambda, 2}\rho_{\lambda, 0}A_{\lambda, 2}\rho_{\lambda, 0})$ is similar to fidelity expression except for $E^{s}_1$ list operators replaced by $E^{s}_2$ operators, and as the fidelity expression calculated in Proposition~\ref{prop:fidelity_calcs_main_text} is essentially independent of particular choice of list error, we can use the same fact here. In particular, we note that, with probability $1 - O\left(\frac{2^{-m}}{\delta}\right)$ over keys $\lambda$, we have $\tr(A_{\lambda, 2}\rho_{\lambda, 0}A_{\lambda, 2}\rho_{\lambda, 0}) \geq 1 - \delta$. Thus, we get,
    \begin{equation}
        \begin{split}
            |{\mathcal{F}}(\tilde{\rho}_{\lambda, 2}, \rho_{\lambda, 0}) - {\mathcal{F}}(\tilde{\rho}_{\lambda, 1}, \rho_{\lambda, 0})| \leq \sqrt{2\delta}.
        \end{split}
    \end{equation}
    So, ${\mathcal{F}}(\tilde{\rho}_{\lambda, 2}, \rho_{\lambda, 0}) \geq {\mathcal{F}}(\tilde{\rho}_{\lambda, 1}, \rho_{\lambda, 0}) - \sqrt{2\delta} = 1-\sqrt{2\delta} - \delta$. Now, we can repeat the same argument by writing $\mathcal{F}(\tilde{\rho}_{\lambda, 3}, \rho_{\lambda, 0}) = \tr\left(A_{\lambda, 2}A_{\lambda, 1}\rho_{\lambda, 0}A_{\lambda, 1}A_{\lambda, 2}A_{\lambda, 3}\rho_{\lambda, 0}A_{\lambda, 3}\right)$ and then follow similar steps as~\eqref{eq:thm2_main_step}, now taking the difference with the $\mathcal{F}(\tilde{\rho}_{\lambda, 2}, \rho_{\lambda, 0})$. We get, 
    \begin{equation}
        |{\mathcal{F}}(\tilde{\rho}_{\lambda, 3}, \rho_{\lambda, 0}) - {\mathcal{F}}(\tilde{\rho}_{\lambda, 2}, \rho_{\lambda, 0})| \leq \sqrt{2\delta}.
    \end{equation}
    From this, we get, ${\mathcal{F}}(\tilde{\rho}_{\lambda, 3}, \rho_{\lambda, 0}) \geq 1- 2\sqrt{2\delta}-\delta$. In general, we can repeat this step $t_f - 1$ times, to get
    \begin{equation}
        |{\mathcal{F}}(\tilde{\rho}_{\lambda, t_f}, \rho_{\lambda, 0}) - {\mathcal{F}}(\tilde{\rho}_{\lambda, t_f-1}, \rho_{\lambda, 0})| \leq \sqrt{2\delta}.
    \end{equation}
    So, ${\mathcal{F}}(\tilde{\rho}_{\lambda, t_f}, \rho_{\lambda, 0}) \geq 1 - (t_f-1)\sqrt{2\delta} + {O}(\delta)$. Now, note that the normalised fidelity, 
    \begin{equation}
        \mathcal{F}(\rho_{\lambda, t_f}, \rho_{\lambda, 0}) = \frac{{\mathcal{F}}(\tilde{\rho}_{\lambda, t_f}, \rho_{\lambda, 0})}{\tilde{P}_{\lambda, t_f}} \geq \mathcal{F}(\tilde{\rho}_{\lambda, t_f}, \rho_{\lambda, 0}).
    \end{equation}
    From this, we get
    \begin{equation}
        \begin{split}
            \mathcal{F}({\rho_{\lambda, t_f+1}, \rho \otimes |0\rangle \langle 0|^{\otimes m}}) &\geq \mathcal{F}(\tilde{\rho}_{\lambda, t_f}, \rho_{\lambda, 0}) \\
            & \geq 1 - (t_f - 1)\sqrt{2\delta} + {O}(\delta).
        \end{split}
    \end{equation}

     We now calculate the probability that the algorithm halts after the $t_f+1$ steps. In particular, we want to show that $P_{\lambda, t}$ (the probability of observed outcome) is : (i) high for projection $(I-\Pi)$ if wrong list operator is chosen ($t \leq t_f$) (ii) high for projection on $\Pi$ when correct list operator is chosen $(t = t_f +1)$. To show this, note that the fidelity $\mathcal{F}(\rho_{\lambda, t-1}, \rho_{\lambda, 0}) \geq 1 - (t-2)\sqrt{2\delta} + \Theta(\delta)$ with high probability over $\lambda$. So, we can upper bound the trace distance between these states as,
    \begin{equation}
    \begin{split}
        \text{TD}\left(\rho_{\lambda, t-1}, \rho_{\lambda, 0}\right) &\leq \sqrt{1 - \mathcal{F}(\rho_{\lambda, t-1}, \rho_{\lambda, 0})} \\
        &\leq \sqrt{(t-2)\sqrt{2\delta} - \Theta(\delta)}.
    \end{split}
    \end{equation}
    Now, using the identity $\text{TD}(U\rho U^{\dagger}, U\sigma U^{\dagger}) = \text{TD}(\rho, \sigma)$ for any unitary $U$. Thus, we get
    \begin{equation}
        \begin{split}~\label{eq:TD_1}
            \text{TD}(U_{\lambda}^{\dagger}  E^{s}_{t}\rho_{\lambda, t-1}E^{s}_{t}U_{\lambda}, U_{\lambda}^{\dagger}E^{s}_{t}\rho_{\lambda, 0}E^{s}_{t}U_{\lambda}) 
            \leq \sqrt{(t-2)\sqrt{2\delta} - \Theta(\delta)}.
        \end{split}
    \end{equation}
    Now, by definition of trace distance, for any 2-outcome POVM $\{E_0, E_1\}$, we have,
    \begin{equation}\label{eq:TD_2}
        d_{TV}(p, q) \leq \text{TD}(\rho, \sigma),
    \end{equation}
    where $d_{TV}(p, q)$ is total variational distance between distributions: $p_i = \tr E_i \rho$ and $q_i = \tr E_i \sigma$. We take the POVM as $\{\Pi, (I-\Pi)\}$ with $\Pi = I \otimes |0\rangle \langle 0|^{\otimes m}$. Then from~\eqref{eq:TD_1} and~\eqref{eq:TD_2}, we get that
    \begin{equation}
        \begin{split}
           \left|\tr((I-\Pi)U_{\lambda}^{\dagger}  E^{s}_{t}\rho_{\lambda, t-1}E^{s}_{t}U_{\lambda}) 
           - \tr\left((I-\Pi)U_{\lambda}^{\dagger} E^{s}_{t}\rho_{\lambda, 0}E^{s}_{t}U_{\lambda}\right)\right| 
           &\leq 2\sqrt{(t-2)\sqrt{2\delta} - {O}(\delta)}
        \end{split}
    \end{equation}
    and,
    \begin{equation}
        \begin{split}
           \left|\tr(\Pi U_{\lambda}^{\dagger}  E^{s}_{t}\rho_{\lambda, t-1}E^{s}_{t}U_{\lambda}) - \tr\left(\Pi U_{\lambda}^{\dagger} E^{s}_{t}\rho_{\lambda, 0}E^{s}_{t}U_{\lambda}\right)\right| 
           \leq 2\sqrt{(t-2)\sqrt{2\delta} - {O}(\delta)}. 
        \end{split}
    \end{equation}
    Denote $\beta = 2\sqrt{(t-2)\sqrt{2\delta} - {O}(\delta)}$.
    Thus, distribution over measurement outcomes for POVM $\{\Pi, I- \Pi\}$ for the state $U_{\lambda}^{\dagger} E_t^s \rho_{\lambda, t-1}E_t^s U_{\lambda}$ is $\beta$-close in total variational distance to that for the state $U_{\lambda}^{\dagger} E_t^s \rho_{\lambda, 0}E_t^s U_{\lambda}$
    Now, notice that if $E^{s}_{t}$ above is wrong list error, then $\tr((I-\Pi)U_{\lambda}^{\dagger}  E^{s}_{t}\rho_{\lambda, t-1}E^{s}_{t}U_{\lambda}) = P_{\lambda, t}$ and thus we get $|P_{\lambda, t} - P_{\lambda, 1}| \leq \beta$. Similarly if $E^{s}_{t}$ is correct list error, we are interested in the probability of observing outcome $\Pi$ in the POVM which from above will be $\beta$-close to probability of observing $\Pi$ on the state $U_{\lambda}^{\dagger} E^{s}_{t}\rho_{\lambda, 0}E^{s}_{t}U_{\lambda}$ which is 1. Thus, we get, 
    \begin{equation}
        P_{\lambda, t} \geq 1 - {O}(2^{-m}) - \beta, \hspace{5ex} t \leq t_f
    \end{equation}
    and, 
    \begin{equation}
        P_{\lambda, t_f+1} \geq 1 - \beta
    \end{equation}
    where we just used the expressions for $P_{\lambda, 1}$ from Proposition~\ref{thm:prob_calcs_main_text} and noted that $\tr(\Pi U_{\lambda}^{\dagger} E^{s}_{t_f+1}\rho_{\lambda, 0}E^{s}_{t_f+1}U_{\lambda}) = 1$. \\
    Now, as mentioned earlier, for our algorithm to succeed, for the first $t_f$ steps, the measurement result should be $(I-\Pi)$ and in $t_f +1$ step, the result should be $\Pi$, which happens with probability, 
    \begin{equation}
        \begin{split}
            \Pi_{i=1}^{t_f+1} P_{\lambda, i} &\geq (P_{\lambda, 1} -\beta )^{t_f} (1-\beta) 
            = \Big(1- {O}(2^{-m}) - \beta\Big)^{t_f}(1-\beta). \\
        \end{split}
    \end{equation}
    Thus, we get that with probability $1 - O\left(\frac{2^{-m}}{\delta}\right)$ over the keys $\lambda$, our algorithm on input state of form $EU_{\lambda} \rho \otimes |0\rangle \langle 0| U_{\lambda}^{\dagger} E$, returns the state $\rho_{\lambda, t_f+1}$ which has fidelity $\mathcal{F}(\rho_{\lambda, t_f+1}, \rho \otimes |0\rangle \langle 0|^{\otimes m}) \geq 1 - (t_f - 1)\sqrt{2\delta} + {O}(\delta)$ with success probability $P_{\text{succ}} \geq \Big(1- {O}(2^{-m}) - \beta\Big)^{t_f}(1-\beta) $ with $\beta = 2\sqrt{(t-2)\sqrt{2\delta} - {O}(\delta)}$. Now, for $\delta = \Theta(2^{-m + \text{polylog}(n)})$, $m = \omega(\log n)$ and $t_f + 1 \leq L = {O}(\poly(n))$, we get, with probability $1 - O\left(\frac{2^{-m}}{2^{-m+\text{polylog}(n)}}\right) \sim 1 - \negl(n)$ over keys $\lambda$, 
    \begin{equation}
        \begin{split}
            \mathcal{F}(\rho_{\lambda, t_f+1}, \rho \otimes |0\rangle \langle 0|^{\otimes m}) 
            \geq 1 - {O}(\poly(n)2^{-\omega(\log n) + \text{polylog}(n)})
            \sim 1 - \negl(n)
        \end{split}
    \end{equation}
    and $\beta = 2\sqrt{(t-2)\sqrt{2\delta}} - O\left(\frac{\delta^{3/2}}{\sqrt{t-2}}\right) \sim \negl(n)$
    \begin{equation}
    \begin{split}
        P_{\text{succ}} &\geq 1 - {O}(t_f 2^{-\omega(\log n)}) + \beta t_f - {O}(\beta)\\
        &\sim 1- \negl(n).
    \end{split}
    \end{equation}
\end{proof}

Note that while in the above, we explicitly work with pure states and particular measurement outcomes at each step, it is also easy to see that our algorithm can be thought of as a linear map, by using POVM elements $\{\Pi_0 \otimes |0\rangle, (I-\Pi_0) \otimes |1\rangle\}$ for measurement. To apply operations conditional on measurement, we condition the measurement on the classical outcome, which is stored in a classical register. If the outcome is $0$, we leave the state invariant, while if the outcome is $1$, we apply the inverse of the wrong correction operator and apply the correction operator for the next list element. We iterate over all list elements, i.e. we apply this channel $l \leq L$ times. Then, after $l$ steps, we will have a mixed classical and quantum state with the classical register of size $l$ qubits, measuring on which will give the state which will have very high fidelity with the original state with overwhelming probability as calculated above. Since this linear channel will work for all pure states $E_i |\psi\rangle \langle \psi|E_i^{\dagger}$ with $E_i$ in the list, we highlight that it is also correct when the input state is a mixture over all list errors, matching our original definition of quantum list decoding.
    
\begin{corollary}[Restatement of \Cref{corr:AQECC}]
    
     Let $Q_L$ be the $[[n^{\prime}, n+m]]$ stabilizer code which is also $(\alpha, L)$ list decodable with encoding and decoding operations as $\mathrm{Enc}_S$ and $\mathrm{Dec}_S$ respectively,  and let $\{U_{\lambda}\}_{\lambda}$ be the PRU with exact $2$-design and relative error $\epsilon$ approximate $4$-design property. Then, define the pair of algorithms $\mathrm{Enc}$ and $\mathrm{Dec}$ as $\mathrm{Enc}(k, |\psi\rangle \langle\psi|) = \mathrm{Enc}_S (U_{\lambda} |\psi\rangle \langle\psi| \otimes |0\rangle\langle0|^{\otimes m} U_{\lambda}^{\dagger})$ and decoding operation as $\mathrm{Dec}(k, \rho)  = \mathrm{Dec}_A(k, \mathrm{Dec}_{S}(\rho))$ where $\mathrm{Dec}_{ A}$ is the decoder as in the Algorithm~\ref{alg:unambiguous_qld}, then $(\mathrm{Enc}, \mathrm{Dec})$ is a $(\mathcal{A}, \sqrt{L\sqrt{2\delta}})$ cryptographic approximate quantum error correcting code where $\mathcal{A}$ is $(\alpha,t)$ computationally bounded adversary defined in Def.~\ref{def:advnoise} for $t = O(\poly (n))$, $\delta = \Theta(2^{-m + \poly \log n})$ is the fidelity error as in Proposition~\ref{prop:fidelity_calcs_main_text} and $\epsilon = O(2^{-m})$.
\end{corollary}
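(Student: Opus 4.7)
The plan is to combine Theorem~\ref{thm:almost_pure_state} with the Fuchs--van de Graaf inequality to convert the fidelity guarantee into a trace-distance guarantee, and then lift the bound from a single Pauli error to the full adversarial channel by invoking pseudorandomness of $U_{\lambda}$ against the $t$-copy, computationally bounded adversary. Since $Q_L$ is $(\alpha, L)$-list decodable, every Pauli error of weight at most $\alpha n'$ lies in the list-decodable error set $\mathcal{E}$, so Theorem~\ref{thm:almost_pure_state} gives $\mathcal{F}\bigl(\mathrm{Dec}_{\lambda}(E\,\mathrm{Enc}_{\lambda}(|\psi\rangle\langle\psi|)E^{\dagger}),\,|\psi\rangle\langle\psi|\bigr) \geq 1 - L\sqrt{2\delta}$ for every such Pauli $E$ with probability $1 - \negl(n)$ over $\lambda$. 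Applying $\mathrm{TD}(\rho,\,|\psi\rangle\langle\psi|) \leq \sqrt{1 - \mathcal{F}(\rho,\,|\psi\rangle\langle\psi|)}$ then yields the per-error trace-distance bound $\sqrt{L\sqrt{2\delta}}$.

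Next I would extend this single-error bound to an arbitrary adversarial channel $\mathcal{A}$ with Kraus operators $A_i = \sum_{E \in \mathcal{P}_{n'}^{\alpha n'}} \alpha^i_E E$. The stabilizer decoder $\mathrm{Dec}_S$ begins by measuring the syndrome, whose projectors are mutually orthogonal; this step decomposes $\mathcal{A}\circ\mathrm{Enc}_{\lambda}$ into a classical mixture over syndrome outcomes, each supported on the linear span of the at most $L$ list errors compatible with that syndrome. On each branch Theorem~\ref{thm:almost_pure_state} applies (since its fidelity guarantee is uniform over which list element was actually applied, and $\mathrm{Dec}_A$ acts linearly on density matrices), and by joint convexity of the trace distance the averaged output is at distance at most $\sqrt{L\sqrt{2\delta}}$ from $|\psi\rangle\langle\psi|$, establishing the bound against any fixed $\mathcal{A}$ with support in the list-decodable set.

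Finally, to handle the full $(\alpha, t)$ adversary of Def.~\ref{def:advnoise} with $t = O(\poly(n))$ auxiliary copies and adaptive learning algorithm $\mathcal{B}$, I would reduce to the PRU indistinguishability of Def.~\ref{def:PRU}. Any QPT adversary whose adaptive strategy inflates the expected trace distance by a non-negligible $\eta$ beyond the information-theoretic bound would yield a PRU distinguisher making $t + O(1)$ oracle queries: the reduction simulates the $t$ encoding queries, the challenge round, and the decoder using only black-box access to the unknown unitary (and its inverse, which the PRU construction supports), and uses trace-distance estimation on the decoded output to detect the advantage. Since the information-theoretic bound $\sqrt{L\sqrt{2\delta}}$ already holds under a Haar-random replacement of $U_{\lambda}$ (the 2- and 4-design hypotheses of Theorem~\ref{thm:almost_pure_state} hold exactly for Haar), Def.~\ref{def:PRU} forces $\eta$ to be $\negl(n)$, and the claimed cryptographic bound follows.

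The main obstacle I anticipate is rigorously executing the second step: syndrome measurement must cleanly decouple the coherent Pauli superpositions inside each Kraus operator $A_i$ into an incoherent mixture over syndromes, with cross-terms between distinct syndromes vanishing under the orthogonality of syndrome projectors, before Theorem~\ref{thm:almost_pure_state} can be applied branch-by-branch. A secondary subtlety is designing the PRU-to-scheme reduction so that the simulator can implement both $\mathrm{Enc}_{\lambda}$ and $\mathrm{Dec}_{\lambda}$ (including Algorithm~\ref{alg:unambiguous_qld}'s repeated applications of $U_{\lambda}$ and $U_{\lambda}^{\dagger}$) given only oracle access, and that the distinguishing advantage scales linearly with the scheme-breaking advantage across $\poly(n)$ adaptive rounds.
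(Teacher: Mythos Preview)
Your proposal follows the same three-ingredient outline as the paper---Theorem~\ref{thm:almost_pure_state} for the fidelity guarantee, Fuchs--van de Graaf for the fidelity-to-trace-distance conversion, and PRU security against the computationally bounded adversary---so it is essentially correct.

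There is one technical point where the paper's execution differs from yours. The definition of an $(\mathcal{A},\delta')$-cryptographic AQECC (Def.~\ref{def:private_AQECC}) bounds the \emph{expectation} of the trace distance over keys and algorithmic randomness, not a high-probability quantity. Your step~1 produces a per-key bound holding with probability $1-\negl(n)$ over $\lambda$, and you never say how to pass to the expectation. The paper instead first lower-bounds $\mathbb{E}_{\lambda}\mathcal{F}_{\lambda}\geq\bigl(1-O((\epsilon+2^{-m})/\delta)\bigr)(1-L\sqrt{2\delta})$, applies Fuchs--van de Graaf \emph{inside} the expectation, and then uses concavity of $x\mapsto\sqrt{1-x}$ (Jensen) to pull the expectation through: $\mathbb{E}_{\lambda}\sqrt{1-\mathcal{F}_{\lambda}}\leq\sqrt{1-\mathbb{E}_{\lambda}\mathcal{F}_{\lambda}}$. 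Your ordering can be salvaged (on failure keys $\mathrm{TD}\leq 1$ and failure has probability $\negl(n)$), but you should make that explicit.

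Conversely, your steps~2 and~3 are more careful than the paper's own proof. The paper's treatment of the $(\alpha,t)$ adversary is a one-sentence assertion (``since $\{U_{\lambda}\}_{\lambda}$ is PRU\ldots the error channel used by the adversary will be independent of the key''), and the extension from a single Pauli error to the full Kraus-decomposed adversarial channel is not argued in the corollary at all---it is handled only by a linearity remark following the proof of Theorem~\ref{thm:almost_pure_state}. Your syndrome-decoherence argument and explicit PRU-distinguisher reduction would strengthen, not replace, the paper's reasoning.
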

\begin{proof}

Firstly, we argue that since $\{U_{\lambda}\}_{\lambda}$ is PRU, no computationally bounded adversary with polynomial queries to it can distinguish it from Haar random unitaries. Consequently, there is no way for a computationally bounded adversary to learn the key used in the protocol. Thus, the error channel used by the adversary will be independent of the key. \\
As the adversary can use only Pauli errors of weight at most $\alpha n$ independent of the key, we can use the algorithm~\ref{alg:unambiguous_qld} as $\text{Dec}_A$ along with the decoder of QLD, $\text{Dec}_S$ as the decoder of the cryptographic approximate quantum error correcting code. Then, we note that from Theorem~\ref{thm:almost_pure_state}, for any input state $|\phi\rangle \langle \phi|$, the fidelity, 
    \begin{equation}
        \mathcal{F}(\text{Dec}_\lambda \circ \mathcal{A} \circ \text{Enc}_\lambda (|\phi\rangle \langle \phi|), |\phi\rangle \langle \phi|) \geq 1 - L\sqrt{2\delta}
    \end{equation}
    with probability $1 - O\left(\frac{\epsilon + 2^{-m}}{\delta}\right)$ over both keys $\lambda$ and randomness in the algorithms for $\delta = \Theta(2^{-m + \poly \log n})$. Note that $\mathcal{A}$ above represents an adversarial channel with bounded computational power and weight $\alpha \cdot n$. Then, denoting the above fidelity as $\mathcal{F}_\lambda$, we have
    \begin{equation}
        \Eset{\lambda}\,\mathcal{F}_\lambda \geq \left(1 - O\left(\frac{\epsilon + 2^{-m}}{\delta}\right)\right) (1 - L\sqrt{2\delta}).
    \end{equation}
    Now, using the Fuchs-van de Graaf relation between trace distance and fidelity, $\text{TD}(\rho, \sigma) \leq \sqrt{1 - \mathcal{F}(\rho, \sigma)}$,
    \begin{equation}
        \Eset{\lambda}\,\text{TD}(\text{Dec}_\lambda \circ \mathcal{A} \circ \text{Enc}_\lambda (|\phi\rangle \langle \phi|), |\phi\rangle \langle \phi|) \leq  \Eset{\lambda}\, \sqrt{1 - \mathcal{F}_\lambda}.
    \end{equation}
    Then, using concavity of the function $f(x) = \sqrt{1- x}$, we get that,
    \begin{equation}
    \begin{split}
        \Eset{\lambda}\,\text{TD}(\text{Dec}_\lambda \circ \mathcal{A} \circ \text{Enc}_\lambda (|\phi\rangle \langle \phi|), |\phi\rangle \langle \phi|) &\leq  \sqrt{1 - \Eset{\lambda}\,\mathcal{F}_\lambda}\\
        & \leq \sqrt{L\sqrt{2\delta} + O\left(\frac{\epsilon + 2^{-m}}{\delta}\right)}.
    \end{split}
    \end{equation}
    Now, it is easy to see that for $\delta = \Theta(2^{-m + \poly \log n})$ and $\epsilon = O(2^{-m})$ gives the upper bound on expected trace distance as $\sqrt{L\sqrt{2\delta}}$. Since it is true for any pure state $|\phi\rangle \langle \phi|$, we get
    \begin{equation}
        \sup_{|\phi\rangle}\Eset{\lambda}\,\text{TD}(\text{Dec}_\lambda \circ \mathcal{A} \circ \text{Enc}_\lambda (|\phi\rangle \langle \phi|), |\phi\rangle \langle \phi|) \leq  \sqrt{L\sqrt{2\delta}}.
    \end{equation}
\end{proof}

\end{document}